\title{On the Weihrauch degree of the additive Ramsey theorem}
\author{Arno Pauly}{School of Mathematics and Computer Science, Swansea University, UK}{arno.m.pauly@gmail.com}{https://orcid.org/0000-0002-0173-3295}{}
\author{Cécilia Pradic}{School of Mathematics and Computer Science, Swansea University, UK}{c.pradic@swansea.ac.uk}{https://orcid.org/0000-0002-1600-8846}{}
\author{Giovanni Sold\`a}{School of Mathematics and Computer Science, Swansea University, UK \footnote{Sold\`a has since moved to Ghent University}}{giovanni.a.solda@gmail.com}{https://orcid.org/0000-0003-4903-3623}{This author was supported by an LMS Early Career Fellowship.}
\authorrunning{A.~Pauly, C.~Pradic \& G.~Sold\`a} 
\keywords{Weihrauch reducibility, additive Ramsey} 
\def\doi#1{\href{https://doi.org/\detokenize{#1}}{\url{https://doi.org/\detokenize{#1}}}}
\newcommand\pow{\mathcal{P}}
\newcommand\powfin{\mathcal{P}_{\mathrm{fin}}}
\newcommand{\ident}[2]{\newcommand{#1}{\ensuremath{\texorpdfstring{\mathrm{#2}}{#2}}\xspace}}
\ident{\dom}{dom}
\ident{\rg}{rg}
\ident{\sgn}{sgn}
\newcommand{\cA}{\mathcal{A}}
\newcommand{\cS}{\mathcal{S}}
\newcommand{\cB}{\mathcal{B}}
\newcommand{\cN}{\mathcal{N}}
\newcommand{\cC}{\mathcal{C}}
\newcommand{\cG}{\mathcal{G}}
\newcommand{\cL}{\mathcal{L}}
\newcommand{\cE}{\mathcal{E}}
\newcommand{\cP}{\mathcal{P}}
\newcommand{\cR}{\mathcal{R}}
\newcommand{\cD}{\mathcal{D}}
\newcommand{\bbN}{\mathbb{N}}
\newcommand{\cO}{\mathcal O}
\newcommand{\cF}{\mathcal F}
\newcommand{\bbQ}{\mathbb{Q}}
\newcommand{\eqdef}{\mathrel{\mathop{:}=}}
\newcommand\restr[2]{{
  \left.\kern-\nulldelimiterspace 
  #1 
  \vphantom{\big|} 
  \right|_{#2} 
  }}
\newcommand{\card}[1]{{| #1 |}}
\newcommand{\GrJ}{\mathrel{\mathcal{J}}}
\newcommand{\GrH}{\mathrel{\mathcal{H}}}
\newcommand{\GrL}{\mathrel{\mathcal{L}}}
\newcommand{\GrR}{\mathrel{\mathcal{R}}}
\newcommand{\ind}{\textrm{-}\mathsf{IND}}
\newcommand{\rca}{\mathsf{RCA}_0}
\newcommand{\rt}{\mathsf{RT}}
\newcommand{\ooint}[1]{{] {#1} [}}
\newcommand{\Nn}{\bN}
\renewcommand\vec\overline
\newcommand\subto\rightarrowtail
\newcommand\sztindtxt{$\Sigma^0_2$-induction}
\newcommand\sztind{\Sigma^0_2\ind}
\newcommand\wft{\mathsf}
\newcommand\weiortom{\wft{ORT}}
\newcommand\weishuffle{\wft{Shuffle}}
\newcommand\colA{c}
\newcommand\cA{\mathcal{A}}
\newcommand\cB{\mathcal{B}}
\newcommand\cC{\mathcal{C}}
\newcommand\cD{\mathcal{D}}
\newcommand\cE{\mathcal{E}}
\newcommand\cF{\mathcal{F}}
\newcommand\cG{\mathcal{G}}
\newcommand\cH{\mathcal{H}}
\newcommand\cI{\mathcal{I}}
\newcommand\cJ{\mathcal{J}}
\newcommand\cK{\mathcal{K}}
\newcommand\cL{\mathcal{L}}
\newcommand\cM{\mathcal{M}}
\newcommand\cN{\mathcal{N}}
\newcommand\cO{\mathcal{O}}
\newcommand\cP{\mathcal{P}}
\newcommand\cQ{\mathcal{Q}}
\newcommand\cR{\mathcal{R}}
\newcommand\cS{\mathcal{S}}
\newcommand\cT{\mathcal{T}}
\newcommand\cU{\mathcal{U}}
\newcommand\cV{\mathcal{V}}
\newcommand\cW{\mathcal{W}}
\newcommand\cX{\mathcal{X}}
\newcommand\cY{\mathcal{Y}}
\newcommand\cZ{\mathcal{Z}}
\newcommand\bA{\mathbb{A}}
\newcommand\bB{\mathbb{B}}
\newcommand\bC{\mathbb{C}}
\newcommand\bD{\mathbb{D}}
\newcommand\bE{\mathbb{E}}
\newcommand\bF{\mathbb{F}}
\newcommand\bG{\mathbb{G}}
\newcommand\bH{\mathbb{H}}
\newcommand\bI{\mathbb{I}}
\newcommand\bJ{\mathbb{J}}
\newcommand\bK{\mathbb{K}}
\newcommand\bL{\mathbb{L}}
\newcommand\bM{\mathbb{M}}
\newcommand\bN{\mathbb{N}}
\newcommand\bO{\mathbb{O}}
\newcommand\bP{\mathbb{P}}
\newcommand\bQ{\mathbb{Q}}
\newcommand\bR{\mathbb{R}}
\newcommand\bS{\mathbb{S}}
\newcommand\bT{\mathbb{T}}
\newcommand\bU{\mathbb{U}}
\newcommand\bV{\mathbb{V}}
\newcommand\bW{\mathbb{W}}
\newcommand\bX{\mathbb{X}}
\newcommand\bY{\mathbb{Y}}
\newcommand\bZ{\mathbb{Z}}
\DeclareMathOperator{\ran}{\mathrm{ran}}
\newcommand{\rra}{\rightrightarrows}
\newcommand{\sierp}{\mathbb{S}}
\newcommand{\baire}{\bbN^\bbN}
\newcommand{\cantor}{\mathbf{2}^\bbN}
\newcommand{\weipr}[1]{\mathsf{#1}}
\newcommand{\id}{\mathsf{id}}
\newcommand{\lpo}{\mathsf{LPO}}
\newcommand{\tcn}{\mathsf{TC}_{\mathbb{N}}}
\newcommand{\ect}{\mathsf{ECT}}
\newcommand{\cn}{\mathsf{C}_\bbN}
\newcommand{\Cc}[1]{\mathsf{C}_{#1}}
\newcommand{\isfin}{\mathsf{IsFinite_\sierp}}
\newcommand{\sisfin}{\mathsf{IsFinite}}
\newcommand{\siscofin}{\mathsf{IsCofinite}}
\newcommand{\CCantor}{\Cc{\cantor}}
\newcommand{\ort}{\mathsf{ORT}}
\newcommand{\art}{\mathsf{ART}}
\newcommand{\etap}{(\eta)^1_{<\infty}}
\newcommand{\cetap}{\mathsf{c}(\eta)^1_{<\infty}}
\newcommand{\ietap}{\mathsf{i}(\eta)^1_{<\infty}}
\newcommand{\crt}{\mathsf{cRT}}
\newcommand{\cshuffle}{\mathsf{cShuffle}}
\newcommand{\ishuffle}{\mathsf{iShuffle}}
\newcommand{\sshuffle}{\mathsf{Shuffle}}
\newcommand{\cart}{\mathsf{cART}}
\newcommand{\iart}{\mathsf{iART}}
\newcommand{\sart}{\mathsf{ART}}
\newcommand{\cort}{\mathsf{cORT}}
\newcommand{\iort}{\mathsf{iORT}}
\newcommand{\ltW}{<_\mathrm{W}}
\newcommand{\leqW}{\leq_\mathrm{W}}
\newcommand{\leW}{\le_\mathrm{W}}
\newcommand{\equivW}{\equiv_\mathrm{W}}
\newcommand{\nleqW}{\nleq_\mathrm{W}}
\newcommand{\leqsW}{\leq_\mathrm{sW}}
\newcommand{\lesW}{\le_\mathrm{sW}}
\newcommand{\equivsW}{\equiv_\mathrm{sW}}
\newtheorem{question}[theorem]{Question}
\newcommand{\swanseastatement}{}
\begin{document}

\maketitle

\begin{abstract}
  We characterize the strength, in terms of
  Weihrauch degrees, of certain problems related
  to Ramsey-like theorems concerning colourings of the rationals and of the natural numbers.
  The theorems we are chiefly interested in assert the existence
  of almost-homogeneous sets for colourings of pairs of rationals respectively natural numbers
  satisfying properties determined by some additional algebraic
  structure on the set of colours.

  In the context of reverse mathematics, most of the principles we study
  are equivalent to \sztindtxt\ over $\rca$. The associated problems in
  the Weihrauch lattice are related to $\tcn^*$, $(\lpo')^*$ or their product,
  depending on their precise formalizations.

 \keywords{Weihrauch reducibility, Reverse mathematics, additive Ramsey, \sztindtxt.}
\end{abstract}


\section{Introduction}

The infinite Ramsey theorem says that for any colouring $\colA$ of $n$-uples of a given arity of an infinite set $X$, there exists a infinite subset $H \subseteq X$ such that the
set of $n$-tuples $[H]^n$ of elements of $H$ is homogeneous. 
This statement is non-constructive: even if the colouring $\colA$ is given by a computable function, it is not the case that we can find a computable homogeneous subset of $X$.
Various attempts have been made to quantify how non-computable this problem and some of its natural restrictions are. This is in turn linked
to the axiomatic strength of the corresponding theorems, as investigated in \emph{reverse mathematics}~\cite{simpson} where Ramsey's theorem is a
privileged object of study~\cite{slicing-hirschfeldt}.


This paper is devoted to a variant of Ramsey's theorem with the following restrictions:
we colour pairs of rational numbers and we require some additional structure on the colouring,
namely that it is \emph{additive}. 
A similar statement first appeared in~\cite[Theorem 1.3]{shelah1975} to give a self-contained proof of decidability of
the monadic second-order logic of $(\bbQ,<)$. We will also analyse a simpler statement
we call the \emph{shuffle principle}, a related tool appearing in more modern decidability proofs~\cite[Lemma 16]{CCP11}.
The shuffle principle states that every $\bbQ$-indexed word (with letters in a finite alphabet)
contains a convex subword in which every letter appears densely or not at all.
Much like the additive restriction of the Ramsey theorem for pairs over $\bbN$, studied from the point of view of
reverse mathematics in~\cite{KMPS19},  we obtain a neat correspondence with \sztindtxt\ ($\sztind$).

\begin{theorem}\label{thm:q-eqs}
In the weak second-order arithmetic $\rca$, $\sztind$ is equivalent to both the shuffle principle and the additive Ramsey theorem for $\bbQ$.
\end{theorem}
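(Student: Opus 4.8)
The plan is to prove the cycle of implications
\[
\sztind \ \Longrightarrow\ (\text{additive Ramsey theorem for } \bbQ) \ \Longrightarrow\ (\text{shuffle principle}) \ \Longrightarrow\ \sztind
\]
over $\rca$, which yields the pairwise equivalence of the three statements.

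First I would dispatch the easy link, from the additive Ramsey theorem for $\bbQ$ to the shuffle principle. Given a $\bbQ$-indexed word $w\colon \bbQ \to \Sigma$ over a finite alphabet, consider the finite semilattice $(\powerset(\Sigma), \cup)$ and the colouring $c\colon [\bbQ]^2 \to \powerset(\Sigma)$ defined by $c(p,q) = \{\, w(x) : p \le x < q \,\}$. Since $[p,r) = [p,q) \sqcup [q,r)$ for $p < q < r$, the colouring $c$ is additive, so the additive Ramsey theorem for $\bbQ$ returns an almost-homogeneous object: an open interval $I$, an idempotent $e \in \powerset(\Sigma)$, and a dense $A \subseteq I$ with $c(p,q) = e$ for all $p < q$ in $A$. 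Every subinterval of $I$ then contains two points of $A$, hence an occurrence of each letter of $e$, so each such letter occurs densely in $I$; and a letter outside $e$ cannot occur anywhere in $I$, for such an occurrence would lie inside $[p,q)$ for some $p < q$ in $A$. Hence the restriction of $w$ to $I$ is the convex subword demanded by the shuffle principle, and this argument uses nothing beyond $\rca$.

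The implication from $\sztind$ to the additive Ramsey theorem for $\bbQ$ is the core of the result and, I expect, the main obstacle. The plan is to carry out a version of Shelah's classical argument~\cite{shelah1975} inside $\rca + \sztind$, proceeding by recursion on the structure of the finite semigroup $S$ (informally, an induction on $|S|$): from an additive $c\colon [\bbQ]^2 \to S$ one either exhibits a dense suborder on whose pairs $c$ is constant — its value then being forced to be idempotent, so that we are done — or one passes to a dense suborder and a proper subsemigroup of $S$ containing all colours used there, and recurses. The delicate point sits at the leaves of this recursion, where for each $x$ one must control the set $R(x) \subseteq S$ of colours $c(x,y)$ recurring in every right-neighbourhood of $x$ (and the symmetric left version), and show that a suitably monotone sequence of such sets, or the associated family of stages, stabilises. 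The predicates ``$s \in R(x)$'' are $\Pi^0_2$ uniformly in $x$, so a careless formalisation would stabilise them via $\aca$ or $\Pi^0_2$-induction; the whole work of the step is to rearrange the bookkeeping so that only $\Sigma^0_2$-induction is ever invoked, in parallel with the $\bbN$-analysis of~\cite{KMPS19}. The case where $S$ is a semilattice, which already yields the shuffle principle, serves as a simpler warm-up.

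The remaining implication, from the shuffle principle to $\sztind$, is a reversal obtained contrapositively, echoing the reversal for the additive Ramsey theorem over $\bbN$ in~\cite{KMPS19}. From a failure of $\sztind$, a standard equivalent yields a $\Sigma^0_2$ approximation process over $\bbN$ whose running guesses never settle below a bound that $\sztind$ would provide. I would encode this process into a single $\bbQ$-indexed word over a fixed finite alphabet, using the self-similarity of $(\bbQ, <)$ to place the $n$-th stage of the process inside the intervals of ``scale $n$'', arranged so that within any interval the set of letters occurring densely records exactly how far the process has progressed there. The shuffle property, applied to this word, then returns an interval in which that set has stabilised — i.e.\ a bound witnessing the given instance of $\sztind$, a contradiction. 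What makes the shuffle principle strictly stronger than its $\bbN$-counterpart (whose semilattice fragment only reaches $\mathsf{B}\Sigma^0_2$) is precisely the availability of infinitely many nested scales in $\bbQ$, and the encoding must exploit this. Closing the cycle proves the theorem.
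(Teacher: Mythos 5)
Your overall architecture (a cycle of implications over $\rca$) is reasonable, but two of the three arrows have genuine problems. The ``easy link'' $\art_\bbQ \Rightarrow \weishuffle$ uses the colouring $c(p,q) = \{\, w(x) : p \le x < q \,\}$ into $(\powerset(\Sigma), \cup)$. Deciding whether a letter belongs to $c(p,q)$ requires searching the infinitely many rationals in $[p,q)$, so this predicate is $\Sigma^0_1$ but not provably $\Delta^0_1$; in $\rca$ the colouring $c$ need not exist as an object at all. (The analogous colouring is fine over $\bbN$, where the intervals $[n,m]$ are finite, and the paper does use it there.) The fix is the paper's encoding: take the left-zero semigroup $a \cdot b = a$ and $c(x,y) = w(x)$, which is computable, additive, and for which a densely homogeneous interval is immediately a shuffle.

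More seriously, the implication $\sztind \Rightarrow \art_\bbQ$ is not actually proved. You correctly diagnose that a direct Shelah-style recursion on the semigroup must stabilise the $\Pi^0_2$ sets $R(x)$ of recurrent colours, which naively costs $\Pi^0_2$-induction or $\aca$, and you state that ``the whole work of the step is to rearrange the bookkeeping'' --- but you do not supply that rearrangement, and it is the heart of the theorem. The paper avoids the recursion entirely by routing through the shuffle principle in the opposite direction to your cycle: $\sztind$ gives $\weishuffle$ by a single $\Sigma^0_2$-minimisation (take a subinterval realising the minimal possible number of occurring colours; minimality forces each occurring colour to be dense), and then $\art_\bbQ$ follows from $\weishuffle$ inside $\rca$ with no further induction --- first apply the $\rca$-provable ordered Ramsey theorem $\weiortom_\bbQ$ to the $\GrJ$-class colouring to confine all colours of some $]u,v[$ to one $\GrJ$-class, then apply $\weishuffle$ to the \emph{unary} colouring $z \mapsto (c(u,z), c(z,v))$, and finally check via Green's relations that each resulting dense piece is $c$-homogeneous with idempotent colour. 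Your third arrow, the reversal $\weishuffle \Rightarrow \sztind$ via a scale-by-scale encoding, is morally the paper's route (it reduces $\ect$, which is equivalent to $\sztind$, to $\weishuffle$ by colouring each rational according to its denominator) and is acceptable as a sketch, though as written it also takes on the burden of re-deriving $\Sigma^0_2$-induction from the encoded approximation process rather than outsourcing it to the known equivalence $\ect \leftrightarrow \sztind$.
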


We take this analysis one step further in the framework of Weihrauch reducibility
that allows to measure the uniform strength of general multi-valued functions (also called \emph{problems}) over
Baire space.
Let $\sshuffle$ and $\sart_\bbQ$ be the most obvious problems corresponding to
the shuffle principle and additive Ramsey theorem over $\bbQ$ respectively (see Definitions \ref{def:wei-shuffle} and \ref{def:sartbbq} for specifics).
We relate them, as well as various weakenings $\cshuffle$, $\cart_\bbQ$, $\ishuffle$ and $\iart_\bbQ$
that only output sets of colours or intervals, to the standard (incomparable) problems $\tcn$ and $\lpo'$. We also consider the \emph{ordered Ramsey principle}, $\ort_\bbQ$, where the colours $k$ come equipped with a partial order $\preceq$, and the colouring $\alpha : [\bbQ]^2 \to k$ satisfying that $\alpha(r_1,r_2) \preceq \alpha(q_1,q_2)$ if $q_1 \leq r_1 < r_2 \leq q_2$. A weakening of $\sshuffle$ is the principle $\etap$ introduced in \cite{FrittaionP17} where we ask merely for an interval where some colour is dense; respectively for a colour which is dense somewhere.

\begin{theorem}\label{thm:wei-eqs}
We have the following equivalences
\begin{itemize}
\item $\sshuffle \equivW \sart_\bbQ \equivW \tcn^* \times (\lpo')^*$
\item $\cshuffle \equivW \cart_\bbQ \equivW (\lpo')^*$
\item $\ishuffle \equivW \iart_\bbQ \equivW \etap \equivW \ietap \equivW \tcn^*$
\item $\ort_\bbQ \equivW \lpo^*$
\item $\cetap \equivW \crt^1_+$
\end{itemize}
\end{theorem}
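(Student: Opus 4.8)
The plan is to establish each displayed equivalence as a short cycle of Weihrauch reductions, using that within each line the problems outputting only a colour (set) or only an interval reduce to the fuller ones by discarding part of the output. So for the first line I aim for $\sshuffle \leqW \sart_\bbQ \leqW \tcn^* \times (\lpo')^* \leqW \sshuffle$; for the second $\cshuffle \leqW \cart_\bbQ \leqW (\lpo')^* \leqW \cshuffle$; for the third $\tcn^* \leqW \ietap \leqW \ishuffle \leqW \iart_\bbQ \leqW \tcn^*$ together with $\ietap \leqW \etap \leqW \tcn^*$; and separately $\lpo^* \leqW \ort_\bbQ \leqW \lpo^*$ and the two reductions witnessing $\crt^1_+ \equivW \cetap$. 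Arranged this way, I never have to reduce an additive Ramsey problem to the corresponding shuffle problem directly (the awkward direction), only the converse.

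That converse, $\sshuffle \leqW \sart_\bbQ$ (and its colour/interval analogues), goes through the finite idempotent commutative semigroup $(\powerset(k), \cup)$: from a word $w \colon \bbQ \to k$ set $c(x,y) = \{ w(q) : x \le q \le y \}$, which is additive, and an interval on which $c$ is constantly $E$ is — after trimming endpoints — exactly an interval in which every colour of $E$ is dense and no other colour occurs. For the upper bounds I read the complexity off the proof of Theorem~\ref{thm:q-eqs}. Locating an almost-homogeneous interval amounts to iterating, over the finitely many colours, the step ``given the current interval $I$, produce a subinterval $I'$ in which colour $i$ is dense or absent'': such an $I'$ always exists and the set of subintervals of $I$ that fail is c.e., so one instance of $\tcn$ finds a good $I'$, the $\le k$ iterations are absorbed by $\tcn^*$, and the resulting interval witnesses both the shuffle and the $\eta$ conclusions (for additive colourings one iterates over the semigroup instead), giving $\ishuffle,\iart_\bbQ,\ietap,\etap \leqW \tcn^*$. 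Deciding which colours survive densely is, for each colour $i$, the $\Sigma^0_2$ question ``is $i$ dense in some interval avoiding such-and-such a set of colours''; asking these in parallel over all relevant colour sets is $(\lpo')^*$, whence $\cshuffle,\cart_\bbQ \leqW (\lpo')^*$, and combining the two phases gives $\sshuffle,\sart_\bbQ \leqW \tcn^* \times (\lpo')^*$.

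The lower bounds are the core of the argument and the main obstacle. The key point is to design, from given instances of $\tcn$ and/or $\lpo'$, a single $\bbQ$-word (respectively additive colouring) whose shuffle intervals and whose dense colour set are in computable correspondence with solutions of the given instances: the interval is forced to ``commit'' to a value not enumerated by the $\tcn$-input, while membership of colour $i$ in the dense set is arranged to reflect whether the $i$-th $\Sigma^0_2$ event has fired. Extracting only the interval yields $\tcn^* \leqW \ietap$, extracting only the colour set yields $(\lpo')^* \leqW \cshuffle$, and the combined word yields $\tcn^* \times (\lpo')^* \leqW \sshuffle$. I expect getting these encodings right — in particular guaranteeing that every legal answer genuinely carries the encoded information, since every region of $\bbQ$ contains shuffle subintervals — to be where most of the work lies.

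Finally, for $\ort_\bbQ \equivW \lpo^*$ the extra structure collapses both phases: the monotonicity $\alpha(r_1,r_2) \preceq \alpha(q_1,q_2)$ for $q_1 \le r_1 < r_2 \le q_2$ forces the colour to be $\preceq$-nonincreasing along any shrinking chain of intervals, hence to stabilise within a number of steps bounded by the height of $\preceq$; since ``some subpair inside the current interval has a strictly smaller colour'' is $\Sigma^0_1$, boundedly many $\lpo$-queries locate a homogeneous interval and its colour, giving $\ort_\bbQ \leqW \lpo^*$, while encoding $n$ independent $\lpo$-instances into $n$ pairwise incomparable layers of an ordered colouring gives $\lpo^* \leqW \ort_\bbQ$. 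The equivalence $\cetap \equivW \crt^1_+$ I establish by direct reductions in both directions, the delicate point being to translate ``a colour occurs infinitely often in an $\bbN$-sequence'' into ``a colour is dense on some interval of $\bbQ$'' and back without introducing spurious colours; I expect this, together with the combined lower-bound construction above, to be the part requiring genuine care.
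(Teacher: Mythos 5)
Your overall architecture (closing each bullet into a cycle of reductions through the benchmarks $\tcn^*$, $(\lpo')^*$, $\lpo^*$, $\crt^1_+$) matches the paper's, but several of your key steps fail as written. First, your reduction $\sshuffle \leqW \sart_\bbQ$ via $c(x,y) = \{\,w(q) : x \le q \le y\,\}$ in the semigroup $(\powerset(k),\cup)$ is not computable: over $\bbQ$ the interval $[x,y]$ contains infinitely many rationals, so the set of $w$-colours occurring in it is c.e.\ but not decidable, and the inner reduction functional cannot output this colouring. (The construction is fine over $\bbN$, where the paper uses it in Lemma~\ref{lem:iaortN-reversal}; over $\bbQ$ the paper instead uses the left-zero semigroup $a\cdot b=a$ with $c(x,y)=w(x)$, cf.\ Theorem~\ref{thm:shuffle-artq}.) Second, your upper bound $\ishuffle \leqW \tcn^*$ rests on the claim that the set of subintervals in which colour $i$ is neither dense nor absent is c.e.; it is not --- ``$i$ is not dense in $I'$'' is $\Sigma^0_2$ --- so a single $\tcn$ cannot perform your refinement step. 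The paper's invariant is instead ``at most $j$ colours occur in $I'$'', whose failure \emph{is} $\Sigma^0_1$, combined with the fact that a subinterval with the fewest occurring colours is automatically a shuffle. Even with the right invariant, ``the $\le k$ iterations are absorbed by $\tcn^*$'' is not automatic: naive iteration yields a compositional product $\tcn\star\cdots\star\tcn$, and since $\tcn$ has infinite codomain you cannot unfold this into finitely many parallel queries the way one can for $\lpo$; Lemma~\ref{lem:ishuffle-leq-tcn} has to run all searches genuinely in parallel, exploiting that every subinterval of a shuffle is a shuffle, and Lemma~\ref{lem:comp-to-paral-prod} together with the proposition following it shows that such composition-to-product conversions can genuinely fail.

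Third, $\cetap \equivW \crt^1_+$ is not obtained by ``direct reductions in both directions'': per colour, ``is this colour dense somewhere?'' is a priori a $\Sigma^0_2$ question, so the naive upper bound is $(\lpo')^*$ rather than a single pigeonhole application. The paper proves $(\cetap)_{k+1} \leqW \crt^1_{k+1}\times(\cetap)_k$ by induction, obtaining $(\cetap)_k \leqW (\crt^1_2)^{k(k-1)/2}$, and then collapses this to $\crt^1_k$ only via the computation of the $k$-finitary part of $\CCantor'$ (Corollary~\ref{corr:fincrtkm}); you would need some substitute for that machinery. Finally, the lower bounds $\tcn^* \leqW \ietap$ and $(\lpo')^* \leqW \cshuffle$, which you correctly identify as the crux, are only announced, not carried out; as they stand, these bullets are a plan rather than a proof.
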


Finally, we turn to carrying out the analysis of those Ramseyan theorems over
$\bbN$ in the framework of Weihrauch reducibility. The additive Ramsey theorem over $\bbN$ is also an important tool in the study of monadic second order logic over countable scattered orders. As for the case
of $\bbQ$, we relate problems $\art_\bbN$ and $\ort_\bbN$ as well
as some natural weakenings $\cart_\bbN$, $\cort_\bbN$, $\iart_\bbN$ and
$\iort_\bbN$, to $\tcn$ and $\lpo'$ (the $\mathsf{i}$ variants of those principle
return, rather than an interval, some upper bound $n$ on the first two points of
some infinite homogeneous set).

\begin{theorem} \label{thm:wei-eqs-bbN}
We have the following equivalences
\begin{itemize}
  \item $\ort_\bbN \equivW \art_\bbN \equivW \tcn^* \times (\lpo')^*$
  \item $\cort_\bbN \equivW \cart_\bbN \equivW (\lpo')^*$
  \item $\iort_\bbN \equivW \iart_\bbN \equivW \tcn^*$
\end{itemize}
\end{theorem}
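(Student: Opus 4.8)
The plan is to prove each of the three lines as a cycle of Weihrauch reductions through the stated canonical problem. The new phenomenon relative to the $\bbQ$-case of~\cref{thm:wei-eqs} is that over $\bbN$ the ordered Ramsey principle is no longer trivialised (there is no density to exploit), so it sits at the same level as the additive one; hence for $P \in \{\ort,\art\}$ I would separately establish $P_\bbN \leqW \tcn^* \times (\lpo')^*$ and $\tcn^* \times (\lpo')^* \leqW P_\bbN$, and similarly $\{\cort,\cart\}$ against $(\lpo')^*$ and $\{\iort,\iart\}$ against $\tcn^*$. The ordered case is the easier warm-up: if $\alpha \colon [\bbN]^2 \to (k,\preceq)$ is monotone on nested pairs then, for fixed $a$, the map $b \mapsto \alpha(a,b)$ is $\preceq$-non-increasing and therefore eventually constant at some $\ell(a) \in k$; the map $a \mapsto \ell(a)$ is $\preceq$-non-decreasing and therefore eventually constant at some $\ell_\infty$; and once a threshold $N$ with $\ell(a) = \ell_\infty$ for all $a \ge N$ is known, the recipe ``$a_0 = N$; $a_{i+1}$ the least $b$ with $\alpha(a_i,b)=\ell_\infty$'' produces an infinite $\ell_\infty$-homogeneous set, because $\alpha(a_j,\cdot)$ has already reached its limit $\ell_\infty$ by stage $j$. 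Here $\ell_\infty$ is extracted by finitely many $\lpo'$-queries (and in the $\mathsf{c}$-variant this colour is all that is demanded), $N$ by a search over $\bbN$ once those queries are answered, and the remainder is computable.

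For the additive upper bound $\art_\bbN \leqW \tcn^* \times (\lpo')^*$ I would reorganise the classical proof of the additive Ramsey theorem over $\omega$ so that its non-computable steps are isolated and their number is bounded in terms of $|S|$. Given additive $c \colon [\bbN]^2 \to S$, attach to each $a$ its right-limit type $D(a) = \{ s \in S : c(a,b) = s \text{ for infinitely many } b > a\}$; additivity gives $D(a) = c(a,a')\cdot D(a')$ for $a < a'$, so $a \mapsto |D(a)|$ is non-increasing and stabilises, and on the tail where it has stabilised the relevant subsemigroup has a minimal ideal containing an idempotent $e$ that is realised on an infinite homogeneous set and with $e \in D(a)$ for all sufficiently large $a$ --- all of this provable in $\Sigma^0_2$-induction as in the treatment of $\art_\bbN$ in~\cite{KMPS19}. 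Given such $e$ and a threshold past which $e \in D(\cdot)$, an infinite homogeneous set is obtained by the computable greedy recipe ``take the least $b > a_i$ with $c(a_i,b)=e$'', whose successive steps all produce colour $e$ and hence colour $e$ on all pairs by additivity. The non-computable work, then, is: deciding the $\Sigma^0_2$ facts that identify $e$ and certify that a type or the function $|D(\cdot)|$ has stabilised, handled by $\lpo'$-queries; and producing the witnessing thresholds and starting data, handled by $\tcn$-queries once those facts are fixed. The restriction $\cart_\bbN \leqW (\lpo')^*$ uses only the first kind, since the output is the idempotent colour, a Boolean combination of finitely many $\Sigma^0_2$ facts; the restriction $\iart_\bbN \leqW \tcn^*$ uses only the search aspect, since a finite bound on the first two points of \emph{some} homogeneous set can be guessed without knowing its colour --- exactly as for $\ishuffle \equivW \tcn^*$ over $\bbQ$.

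For the lower bounds it is enough to code single instances of $\lpo'$ and $\tcn$ and then parallelise. A $\lpo'$-instance, i.e.\ a $\Pi^0_2$ predicate, is coded as an ordered colouring into the two-element chain $\{0 \prec 1\}$ that colours a pair $1$ as soon as a counterexample to the predicate is witnessed strictly between its endpoints and $0$ otherwise, so that the homogeneous colour is $0$ iff the predicate holds and $\cort_\bbN$ decides $\lpo'$. A $\tcn$-instance, i.e.\ an enumeration of a proper (or empty) set $E \subsetneq \bbN$ of forbidden numbers, is coded as an ordered colouring in which enumerating a fresh element of $E$ triggers a colour change blocking any homogeneous set from straddling it, so that any bound returned by $\iort_\bbN$ names an element outside $E$ and solves $\tcn$. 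Placing finitely many such instances side by side on a disjoint sum of orders --- whose homogeneous sets decompose over the summands --- yields $(\lpo')^* \leqW \cort_\bbN$ and $\tcn^* \leqW \iort_\bbN$, amalgamating the two constructions into one colouring yields $\tcn^* \times (\lpo')^* \leqW \ort_\bbN$, and replacing the poset-valued colourings by their associated down-set-valued (hence genuinely additive) colourings transfers all three lower bounds to $\cart_\bbN$, $\iart_\bbN$ and $\art_\bbN$. These are the $\bbN$-analogues of the witnessing gadgets of~\cref{thm:wei-eqs} and should be essentially routine.

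The main obstacle is the additive upper bound: one must distil from the iterated, a priori $\omega$-deep pigeonhole proof of additive Ramsey over $\omega$ a presentation in which the number of non-computable steps is bounded by a function of $|S|$ alone, and in which the ``$\Sigma^0_2$-decision'' content --- all that the colour-only variant $\cart_\bbN$ sees, and which must therefore land in $(\lpo')^*$ --- is cleanly separated from the ``witness-search'' content --- all that the bound-only variant $\iart_\bbN$ sees, and which must land in $\tcn^*$ --- so that the three lines come out with exactly the advertised degrees rather than with degrees merely caught between the stated bounds. By comparison, the ordered upper bound, the lower-bound constructions, and their transfer to the additive setting through the down-set coding are comparatively routine once the machinery of~\cref{thm:wei-eqs} is available.
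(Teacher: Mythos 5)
There are two genuine gaps, both in the upper bounds; your lower bounds (coding $\sisfin$/$\ect$ into colourings valued in $(\pow(k),\cup)$ ordered by $\subseteq$, then amalgamating by pointwise products --- not disjoint sums, which do not make sense over $\bbN$) are essentially the paper's Lemmas~\ref{lem:iaortN-reversal} and~\ref{lem:caortN-reversal}. The first gap is in your ``warm-up'': $\ort_\bbN$ is stated for \emph{right-ordered} colourings, i.e.\ $c(x,y)\preceq_P c(x,y')$ only for $y\le y'$, with no monotonicity in the first argument. Your claim that $a\mapsto\ell(a)$ is $\preceq$-monotone and hence eventually constant therefore fails: take $P$ to be an antichain, so that right-orderedness forces $c(x,y)=f(x)$ and $\ort_\bbN$ specialises to $\rt^1_k$, where $\ell(a)=f(a)$ need not stabilise. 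Moreover, even when $\ell$ does stabilise at $\ell_\infty$, the threshold $N$ is a $\Pi^0_2$ datum and cannot be found ``by a search over $\bbN$ once those queries are answered''; as written your argument would yield $\ort_\bbN\leqW(\lpo')^*$, contradicting $\tcn^*\leqW\iort_\bbN\leqW\ort_\bbN$ together with \Cref{lem:incomp-lpo-tcn}. The paper instead runs, for \emph{every} candidate colour $p$ in parallel, a finite-injury-style approximation $Y^{(p)}$ with computable $\mathrm{injury}$ and $\mathrm{progress}$ indicators; $\lpo'$ applied to these detects which $p$ succeeds (the $\preceq_P$-maximal colour occurring arbitrarily far), and $\ect\equivW\tcn^*$ applied to the same indicators, \emph{in parallel}, extracts the point after which no more injuries occur.

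The second gap is the one you yourself flag as ``the main obstacle'' and do not resolve: your additive argument via the right-limit types $D(a)$ first fixes the $\Sigma^0_2$ facts by $\lpo'$ and only then issues the $\tcn$-queries, which a priori yields the compositional product $\tcn^*\star(\lpo')^*$ rather than $\tcn^*\times(\lpo')^*$. (The paper's \Cref{lem:comp-to-paral-prod} does not apply here, since $(\lpo')^*$ is not reducible to $\cn$. One could try to collapse $\tcn^m\star(\lpo')^n$ to a product by running $\tcn$ on all $2^n$ possible $\lpo'$-answers, exploiting totality of $\tcn$, but you do not invoke this, and your $\Sigma^0_2$ facts --- e.g.\ the value of $D(a)$ on the stable tail --- themselves depend on thresholds, so it is not clear they can all be posed from the raw input.) The paper sidesteps the issue entirely: it never applies $\ort_\bbN$ (or a type analysis) as a first stage, but defines for each $s\in S$ a single interleaved approximation $Y^{(s)}$ whose injury condition already incorporates the $\le_{\GrR}$-minimality requirement, so that soundness follows from \Cref{lem:leLRH} and \Cref{lem:Hidemgroup}, and the same parallel $\lpo'$/$\ect$ post-processing as in the ordered case applies verbatim. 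A pleasant byproduct of that construction, which your sketch would also need, is the proof of $\iart_\bbN\lesW\ect$ \emph{without any} $\lpo'$: one feeds the truncated cardinalities $\min(3,|Y^{(s)}_n|)$ of all constructions to a single $\ect$-instance and returns the maximum of the resulting bounds, so that no selection of the successful colour is ever required.
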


A diagram summarizing the relationship between the various problems we are
studying is given in Figure~\ref{fig:summary}.

\begin{figure}[ht]
  \begin{center}
    \begin{tikzpicture}[scale=0.6]
      
      \node (lpos) at (0,1.5) {$\lpo^*\equivW \ort_{\bbQ}$};
      \node (crtp) at (0,3) {$\mathsf{cRT}^1_+\equivW \mathsf{c}(\eta)^1_{<\infty}$};
      \node (lpops) at (-5,4.5) {$(\lpo')^*\equivW \mathsf{cX}$};
      \node (cn) at (8,4.5) {$\cn$};
      \node (tcns) at (5,6) {$\tcn^*\equivW \ect \equivW (\eta)^1_{<\infty}\equivW \mathsf{c}(\eta)^1_{<\infty} \equivW \mathsf{iX}$};
      \node (prod) at (0,7.5) {$(\tcn\times\lpo')^*\equivW \mathsf{X}$};
      
      \draw [->] (crtp) to (lpos);
      \draw [->] (lpops) to (crtp);
      \draw [->] (cn) to (lpos);
      \draw [->] (tcns) to (cn);
      \draw [->] (prod) to (lpops);
      \draw [->] (prod) to (tcns);
      \draw [->] (tcns) to (crtp);
    \end{tikzpicture}
    \caption{Reductions and non-reductions between the problems studied in \Cref{thm:wei-eqs-bbN} and \Cref{thm:wei-eqs}. An arrow from $A$ to $B$ means that problem $B$ Weihrauch reduces to
    problem $A$, and lack of an arrow means lack of such a reduction (except when implied by transitivity). The problem $\mathsf{X}$ is any of $\art_{\bbQ}, \art_{\bbN}, \sshuffle$, and $\ort_{\bbN}$.}
    \label{fig:summary}
  \end{center}
\end{figure}


\section{Background}
\label{sec:background}

In this section, we will introduce the necessary background for the rest of the paper, and fix most of the notation that we will use, except
for formal definitions related to weak subsystems of second-order arithmetic, in particular $\rca$ (which consists of $\Sigma^0_1$-induction and recursive comprehension) and $\rca+\sztind$.
A standard reference for that material and, more generally, systems of interest in reverse mathematics, is \cite{simpson}.

\subsection{Generic notations}
We identify $k \in \bbN$ with the finite set $\{0, \ldots, k - 1\}$.
For every linear order $(X,<_X)$, we write $[X]^2$ for the set of pairs $(x,y)$ with $x <_X y$.
In this paper, by an \emph{interval} $I$ we always mean a pair $(u,v) \in [\bbQ]^2$, regarded as the set $]u,v[$ of rationals; we never use interval with
irrational extrema.
Finally, for any sequence $(x_n)_{n \in \bbN}$ of elements taken from
a poset, write $\limsup(x)$ for $\inf_{k \in \bbN} \sup_{n \ge k} x_n$.

\subsection{Additive and ordered colourings}\label{prelim_gen}

For the following definition, fix a linear order $(X, <_X)$.
For every poset $(P, \prec_P)$, we call a colouring $c : [X]^2 \to P$ \emph{ordered} if we have $c(x,y) \preceq_P
c(x',y')$ when $x' \le_X x <_X y \le_X y'$.
Call $c$ \emph{right-ordered} if we have $c(x,y) \preceq_P c(x, y')$ when $x <_X y \le_X y'$ (in particular being right-ordered is less restrictive than being ordered).
A colouring $c : [X]^2 \to S$ is called \emph{additive} with respect to a semigroup structure $(S, \cdot)$ if we have $c(x,z) = c(x,y) \cdot c(y,z)$
whenever $x <_X y <_X z$.
A subset $A\subseteq X$ is \emph{dense in $X$} if for every $x,y \in X$ with $x <_X y$ there is $z\in A$ such that $x<_X z<_X y$.
Given a colouring $c : [X]^n \to k$ and some interval $Y\subseteq X$, we say that $Y$ is
\emph{$c$-densely homogeneous} 
if there
exists a finite partition of $Y$ into dense subsets $D_i$ such that each
$[D_i]^n$ is monochromatic (that is, $|c([D_i]^n)| \le 1$). We will call those $c$-\emph{shuffles} if
$c$ happens to be a colouring of $\bbQ$ (i.e.\ $X=\bbQ$ and $n=1$).
Finally, given a colouring $c:\bbQ\to k$, and given an interval $I\subseteq \bbQ$,
we say that a colour $i<k$ \emph{occurs densely in $I$}
if the set of $x\in \bbQ$ such that $c(x)=i$ is dense in $I$.

\begin{definition}
    The following are statements of second-order arithmetic:
    \begin{itemize}
        \item $\weiortom_\bbQ$: for every finite poset $(P,\prec_P)$ and ordered
          colouring $c: [\bbQ]^2\to P$, there exists a $c$-homogeneous interval ${]u,v[}\subset \bbQ$.
        \item $\weishuffle$: for every $k\in\bbN$ and colouring $c:\bbQ\to k$, there exists an interval $I={]x,y[}$ such that $I$ is a $c$-shuffle. 
        \item $\art_\bbQ$: for every finite semigroup $(S,\cdot)$ and additive colouring $c:[\bbQ]^2\to S$, there exists an interval $I={]x,y[}$ such that $I$ is $c$-densely homogeneous.
        \item $\ort_\bbN$: for every finite poset $(P,\prec_P)$ and right-ordered
          colouring $c: [\bbN]^2\to P$, there exists an infinite $c$-homogeneous set.
        \item $\art_\bbN$: for every finite semigroup $(S,\cdot)$ and additive colouring $c:[\bbN]^2\to S$, there is an infinite $c$-homogeneous set.
    \end{itemize}
\end{definition}

As mentioned before, a result similar to $\art_\bbQ$ was originally proved by Shelah in \cite[Theorem 1.3 \& Conclusion 1.4]{shelah1975} and $\weishuffle$ is a central lemma
when analysing labellings of $\bbQ$ (see e.g.~\cite{CCP11}). 
We will establish that $\art_\bbQ$ and $\weishuffle$ are equivalent to $\Sigma^0_2$-induction over $\rca$ while $\ort_\bbQ$ is
provable in $\rca$.

We introduce some more terminology that will come in handy later on. Given a colouring $c:\bbQ\to k$, a set $C\subseteq k$ and an interval $I={]u,v[}$
that is a $c$-shuffle, we say that $I$ is a \emph{$c$-shuffle for the colours in $C$}, or equivalently that $I$ is \emph{$c$-homogeneous for the colours of
$C$}, if we additionally have $c(I)=C$.



\subsection{Preliminaries on Weihrauch reducibility}\label{prelim_wei}

We now give a brief introduction to the Weihrauch degrees of problems and the operations on them that we will use in the rest of the paper. We stress that here we are able to offer but a glimpse of this vast area of research, and we refer to \cite{survey-brattka-gherardi-pauly} for more details on the topic.

We deal with partial multifunctions $f : {\subseteq} \bbN^\bbN \rra \bbN^\bbN$, which we call \emph{problems}, for short. We will most often define
problems in terms of their \emph{inputs} and of the \emph{outputs} corresponding to those inputs. Elements of $\bbN^\bbN$ serve as names for the objects we are concerned with, such as colourings. Since the encoding of the objects of concern in our paper is trivial, we handle this tacitly.

A partial function $F : {\subseteq \bbN^\bbN} \to\bbN^\bbN$ is called a \emph{realizer for $f$}, which we denote by $F\vdash f$, if, for every $x\in\dom(f)$, $F(x)\in f(x)$. Given two problems $f$ and $g$, we say that $g$ is \emph{Weihrauch reducible} to $f$, and we write $g\leqW f$, if there are two computable functionals $H$ and $K$ such that $K\langle FH, \id\rangle$ is a realizer for $g$ whenever $F$ is a realizer for $f$. We define strong Weihrauch reducibility similarly: for every two problems $f$ and $g$, we say that $g$ \emph{strongly Weihrauch reduces} to $f$, written $g\leqsW f$, if there are computable functionals $H$ and $K$ such that $KFH\vdash g$ whenever $F\vdash f$.
We say that two problems $f$ and $g$ are (strongly) Weihrauch equivalent if both $f \leqW g$ and $g \leqW f$ (respectively $f \leqsW g$ and $g \leqsW f$). We write this $\equivW$ (respectively $\equivsW$).

We make use of a number of structural operations on problems, which respect the quotient to Weihrauch degrees. The first one
is the \emph{parallel product} $f \times g$, which has the power to solve an instance of $f$ and and instance of $g$ at the same time. The \emph{finite
parallelization} of a problem $f$, denoted $f^*$, has the power to solve an arbitrary finite number of instances of $f$, provided that number is given as part of
the input. Finally, the \emph{compositional product} of two problems $f$ and $g$, denoted $f \star g$, corresponds basically to the most complicated problem that can be
obtained as a composition of $f$ paired with the identity, a recursive function and $g$ paired with identity (that last bit allows to keep track of the initial input when
applying $f$).

Now let us list some of the most important\footnote{Whereas $\lpo$ and $\cn$ have been widely studied, $\tcn$ is somewhat less known (and does not appear in~\cite{survey-brattka-gherardi-pauly}): we refer to \cite{topol-comput-neumann-pauly} for an account of its properties, and to~\cite{completions-choice-brattka-gherardi} for a deeper study of some principles close to it.}
problems that we are going to use in the rest of the paper.
\begin{itemize}
    \item $\cn\colon \subseteq\baire\rra\bbN$ (\emph{closed choice on $\bbN$})
      is the problem that takes as input an enumeration $e$ of a (strict) subset of $\bbN$ and such that, for every $n\in \bbN$, $n\in \cn(e)$ if and only if $n\not\in \ran(e)$  (where $\ran(e)$ is the \emph{range} of $e$).
    \item $\tcn\colon \subseteq\baire\rra\bbN$ (\emph{totalization of closed choice on $\bbN$}) is the problem that takes as input an enumeration $e$ of \emph{any} subset of $\bbN$ (hence now we allow the possibility that $\ran (e)=\bbN$) and such that, for every $n\in\bbN$, $n\in\tcn(e)$ if and only if $n\not\in\ran(e)$ or $\ran (e)=\bbN$.
    \item $\lpo\colon 2^\bbN\to \{0,1\}$ (\emph{limited principle of omniscience}) takes as input any infinite binary string $p$ and outputs $0$ if and only if $p=0^\bbN$.
    \item $\lpo'\colon \subseteq 2^\bbN\to \{0,1\}$: takes as input (a code for) an infinite sequence $\langle p_0,p_1,\dots \rangle$ of binary strings such that the
      function $p(i) = \lim_{s\to\infty}p_i(s)$ is defined for every $i \in \bbN$, and outputs $\lpo(p)$.
\end{itemize}
Of lesser importance are the following problems:
\begin{itemize}
 \item $\Cc k$ (\emph{closed choice} on $k$) takes as input an enumeration $e$ of numbers not covering $\{0,1,\ldots,k-1\}$, and returns a number $j < k$ not covered by $e$.
 \item $\crt^1_k : k^\Nn \rra k$ (Ramsey's theorem for singletons aka the pigeon hole principle) returns some $j \in k$ on input $p \in k^\Nn$ if $j$ occurs infinitely often in $p$. We point out that $\crt^1_k \equivW (\Cc k)'\equivW \rt^1_k$ (we refer to \cite{ramsey-weihrauch-brattka-rakotoniaina,pauly-patey} for details): we prefer to use the ``colour version'' or $\rt$ for singletons since it makes many arguments more immediate than the ``set version'' would do.
 \item $\crt^1_+=\bigsqcup_{k>0} \crt^1_k$ (denoted $\rt_{1,+}$ in \cite{ramsey-weihrauch-brattka-rakotoniaina}) is the disjoint union of the $\crt^1_k$: it can be thought of as a problem taking as input a pair $(k,f)$ where $f\in\Nn$ and $f:\Nn\to k$ is a colouring, and outputting $n$ such that $f^{-1}(n)$ is infinite. By $\crt^1_\mathbb{N}$ we denote the variant where the number of colours is not provided as part of the input.
\end{itemize}
The definition of $\lpo'$ is a special case of the definition of jump as given in
\cite{survey-brattka-gherardi-pauly} applied to $\lpo$. Intuitively, $\lpo'$ corresponds to the power of answering a single binary $\Sigma^0_2$-question. In particular, $\lpo'$ is easily seen to be (strongly) Weihrauch equivalent to both $\sisfin$ and $\siscofin$, the problems accepting as input an infinite binary string $p$ and outputting $1$ if $p$ contains finitely (respectively, cofinitely) many $1$s, and $0$ otherwise.
We will use this fact throughout the paper.


\subsubsection*{The $\ect$ problem}
Another problem of combinatorial nature, introduced in \cite{comb-weak-ind-davis-hirschfeldt-hirst-pardo-pauly-yokoyama}, will prove to be very useful for the rest of the paper.

\begin{definition}
    $\ect$ is the problem whose instances are pairs $(n,f)\in \bbN\times\baire$ such that $f\colon \bbN\to n$ is a colouring of the natural numbers with $n$ colours, and such that, for every instance $(n,f)$ and $b\in \bbN$, $b\in \ect(n,f)$ if and only if
    \[
        \forall x>b\; \exists y>x \;(f(x)=f(y)).
    \]
\end{definition}
Namely, $\ect$ is the problems that, upon being given a function $f$ of the integers with finite range, outputs a $b$ such that, after that $b$, the palette of colours used is constant (hence its name, which stands for \emph{e}ventually \emph{c}onstant palette \emph{t}ail). We will refer to suitable $b$s as \emph{bounds} for the function $f$.

A very important result concerning $\ect$ and that we will use throughout the paper is its equivalence with $\tcn^*$.
\begin{lemma}[{\cite[Theorem $9$]{comb-weak-ind-davis-hirschfeldt-hirst-pardo-pauly-yokoyama}}]
\label{lemma:ectandtcn*}
    $\ect\equivW\tcn^*$
\end{lemma}

Another interesting result concerning $\ect$ is the following: if we see it as a statement of second-order arithmetic ($\ect$ can be seen as the principle asserting that for every colouring of the integers with finitely many colours there is a bound), then $\ect$ and $\sztind$ are equivalent over $\rca$ (actually, over $\rca^*$).

\begin{lemma}[{\cite[Theorem $7$]{comb-weak-ind-davis-hirschfeldt-hirst-pardo-pauly-yokoyama}}]
  \label{lem:ect-ind}
    Over $\rca$, $\ect$ and $\sztind$ are equivalent.
\end{lemma}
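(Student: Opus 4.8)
The plan is to show both implications separately, using the combinatorial characterization of $\ect$ as ``every colouring of $\bbN$ with finitely many colours has a bound''. For the direction $\rca \vdash \ect \to \sztind$, I would argue in $\rca$ and assume $\ect$. Given a $\Sigma^0_2$ formula $\varphi(x) \equiv \exists y\, \forall z\, \theta(x,y,z)$ with $\theta \in \Delta^0_0$, I want to deduce the $\Sigma^0_2$-induction axiom for $\varphi$. Since $\rca$ includes $\Sigma^0_1$-induction (hence also $\Sigma^0_1$-bounding and $\Delta^0_1$-comprehension), the standard approach is to reduce $\Sigma^0_2$-induction to a statement about colourings with finitely many colours. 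Concretely, suppose $\varphi(0)$ and $\forall x\,(\varphi(x) \to \varphi(x+1))$ but, for contradiction, $\neg\varphi(n)$ for some $n$; by $\Sigma^0_1$-induction on the complementary $\Pi^0_2$ statement one shows the set of $x \le n$ with $\varphi(x)$ is an initial segment, so there is a threshold $m < n$ with $\varphi(m)$ and $\neg\varphi(m+1)$. Then $\varphi(m)$ gives a witness $y_0$ with $\forall z\,\theta(m,y_0,z)$. Rather than this contradiction route, the cleaner known argument (following \cite{comb-weak-ind-davis-hirschfeldt-hirst-pardo-pauly-yokoyama}) is to use $\ect$ to produce, from a would-be failure of induction, a colouring whose bound encodes the ``jump'' needed to run $\Sigma^0_1$-induction one level up: define $f : \bbN \to n+1$ roughly by letting $f(x)$ record, for the least $i \le n$ such that $\exists y \le x\,\forall z \le x\, \theta(i,y,z)$ still ``looks true at stage $x$'', the relevant index, with a sink colour $n$ when everything looks settled. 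Applying $\ect$ to $f$ gives a bound $b$ past which the palette is constant; $\Sigma^0_1$-induction (available in $\rca$) on data bounded by $b$ then suffices to locate the failure point and derive a contradiction, establishing $\sztind$.

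For the converse $\rca + \sztind \vdash \ect$, I would argue directly: given $f : \bbN \to n$, consider for each colour $i < n$ the $\Sigma^0_2$ statement $\psi(i) \equiv \exists b\, \forall x \ge b\, (f(x) \ne i)$, i.e.\ ``$i$ occurs only finitely often''. The set $F = \{ i < n : \psi(i)\}$ exists by $\Delta^0_2$-comprehension, which follows from $\sztind$ over $\rca$ (indeed $\sztind$ implies $\Sigma^0_2$-bounding, hence $\Delta^0_2$-comprehension over $\rca$). Using $\Sigma^0_2$-bounding applied to the finitely many witnesses $b_i$ for $i \in F$, there is a single $B$ bounding all of them; then $\forall x \ge B$ the value $f(x)$ lies outside $F$, so the palette $\{ f(x) : x \ge B\}$ is exactly $n \setminus F$, which is constant. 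Hence $B$ is a bound in the sense of $\ect$. The only subtlety is justifying $\Sigma^0_2$-bounding and $\Delta^0_2$-comprehension from $\sztind$ over $\rca$; these are classical facts (Slaman-style equivalences, see \cite{simpson}) that I would cite rather than reprove.

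I expect the main obstacle to be the first direction, $\ect \to \sztind$, specifically the design of the colouring $f$ whose $\ect$-bound genuinely yields enough information to simulate $\Sigma^0_2$-induction using only the $\Sigma^0_1$-induction already present in $\rca$. The delicate point is that $\ect$ only hands back \emph{some} bound $b$ after which the palette is constant, with no quantitative control, so one must arrange that constancy of the palette past $b$ is exactly the combinatorial shadow of the induction step stabilizing; getting the bookkeeping right (handling the nesting of the two quantifiers in $\varphi$, and making sure the auxiliary comprehensions used to define $f$ are $\Delta^0_1$ and hence available) is where the real work lies. In practice, since both Lemma~\ref{lemma:ectandtcn*} and Lemma~\ref{lem:ect-ind} are quoted from \cite{comb-weak-ind-davis-hirschfeldt-hirst-pardo-pauly-yokoyama}, the honest ``proof'' here is a pointer to that reference together with the remark that the equivalence in fact holds over the weaker base theory $\rca^*$; I would present it that way, sketching the two directions as above for the reader's convenience but deferring the technical induction-simulation to the cited source.
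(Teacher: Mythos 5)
The paper itself gives no proof of this lemma: it is imported verbatim as Theorem~7 of the cited reference, so your closing decision to present the statement as a pointer to \cite{comb-weak-ind-davis-hirschfeldt-hirst-pardo-pauly-yokoyama} is exactly what the paper does, and in that sense the proposal is fine.

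Since you do offer a sketch, two remarks on it. In the direction $\sztind \vdash \ect$ the structure is right, but the justification you give is not: $\sztind$ does \emph{not} imply $\Delta^0_2$-comprehension over $\rca$ (full $\Delta^0_2$-comprehension already yields the Turing jump of every set, hence $\aca$, and $\mathrm{B}\Sigma^0_2$ is equivalent to $\Delta^0_2$-\emph{induction}, not comprehension). What you actually need, and what does follow from $\Sigma^0_2$-induction, is \emph{bounded} $\Sigma^0_2$-comprehension: the set $F$ of colours occurring only finitely often is a subset of $\{0,\dots,n-1\}$ and so exists as a coded finite set; combined with $\Sigma^0_2$-bounding for the witnesses $b_i$, $i\in F$, the rest of your argument goes through unchanged. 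In the direction $\ect \vdash \sztind$ your description of the colouring is too vague to check (``a sink colour $n$ when everything looks settled''); the standard construction colours stages by which candidate $\Sigma^0_1$-witnesses for $\varphi(0),\dots,\varphi(n)$ get injured, so that a colour $i$ recurs infinitely often precisely when $\varphi(i)$ fails, whence the $\ect$-bound lets $\rca$ extract the finite set $\{i\le n:\varphi(i)\}$ and thereby bounded $\Sigma^0_2$-comprehension, which is equivalent to $\sztind$. Neither issue affects the paper, which relies only on the cited result, but the $\Delta^0_2$-comprehension claim should not survive into any written-out version.
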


    Hence, thanks to the results above, it is clear why $\tcn^*$ appears as a natural candidate to be a ``translation'' of $\sztind$ in the Weihrauch degrees.


Following \cite{topol-comput-neumann-pauly}, $\isfin:2^\bbN\to\sierp$ is the following problem : for every $p\in 2^\bbN$, $\isfin(p)=\top$ if $p$ contains only finitely many occurrences of $1$ and $\isfin(p)=\bot$ otherwise
    \footnote{$\sierp$ is the Sierpinski space $\{\top,\bot\}$, where $\top$ is coded by the binary strings containing at least one $1$, and $\bot$ is coded by $0^\bbN$. $\isfin$ is strictly weaker than $\sisfin$}.

\begin{lemma}
\label{lem:isfinleqwect}
$\isfin\not\leqW\ect$
\end{lemma}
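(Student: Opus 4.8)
The plan is to diagonalise directly against a putative reduction, exploiting one simple structural feature of $\ect$: for any instance $(n,f)$ the set of correct answers is an up-set $\{b : b\geq b_{(n,f)}\}$ for some $b_{(n,f)}\in\bbN$ (namely the largest ``last occurrence of a colour'', of which there are at most $n$). In particular correct $\ect$-answers are cofinal in $\bbN$, so one can always choose a correct answer exceeding any prescribed number — but one cannot bound $b_{(n,f)}$ in advance. So suppose towards a contradiction that $\isfin\leqW\ect$ via computable $H,K$. Since $\ect$ is total and realizers may be chosen freely, correctness of the reduction unwinds to: for every $q\in 2^\bbN$ and every correct $\ect$-answer $b$ for $H(q)$, the sequence $K\langle b,q\rangle$ is a name of $\isfin(q)$ in $\sierp$, i.e.\ it contains a $1$ if $q$ has finitely many $1$s and equals $0^\bbN$ if $q$ has infinitely many $1$s.

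Next I would build, by a (non-effective) recursion, finite binary strings $\sigma_0\preceq\sigma_1\preceq\cdots$, each obtained from the previous by appending a block of $0$s followed by a single $1$, so that $p:=\bigcup_s\sigma_s$ has infinitely many $1$s; start with $\sigma_0$ the empty string. At stage $s$, consider the ``closed-off'' input $q_s:=\sigma_s\append 0^\bbN$, which has finitely many $1$s, so $\isfin(q_s)=\top$. Pick a correct $\ect$-answer $b_s$ for $H(q_s)$ with $b_s\geq s$, which is possible by cofinality. By the reduction, $K\langle b_s,q_s\rangle$ must contain a $1$, hence does so after reading only a finite prefix of its input; let $m_s$ be large enough that $K\langle b_s,q_s\rangle$ has already produced a $1$ once it has read $\sigma_s\append 0^{m_s}$ of its second component (keeping the name of $b_s$ fixed), and set $\sigma_{s+1}:=\sigma_s\append 0^{m_s}\append 1$.

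Then I would examine $p$. For each $s$ we have $\sigma_s\append 0^{m_s}\preceq p$, so $K\langle b_s,p\rangle$ still contains a $1$: the computation of $K$ is unchanged up to the point where the $1$ was produced. Thus $K\langle b_s,p\rangle$ contains a $1$ for a set of stages along which $b_s$ is cofinal in $\bbN$ (as $b_s\geq s$). On the other hand $p$ has infinitely many $1$s, so $\isfin(p)=\bot$, whence the reduction forces $K\langle b,p\rangle=0^\bbN$ — no $1$ at all — for every $b\geq b_{H(p)}$. Choosing $s$ with $b_s\geq b_{H(p)}$ then produces a correct $\ect$-answer $b_s$ for $H(p)$ for which $K\langle b_s,p\rangle$ nevertheless contains a $1$, contradicting correctness of the reduction.

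The delicate step — and the reason the naive version fails — is that a correct $\ect$-answer for a finite $q_s$ need not remain correct for the limit instance $H(p)$: the threshold $b_{H(p)}$ depends on all of $p$, and $H$ is free to keep introducing new, short-lived colours at ever larger positions, pushing $b_{H(p)}$ above any answer we have committed to. I expect this to be the only genuinely subtle point, and the fix is exactly to record at stage $s$ not an arbitrary correct answer but one that is at least $s$; cofinality of the recorded answers then guarantees that one of them clears the (unknown, unbounded) threshold $b_{H(p)}$. The non-effectivity of the construction is harmless for a non-reducibility statement. (One could alternatively route the argument through $\ect\equivW\tcn^*$, but the direct diagonalisation seems shorter.)
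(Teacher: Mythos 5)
Your proof is correct and is essentially the paper's own argument: the same stage-by-stage diagonalisation against $H,K$, closing off the input with $0^\bbN$ to force $K$ to commit to $\top$, appending a $1$, and choosing the recorded $\ect$-answers unboundedly so that one of them is eventually a valid bound for $H(p)$. Your explicit remarks on the up-set structure of $\ect$-answers and on why the naive version fails are just a more detailed account of what the paper leaves implicit.
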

\begin{proof}
Suppose for a contradiction that a reduction exists and is witnessed by functionals $H$ and $K$. We build an instance $p$ of $\isfin$ contradicting this.

Let us consider the colouring $H(0^\bbN)$, and let $b_0\in \ect(H(0^\bbN))$ be a bound for it. Since $\isfin(0^\bbN)=\top$, the outer reduction witness will commit to answering $\top$ after having read a sufficiently long prefix of $0^\bbN$ together with $b_0$, say of length $n_0$. Now consider the colouring $H(0^{n_0}10^\bbN)$, and a bound $b_1 > b_0$ for it. Again by the fact that $\isfin(0^{n_0}10^\bbN)=\top$, there is an $n_1$ such that $K$ commits to answering $\top$ after having read the prefix $0^{n_0}10^{n_1}$ together with $b_1$. We iterate this process indefinitely and obtain an instance $p = 0^{n_0}10^{n_1}10^{n_2}1\ldots$ such that $\isfin(p) = \bot$.

However, for the colouring $H(p)$ there must be some $b_k$ which is a valid bound, as the sequence $b_0 < b_1 < b_2 < \ldots$ is unbounded. But $K$ will commit to $\top$ upon reading a sufficiently long prefix of $p$ together with $b_k$ by construction, thereby answering incorrectly.
\end{proof}

We can now assert that the two main problems that we use as benchmarks in the
sequel, namely $(\lpo')^*$ and $\tcn^*$, are incomparable in the Weihrauch lattice.

\begin{lemma}
  \label{lem:incomp-lpo-tcn}
    $(\lpo')^*$ and $\tcn^*$ are Weihrauch incomparable. Thus $(\lpo')^* \ltW (\lpo')^*\times\tcn^*$ and $\tcn^* \ltW (\lpo')^*\times\tcn^*$.
\end{lemma}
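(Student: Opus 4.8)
The plan is to establish the two non-reductions $(\lpo')^* \nleqW \tcn^*$ and $\tcn^* \nleqW (\lpo')^*$ separately, since each uses a different one of the tools already assembled in the excerpt; the two displayed strict reductions then follow formally, because $f \leqW f \times g$ always holds, and if $g \leqW f \times g$ (with $g = (\lpo')^*$ or $\tcn^*$) then composing with the projections would yield $g \leqW f$ after a short argument, which is what we are ruling out.

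\textbf{Direction $(\lpo')^* \nleqW \tcn^*$.} Here I would use \Cref{lem:isfinleqwect} together with \Cref{lemma:ectandtcn*}. Since $\ect \equivW \tcn^*$, it suffices to show $(\lpo')^* \nleqW \ect$. Now $\lpo' \equivsW \sisfin$ (noted in the excerpt), and a single instance of $\sisfin$ already computes an instance of $\isfin$ (just forget the precision of the output, or rather: $\isfin \leqW \sisfin \leqW (\lpo')^*$ trivially). Hence if $(\lpo')^* \leqW \ect$ we would get $\isfin \leqW (\lpo')^* \leqW \ect$, contradicting \Cref{lem:isfinleqwect}. The only point needing a word of care is that $\isfin$ has codomain $\sierp$ rather than $\baire$, so one should either observe that \Cref{lem:isfinleqwect} is stated and proved for exactly this $\isfin$ and that Weihrauch reducibility makes sense with $\sierp$-valued problems in the usual way, or note $\isfin \leqW \sisfin$ directly and chain reductions. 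This direction is essentially immediate given the preceding lemmas.

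\textbf{Direction $\tcn^* \nleqW (\lpo')^*$.} This is the direction I expect to be the real work, and the excerpt has \emph{not} pre-packaged it, so one must give an argument. The natural route is: $\lpo'$ is equivalent to a single $\Sigma^0_2$ Boolean question, so $(\lpo')^*$ answers a fixed finite tuple (its length supplied in the input) of such questions; whereas $\tcn$ (hence $\tcn^*$) is not of this ``finitely many bits of advice'' shape. Concretely one can use the standard fact that $(\lpo')^*$ has \emph{finite} ``first-order part'' behaviour / is in the Boolean-question-style hierarchy: any $f \leqW (\lpo')^*$ has the property that the output is computed from the input together with finitely many bits that are each a limit of a computable $0/1$-sequence. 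By contrast, $\tcn^*$ — even $\tcn$ — is not: a single instance of $\tcn_\bbN$ can force the solver to ``wait arbitrarily long'' in a way not captured by finitely many $\Sigma^0_2$ bits, because the correct output ranges over all of $\bbN$ and is not continuous in any such finite advice. The cleanest way to make this rigorous is probably a direct diagonalization against a putative reduction $(H,K)$ for $\tcn \leqW (\lpo')^*$: feed $H$ an enumeration built stage-by-stage of a (possibly co-finite) subset of $\bbN$; since $H(e)$ determines a fixed number $m$ of $\sisfin$-instances, and the oracle answers are limits, at each finite stage $K$ can only have ``committed'' based on finitely much information, and by choosing which elements to enumerate next we can invalidate whatever finite output $K$ produces — mirroring exactly the diagonalization in the proof of \Cref{lem:isfinleqwect} but run in the opposite direction. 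One must handle the fact that $(\lpo')^*$ supplies its tuple length as input, so one first fixes an arbitrary candidate reduction, reads off $m$, and then diagonalizes against all $2^m$ (or rather, all $m$-tuples of limiting answers) simultaneously; the key combinatorial input is that for each of the boundedly many ``guesses'' the solver might settle on, there is a single enumeration strategy defeating it, and finitely many such strategies can be amalgamated. The main obstacle is getting the bookkeeping of this amalgamated diagonalization right — in particular ensuring that the enumeration $e$ we build is a legal $\tcn$-instance (any subset of $\bbN$, which is automatic) and that the single ``true'' value of $\sisfin$ on each component is well-defined, i.e.\ the auxiliary binary sequences we hand to the oracle do stabilize.

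\textbf{Deriving the strict inequalities.} Finally, $f \leqW f \times g$ is witnessed by pairing the input with a fixed dummy $g$-instance and projecting; so $(\lpo')^* \leqW (\lpo')^* \times \tcn^*$ and $\tcn^* \leqW (\lpo')^* \times \tcn^*$. If either were an equivalence, say $(\lpo')^* \times \tcn^* \leqW (\lpo')^*$, then composing with $\tcn^* \leqW (\lpo')^* \times \tcn^*$ gives $\tcn^* \leqW (\lpo')^*$, contradicting the incomparability just proved; symmetrically for the other. Hence both reductions are strict, completing the proof.
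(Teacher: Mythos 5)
Your proposal is correct and follows essentially the same route as the paper: the direction $(\lpo')^* \nleqW \tcn^*$ is derived from $\isfin \nleqW \ect$ via $\isfin \leqW \lpo'$, and the direction $\tcn^* \nleqW (\lpo')^*$ by exactly the diagonalization you sketch (the paper runs it against the slightly stronger target $\cn \nleqW (\lpo')^*$: after a finite prefix fixes the number $n$ of $\lpo'$-instances, it defeats the at most $2^n$ candidate oracle answers sequentially, each time waiting for $K$ to commit to some value and then enumerating that value). The bookkeeping you flag as the main obstacle is resolved in the paper just as you anticipate, so your outline is sound.
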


\begin{proof}
    $\tcn^*\not\leqW (\lpo')^*$: to do this, we actually show the stronger result that $\cn\not\leqW(\lpo')^*$. Suppose for a contradiction that a reduction exists, as witnessed by the computable functionals $H$ and $K$: this means that, for every instance $e$ of $\cn$, $H(e)$ is an instance of $(\lpo')^*$, and for every solution $\sigma\in(\lpo')^*(H(e))$, $K(e,\sigma)$ is a solution to $e$, i.e.\ $K(e,\sigma)\in\cn(e)$. We build an instance $e$ of $\cn$ contradicting this.

    We start by letting $e$ enumerate the empty set. At a certain stage $s$, by definition of instances of $(\lpo')^*$, $H(e\vert_s)$ converges to a certain $n$, the number of applications of $\lpo'$ that are going to be used in the reduction. Hence, however we continue the construction of $e$, there are at most $2^n$ possible values for $(\lpo')^*(H(e))$, call them $\sigma_0, \dots, \sigma_{2^n-1}$. It is now simple to diagonalize against all of them, one at a time, as we now explain. We let $e$ enumerate the empty set until, for some $s_0$ and $i_0$, $K(e\vert_{s_0},\sigma_{i_0})$ converges to a certain $m_0$: notice that such an $i_0$ has to exist, by our assumption that $H$ and $K$ witness the reduction of $\cn$ to $(\lpo')^*$. Then, we let $e$ enumerate $m_0$ at stage $s_0+1$: this implies that $\sigma_{i_0}$ cannot be a valid solution to $H(e)$, otherwise $K(e,\sigma_{{i_0}})$ would be a solution to $e$. We then keep letting $e$ enumerating $\{m_0\}$ until, for certain $s_1$ and $i_{1}$,
    $K(e\vert_{s_1},\sigma_{i_1})$ converges to $m_1$. We then let $e$ enumerate $\{m_0,m_1\}$, and continue the construction in this fashion. After $2^n$ many steps, we will have diagonalized against all the $\sigma_i$, thus reaching the desired contradiction.

    $(\lpo')^*\not\leqW\tcn^*$ is a consequence of Lemma \ref{lem:isfinleqwect}, using the fact that $\tcn^*\equivW \ect$ (see \cite{comb-weak-ind-davis-hirschfeldt-hirst-pardo-pauly-yokoyama}). To see that $\isfin\leqW\lpo'$: given any string $p\in 2^\bbN$, we consider the instance $\langle p_0,p_1,\dots\rangle$ of $\lpo'$ defined as follows: for every $i$, $p_i$ takes value $1$ until (and if) the $i$th occurrence of $1$ is found in $p$, after which point it takes value $0$. Then, $\lpo'(\langle p_0.p_1\dots\rangle)=1$ if and only if $\isfin(p)=\bot$. Hence, since $\isfin\not\leqW\ect$, we have in particular that $(\lpo')^*\not\leqW \tcn^*$.
\end{proof}

Next, we show how the problems $\crt^1_n$ and $\tcn^m$ relate. This answers a question raised by Kenneth Gill related to work in \cite{gill-phd}.

\begin{theorem}
\label{theo:rttkvstcnm}
$\crt^1_{k+1} \leqW \tcn^m$ iff $k \leq m$.
\end{theorem}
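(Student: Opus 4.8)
The proof naturally splits into the positive direction ($k \le m \implies \crt^1_{k+1} \leqW \tcn^m$) and the negative direction ($k > m \implies \crt^1_{k+1} \nleqW \tcn^m$). For the positive direction, I would first observe that $\crt^1_{k+1} \equivW (\Cc{k+1})'$ as recalled in the excerpt, so it suffices to reduce $(\Cc{k+1})'$ to $\tcn^m$ for $m \ge k$. Intuitively, given a colouring $f : \bbN \to k+1$, one colour class is infinite; the ``jumpy'' part is that we only see finite approximations to which colours are infinite. The key combinatorial fact is that with $k+1$ colours, at most $k$ of them can be finite, so $k$ instances of $\tcn$ should suffice to ``kill off'' all the finite colours: run $k$ copies of $\tcn$ in parallel, where the $i$-th copy enumerates the set of colours that are ``still alive'' according to some bookkeeping scheme, designed so that each $\tcn$ answer is guaranteed to be (a code for) an infinite colour, or is arbitrary precisely when the enumerated set becomes all of $k+1$ (which can't happen if we enumerate into a set of size $\le k$). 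I would set this up so that the $i$-th $\tcn$-instance enumerates colour $c$ once we have ``certified'' that $c$ will not be the witness produced by copies $1, \dots, i-1$, making the construction a cascade; the final output colour is read off from the $\tcn^m$ answers together with $f$ via the back-translation $K$. The padding from $k$ to $m$ copies is trivial (feed the extra copies a fixed trivial instance).

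For the negative direction, I would prove $\crt^1_{m+2} \nleqW \tcn^m$ by a diagonalization/adversary argument against a hypothetical reduction $(H, K)$. The adversary builds a colouring $f : \bbN \to m+2$ in stages. At each point, $H$ applied to the current prefix determines $m$ enumerations feeding the $m$ copies of $\tcn$; the adversary watches which colours these enumerations have committed to excluding, and which finite colour-class each copy has effectively ``locked'' $K$ into avoiding. Because $K$ must work for every valid $\tcn^m$-answer, and the answer to a $\tcn$-instance can be any element of the complement of the enumerated set (or anything, if the set is all of $\bbN$), the adversary has substantial freedom. The core counting point is that $m$ applications of $\tcn$ can only ``pin down'' $m$ pieces of information, but forcing a correct answer among $m+2$ colours where $m+1$ of them might be made finite requires $m+1$ such pieces — so the adversary can always keep two candidate colours ``open'', make the one $K$ is about to commit to finite, and drive the other to be the unique infinite colour, contradicting correctness. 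I would make this precise by tracking, for each of the $2^{?}$-many potential answer tuples (here the analogue of the $2^n$ in \Cref{lem:incomp-lpo-tcn}), a ``threatened'' colour, showing $m$ copies threaten at most $m$ colours while $m+2 > m+1$ leaves room to diagonalize.

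The main obstacle I anticipate is the negative direction, specifically the bookkeeping needed to formalize ``$m$ copies of $\tcn$ can only threaten $m$ colours.'' Unlike $\lpo'$ (which has a clean finite branching after $H$ stabilizes the count, as exploited in \Cref{lem:incomp-lpo-tcn}), $\tcn$ answers range over all of $\bbN$, so the adversary cannot simply enumerate finitely many answer-tuples $\sigma_0, \dots, \sigma_{2^n - 1}$. The right move is probably to note that at any stage the $i$-th $\tcn$-enumeration has excluded only finitely many colours (there are only $m+2$ colours total, so this is automatic), and to argue that the ``dangerous'' behaviour of $K$ — committing to output a particular colour $c$ — can only be provoked by the $\tcn^m$-answer tuple, and there is effectively a bounded family of ``committed'' colours; then a pigeonhole between $m+2$ colours and $m$ slots plus the need for *one* infinite colour finishes it. I would also double-check the exact indexing ($k+1$ colours, $k \le m$) against the boundary case $k = m$ to make sure the cascade of $k$ copies genuinely suffices and the $\nleqW$ argument targets $k = m+1$ colours (i.e.\ $\crt^1_{m+2}$), matching the ``iff'' cleanly.
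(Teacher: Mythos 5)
Your positive direction is fine in spirit---the count ($k$ parallel instances of $\tcn$ for $k+1$ colours) is exactly right, and the paper obtains it even more cheaply by citing $\crt^1_{k+1} \leqW (\crt^1_2)^k \leqW \tcn^k$---but be careful that your ``cascade'' does not smuggle in sequential dependence: if the $i$-th enumeration needs to know which witness copies $1,\dots,i-1$ \emph{will} produce, you are building $\tcn \star \dots \star \tcn$ rather than $\tcn^m$, since the inner reduction witness never sees the $\tcn$-answers. This is repairable, but not for free.

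The genuine gap is in the negative direction, and it is exactly the obstacle you flag without resolving. Your adversary strategy is modelled on Lemma \ref{lem:incomp-lpo-tcn}, where after $H$ stabilises there are only $2^n$ candidate answer tuples to diagonalise against. For $\tcn^m$ the set of currently-valid answer tuples is cofinite in each coordinate at every finite stage, and a coordinate whose enumeration eventually covers $\bbN$ may return \emph{anything}; so ``the colour $K$ is about to commit to'' is not a finite, enumerable object, and the pigeonhole ``$m$ copies of $\tcn$ threaten at most $m$ colours'' has no proof in your sketch. The paper avoids the head-on diagonalisation entirely with an inductive peeling step (Lemma \ref{lemma:onestep}): feed the constant input $k^\omega$ to a putative reduction $\crt^1_{k+1}\leqW\tcn^{m+1}$, obtain a finite prefix $k^\ell$ and a tuple $(a_0,\dots,a_m)$ on which $K$ commits to the answer $k$; on any extension $k^\ell p$ with $p \in \mathbf{k}^\omega$ the answer $k$ is wrong, so the produced $\tcn^{m+1}$-instance cannot have all solution sets empty (otherwise $(a_0,\dots,a_m)$ would remain a valid answer). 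Lemma \ref{lemma:tcnotallempty} then trades the guaranteed non-empty coordinate for one application of $\cn$, giving $\crt^1_{k} \leqW \tcn^{m} \star \cn$, and the $\cn$ is absorbed because $\crt^1_k$ is a closed fractal. Iterating drives $k>m$ down to the absurd $\crt^1_2 \leqW \tcn^0 \equivW \id$. None of these three ingredients (the constant-input commitment, the forced non-emptiness of a solution set, the closed-fractal absorption of $\cn$) appears in your plan, and I do not see how to complete your direct counting argument without something playing their role.
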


To prove the theorem, we make use of some lemmas. First, we point out that if we have $k + 1$ instances of $\tcn$, but with the promise that at least one of them is non-empty, then $k$ copies of $\tcn$ and one of $\cn$ suffice to solve the problem.

\begin{lemma}
\label{lemma:tcnotallempty}
$\tcn^{k+1}|_{\mathcal{A}(\mathbb{N}^{k+1}) \setminus \{\emptyset\}^{k+1}} \leqW \tcn^{k} \star \cn$
\begin{proof}
As we know that one of the instances $A_0,A_1,\ldots,A_k$ for $\tcn$ is non-empty, we can use $\cn$ to find some $(i,n)$ such that $n \in A_i$. We then use the $k$ copies of $\tcn$ to solve all instances $A_j$ for $j \neq i$.
\end{proof}
\end{lemma}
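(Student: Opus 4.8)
The plan is to carry out the Weihrauch reduction suggested by the shape of the statement: use the single copy of $\cn$ to isolate one coordinate together with a correct answer for it, and let the $k$ copies of $\tcn$ resolve the remaining $k$ coordinates. First I would fix the bookkeeping. An instance of $\tcn^{k+1}|_{\mathcal{A}(\bbN^{k+1})\setminus\{\emptyset\}^{k+1}}$ is a tuple $(C_0,\dots,C_k)$ of closed subsets of $\bbN$, each $C_i$ presented by an enumeration of its complement $\bbN\setminus C_i$, with the promise that the $C_i$ are not all empty; a solution is a tuple $(v_0,\dots,v_k)$ with $v_i\in C_i$ whenever $C_i\neq\emptyset$. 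Since $\{0,\dots,k\}\times\bbN$ is computably isomorphic to $\bbN$, I will phrase the single $\cn$-query as closed choice on $\{0,\dots,k\}\times\bbN$.

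The heart of the reduction is this $\cn$-query. Consider the closed set $D=\bigsqcup_{i\le k}(\{i\}\times C_i)\subseteq\{0,\dots,k\}\times\bbN$: its complement is $\bigcup_{i\le k}(\{i\}\times(\bbN\setminus C_i))$, which is uniformly enumerable from the given data, so $D$ is a legitimate $\cn$-instance \emph{provided} $D\neq\emptyset$ --- and $D\neq\emptyset$ is exactly the promise that some $C_i$ is non-empty, i.e.\ that the input avoids $\{\emptyset\}^{k+1}$. Hence the $\cn$-query we pose always lies in $\dom(\cn)$, and any answer $(i^\ast,n^\ast)\in\cn(D)$ satisfies $n^\ast\in C_{i^\ast}$, so $n^\ast$ is already a valid $\tcn$-answer for the $i^\ast$-th coordinate. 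By the definition of the compositional product, a computable map is allowed to inspect both the original input and this $\cn$-answer before invoking $\tcn^k$; so I would feed the $k$ enumerations presenting $(C_j)_{j\neq i^\ast}$ to the $k$ copies of $\tcn$, collect answers $m_j\in C_j$ (arbitrary if $C_j=\emptyset$), and output the tuple that has $n^\ast$ in coordinate $i^\ast$ and $m_j$ in each coordinate $j\neq i^\ast$. Correctness is then immediate from the specification of $\tcn$ and from what was noted about $(i^\ast,n^\ast)$.

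I do not anticipate a genuine obstacle here. The only point that requires care --- and the single place the hypothesis is used --- is checking that the promise "$(C_0,\dots,C_k)\notin\{\emptyset\}^{k+1}$" is precisely what makes the combined closed set $D$ non-empty, so that the one $\cn$-instance we build is always within the domain of $\cn$. Everything else is routine manipulation: the use of $\star$ to let the choice of which $k$ coordinates go to $\tcn^k$ depend on $i^\ast$, and the final reassembly of the output tuple.
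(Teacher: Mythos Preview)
Your proposal is correct and is essentially the same argument as the paper's: use $\cn$ on the disjoint union of the $C_i$ to extract a pair $(i^\ast,n^\ast)$ with $n^\ast\in C_{i^\ast}$, then hand the remaining $k$ coordinates to $\tcn^k$. You have simply spelled out the bookkeeping (the encoding of $D$ and the role of the compositional product) that the paper leaves implicit.
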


\begin{lemma}
\label{lemma:onestep}
If $\crt^1_{k + 1} \leqW \tcn^{m+1}$, then $\crt^1_{k} \leqW \tcn^{m}$.
\begin{proof}
Consider the input $k^\omega$ to $\crt^1_{k + 1}$. The outer reduction witness for $\crt^1_{k + 1} \leqW \tcn^{m+1}$ will have to output $k$ upon reading a sufficiently long prefix $k^\ell$ of $k^\omega$ together with some tuple $(a_0,\ldots,a_m)$ returned by $\tcn^{m+1}$. Now consider any input of the form $k^\ell p$ for $p \in \mathbf{k}^\omega$. As $k$ is a wrong answer, the inner reduction witness has to produce an instance $(A_0,\ldots,A_m)$ for $\tcn^{m+1}$ which is not $(\emptyset,\emptyset,\ldots,\emptyset)$ -- otherwise, $\tcn^{m+1}$ could output $(a_0,\ldots,a_m)$ and thereby cause the wrong answer $k$.

Together with Lemma \ref{lemma:tcnotallempty}, this gives us a reduction $\crt^1_{k} \leqW \tcn^{m} \star \cn$. Since $\crt^1_{k}$ is a closed fractal, we can drop the $\cn$ by \cite[Theorem 2.4]{paulyleroux}. This yields the claim.
\end{proof}
\end{lemma}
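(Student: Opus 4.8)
The plan is to use the extra colour $k$ as a ``valve'' that trades one copy of $\tcn$ for a copy of $\cn$, and then to dispose of that $\cn$ using the fact that $\crt^1_k$ is a closed fractal. So suppose the reduction $\crt^1_{k+1}\leqW\tcn^{m+1}$ is witnessed by computable functionals $H$ (building the $\tcn^{m+1}$-instance) and $K$ (reading back a solution). First I would feed in the constant instance $k^\omega\in\dom(\crt^1_{k+1})$: since $k$ is the only colour recurring infinitely often in $k^\omega$, for the fixed instance $H(k^\omega)$ of $\tcn^{m+1}$ and any solution $\vec a=(a_0,\dots,a_m)$ of it we must have $K(k^\omega,\vec a)=k$. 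By continuity of $K$ there is a length $\ell\in\bbN$ such that, having read the prefix $k^\ell$ of its first argument together with $\vec a$ (a single finite tuple), $K$ already commits to the output $k$.

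The heart of the argument is then the claim that for every $p\in\mathbf{k}^\omega$ the instance $H(k^\ell p)$ of $\tcn^{m+1}$ is \emph{not} the all-empty tuple. Note first that $k^\ell p\in\dom(\crt^1_{k+1})$ — some colour $<k$ recurs in $p$ by the pigeonhole principle — and that in $k^\ell p$ the colour $k$ occurs only $\ell$ times, so $k$ is never a correct answer for this instance. If $H(k^\ell p)$ were all-empty, then $\vec a$ would be a legitimate $\tcn^{m+1}$-output on it (every natural number lies in $\tcn$ of the empty set); choosing a realizer of $\tcn^{m+1}$ that returns $\vec a$ there, the composed reduction would output $K(k^\ell p,\vec a)$, which equals $k$ by the previous paragraph since $k^\ell p$ extends $k^\ell$ and the answer fed to $K$ is again $\vec a$ — contradicting correctness of the reduction. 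Hence the claim holds, and the maps $p\mapsto H(k^\ell p)$ and $(p,\vec b)\mapsto K(k^\ell p,\vec b)$ witness $\crt^1_k\leqW\tcn^{m+1}|_{\mathcal{A}(\bbN^{m+1})\setminus\{\emptyset\}^{m+1}}$: the reformatted inputs only ever hit non-all-empty instances, and a colour recurring infinitely often in $k^\ell p$ recurs infinitely often in $p$ and is $<k$, hence is a correct answer for the $\crt^1_k$-instance $p$.

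Together with Lemma~\ref{lemma:tcnotallempty} (instantiated with $m+1$ in place of $k+1$) this gives $\crt^1_k\leqW\tcn^m\star\cn$. Finally, $\crt^1_k$ has compact domain $\mathbf{k}^\omega$ and its restriction to any cylinder is again equivalent to $\crt^1_k$ (prepend the fixed prefix; the recurring colours are unchanged), so $\crt^1_k$ is a closed fractal and the $\cn$ can be absorbed by \cite[Theorem~2.4]{paulyleroux}, yielding $\crt^1_k\leqW\tcn^m$.

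The main obstacle I anticipate is the continuity bookkeeping underlying the first two paragraphs: one must be careful about how instances and solutions of $\tcn^{m+1}$ are coded into Baire space and about which finite portions of $K$'s two arguments already determine its single-number output, so that exactly the same finite data forces the answer $k$ both on $k^\omega$ and on the all-empty instance produced from $k^\ell p$. Once that is pinned down, the remaining two steps reduce to invoking Lemma~\ref{lemma:tcnotallempty} and the fractal-absorption theorem.
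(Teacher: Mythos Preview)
Your proof is correct and follows essentially the same approach as the paper's: fix a solution $\vec a$ for $H(k^\omega)$, use continuity to find the prefix $k^\ell$ committing $K$ to the answer $k$, argue that $H(k^\ell p)$ cannot be all-empty, then invoke Lemma~\ref{lemma:tcnotallempty} and the closed-fractal absorption of $\cn$. Your version is simply more explicit about the continuity bookkeeping and about why $\crt^1_k$ is a closed fractal, which the paper leaves implicit.
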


\begin{proof}[Proof of Theorem \ref{theo:rttkvstcnm}]
If $k > m$, we could use Lemma \ref{lemma:onestep} repeatedly to conclude that $\crt^1_2 \leqW \crt^1_{1 + k - m} \leqW \tcn^0 \equivW \id$, but this is absurd. Conversely, we have $\crt^1_{k + 1} \leqW \left ( \crt^1_2 \right )^k \leqW \tcn^k$.
\end{proof}

\subsubsection*{The finitary part of $\mathrm{WKL}'$}
In Subsection \ref{subsec:eta} we will use the notion of \emph{finitary part} of a Weihrauch degree to sidestep somewhat troublesome combinatorics. Introduced in \cite{paulycipriani1}, the $k$-finitary part of a Weihrauch degree $f$ (denoted by $\operatorname{Fin}_k(f)$) is the greatest Weihrauch degree of a problem with codomain $\mathbf{k}$ which is Weihrauch reducible to $f$ . We also write $\operatorname{Fin}(f) = \bigsqcup_{n \in \mathbb{N}} \operatorname{Fin}_n(f)$. The notion is similar to the more intensively studied first-order part of a Weihrauch degree \cite{damir-reed-keita,soldavalenti,pauly-valenti}.

Here we will calculate the $k$-finitary part of $\CCantor'$, which we view as the problem ``Given an enumeration of an infinite binary tree, find a path through it''. Another representative of $\CCantor'$ is ``Given the characteristic function of an infinite tree $T \subseteq \omega^{<\omega}$ that happens to be finitely branching, find a path through it'', which is the computational task associated to K\"onig's Lemma. It holds that $\crt^1_\mathbb{N} \leqW \CCantor'$. Thus, the $k$-finitary part of $\CCantor'$ is an upper bound for the $k$-finitary part of $\crt^1_m$, which we exploit for Corollary \ref{corr:fincrtkm} below.

\begin{theorem}
$\operatorname{Fin}_k(\CCantor') \equivW \crt^1_k$
\begin{proof}
Assume that $f : \subseteq \baire \rra \mathbf{k}$ satisfies $f \leqW \CCantor'$ via $\phi$ and $\Psi$. On input $x \in \dom(f)$, $\phi(x)$ enumerates a tree $T_x$. We produce an input $p_x$ to $\crt^1_k$ as follows: Whenever we find some $w \in T_x$ such that $\Psi(x,w) = i$, we add an $i$ to $p$. We now claim that if some $i$ appears infinitely often in $p_x$, then $i$ is a correct output for $f(x)$. Since $\Psi(x,\cdot)$ has to halt for every infinite path through $T_x$, a compactness argument shows that $\Psi(x,w)$ is defined for all but finitely many $w \in T_x$. Moreover, if $\Psi(x,w)$ is defined, then $\Psi(x,wu) = \Psi(x,w)$ for all extensions $wu$. Thus, if $\Psi(x,w) = i$ for infinitely many $w$, then by considering the infinite subtree of $T_x$ obtained by those $w$ together with the finitely $v$ such that $T(x,v)$ is undefined, we find that there is an infinite path through $T_x$ having some prefix $w$ such that $\Psi(x,w) = i$. This concludes the claim.
\end{proof}
\end{theorem}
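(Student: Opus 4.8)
The plan is to establish the two reductions $\crt^1_k \leqW \operatorname{Fin}_k(\CCantor')$ and $\operatorname{Fin}_k(\CCantor') \leqW \crt^1_k$ separately. The first is immediate from the definitions: $\crt^1_k$ is a problem with codomain $\mathbf{k}$, and $\crt^1_k \leqW \crt^1_\bbN \leqW \CCantor'$ (the last reduction being the one recorded above), so since $\operatorname{Fin}_k(\CCantor')$ is by definition the greatest degree of a codomain-$\mathbf{k}$ problem lying below $\CCantor'$, we get $\crt^1_k \leqW \operatorname{Fin}_k(\CCantor')$. All the work is therefore in the converse: every problem $f : \subseteq \baire \rra \mathbf{k}$ with $f \leqW \CCantor'$ already satisfies $f \leqW \crt^1_k$.

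For that, I would view $\CCantor'$ as the problem ``given an enumeration of an infinite binary tree, find a path through it'' and fix forward and backward witnesses $\phi, \Psi$ for $f \leqW \CCantor'$. On input $x \in \dom(f)$, $\phi(x)$ enumerates an infinite tree $T_x \subseteq 2^{<\omega}$, and $\Psi(x,p) \in f(x)$ for every path $p$ through $T_x$; by continuity $\Psi(x,p)$ is determined by a finite prefix of $p$, so for each such $p$ some $w \preceq p$ already has $\Psi(x,w)\!\downarrow$, and after the routine normalization that $\Psi$ only ever outputs elements of $\mathbf{k}$, this value lies in $\mathbf{k}$. The reduction to $\crt^1_k$ then builds an instance $p_x \in \mathbf{k}^\bbN$ by running the enumeration of $T_x$ and appending $i$ to $p_x$ each time a node $w \in T_x$ with $\Psi(x,w)\!\downarrow = i$ appears; the backward functional simply returns whatever colour $\crt^1_k$ produces.

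The crux is the verification that every colour $i$ occurring infinitely often in $p_x$ is a valid output for $f(x)$, and this is where the argument needs care. Two observations make it go through. First, monotonicity of $\Psi$: once $\Psi(x,w)\!\downarrow = i$, we have $\Psi(x,w') = i$ for every extension $w' \succeq w$, including infinite paths. Consequently the set $T' = \{ w \in T_x : \Psi(x,w)\ \text{is undefined, or } \Psi(x,w) = i \}$ is downward closed, i.e.\ a subtree of $T_x$, and it is infinite since it contains all (infinitely many) nodes witnessing occurrences of $i$ in $p_x$. Being an infinite finitely-branching tree, $T'$ has an infinite path $p$ by König's lemma, which is then a path through $T_x$; hence $\Psi(x,p)\!\downarrow$, say via a prefix $w \preceq p$, and since $w \in T'$ with $\Psi(x,w)$ defined, the definition of $T'$ forces $\Psi(x,w) = i$, whence $i = \Psi(x,p) \in f(x)$. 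The same style of compactness argument, applied instead to the subtree of $\Psi$-undefined nodes of $T_x$ (which is downward closed by monotonicity and has no infinite path, hence is finite by König's lemma), shows that cofinitely many nodes of the infinite tree $T_x$ are $\Psi$-defined, so that $p_x$ is indeed infinite and a legitimate instance of $\crt^1_k$.

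I expect the main obstacle to be the bookkeeping around these two König's-lemma arguments: confirming that the $\Psi$-undefined nodes really do form a finite subtree (so $p_x$ is genuinely an infinite $k$-colouring) and that $T'$ really is a subtree of $T_x$ itself (so that the path König produces lands in $T_x$ and $\Psi$ is obliged to halt on it with value $i$). The remaining ingredients — the normalization of $\Psi$, the triviality of $\crt^1_k \leqW \operatorname{Fin}_k(\CCantor')$, and the trivial backward functional — are routine.
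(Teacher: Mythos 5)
Your proposal is correct and follows essentially the same route as the paper: build the $\crt^1_k$-instance by emitting $i$ whenever a node $w \in T_x$ with $\Psi(x,w)=i$ is enumerated, and verify via monotonicity of $\Psi$ plus a K\"onig's-lemma/compactness argument that an infinitely-recurring colour must be realized on an actual path through $T_x$. Your explicit treatment of the two compactness steps (finiteness of the $\Psi$-undefined subtree, and the infinite subtree $T'$ forcing a path with colour $i$) and of the easy direction $\crt^1_k \leqW \operatorname{Fin}_k(\CCantor')$ just makes precise what the paper leaves implicit.
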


\begin{corollary}
$\operatorname{Fin}(\CCantor') \equivW \crt^1_+$
\end{corollary}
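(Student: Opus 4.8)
The plan is to read the corollary off the preceding theorem by unfolding $\operatorname{Fin}$ over all finite codomains. I would start from the definition $\operatorname{Fin}(\CCantor') = \bigsqcup_{k \in \bbN} \operatorname{Fin}_k(\CCantor')$, use the theorem to identify each summand $\operatorname{Fin}_k(\CCantor')$ with $\crt^1_k$, and then observe that $\bigsqcup_{k \in \bbN}\crt^1_k \equivW \crt^1_+$.

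The point that needs care is that the equivalences $\operatorname{Fin}_k(\CCantor') \equivW \crt^1_k$ lift from individual $k$ to the coproduct, i.e.\ that both directions are witnessed uniformly in $k$. For $\operatorname{Fin}_k(\CCantor') \leqW \crt^1_k$, I would appeal to the proof of the theorem: from any pair $(\phi,\Psi)$ witnessing a reduction of a problem with codomain $\mathbf{k}$ to $\CCantor'$, it produces a reduction to $\crt^1_k$ by a fixed procedure — enumerate the tree $\phi(x)$, emit $\Psi(x,w)$ whenever it converges, and forward the colour returned by $\crt^1_k$ — which does not depend on $k$. For $\crt^1_k \leqW \operatorname{Fin}_k(\CCantor')$, I would use that $\crt^1_k \leqW \crt^1_\bbN \leqW \CCantor'$ with all of these reductions uniform in $k$, so $\crt^1_k$ is exhibited as one of the codomain-$\mathbf{k}$ problems reducing to $\CCantor'$; it may therefore be taken as a representative of $\operatorname{Fin}_k(\CCantor')$, and combining the two uniform families gives $\bigsqcup_k \operatorname{Fin}_k(\CCantor') \equivW \bigsqcup_k \crt^1_k$.

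Finally I would record that $\bigsqcup_{k \in \bbN}\crt^1_k \equivW \crt^1_+$: the summand $\crt^1_0$ is the nowhere-defined problem, since its domain $\mathbf{0}^\bbN$ is empty, so it contributes no input to the coproduct and can be dropped, while $\bigsqcup_{k>0}\crt^1_k$ is by definition $\crt^1_+$. I do not expect a genuine obstacle here: the whole argument is essentially unfolding definitions, and the only thing to be vigilant about is the uniformity bookkeeping in the middle step.
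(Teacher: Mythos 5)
Your proposal is correct and matches the paper's (implicit) argument: the corollary is intended as an immediate consequence of $\operatorname{Fin}_k(\CCantor') \equivW \crt^1_k$ together with the definitions $\operatorname{Fin}(f) = \bigsqcup_k \operatorname{Fin}_k(f)$ and $\crt^1_+ = \bigsqcup_{k>0}\crt^1_k$. Your extra care about uniformity in $k$ and about the degenerate summand $\crt^1_0$ is sound bookkeeping, not a deviation from the intended route.
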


\begin{corollary}
\label{corr:fincrtkm}
$\operatorname{Fin}_k(\crt^1_m) \equivW \crt^1_{\min\{k,m\}}$
\end{corollary}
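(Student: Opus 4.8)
The plan is to derive Corollary~\ref{corr:fincrtkm} from the preceding theorem $\operatorname{Fin}_k(\CCantor') \equivW \crt^1_k$ together with Theorem~\ref{theo:rttkvstcnm} and elementary properties of $\operatorname{Fin}_k$. First I would record the two easy monotonicity facts: $\operatorname{Fin}_k$ is monotone in the degree argument (if $f \leqW g$ then $\operatorname{Fin}_k(f) \leqW \operatorname{Fin}_k(g)$), and $\operatorname{Fin}_k(f) \leqW f$ always holds; moreover $\operatorname{Fin}_k(f) \equivW f$ whenever $f$ itself has codomain $\mathbf{k}$, since then $f$ is a candidate in the supremum defining $\operatorname{Fin}_k(f)$. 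In particular $\operatorname{Fin}_k(\crt^1_k) \equivW \crt^1_k$ because $\crt^1_k$ has codomain $\mathbf{k}$.

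Next I would split into the two cases $k \leq m$ and $k > m$. If $k \leq m$, then $\crt^1_{\min\{k,m\}} = \crt^1_k$, and I claim $\operatorname{Fin}_k(\crt^1_m) \equivW \crt^1_k$: the direction $\crt^1_k \leqW \operatorname{Fin}_k(\crt^1_m)$ holds because $\crt^1_k \leqW \crt^1_m$ (it is a sub-problem, using only the first $k$ colours) and $\crt^1_k$ has codomain $\mathbf{k}$, so it witnesses the $k$-finitary part; the direction $\operatorname{Fin}_k(\crt^1_m) \leqW \crt^1_k$ uses that $\crt^1_m \leqW \CCantor'$ (noted in the excerpt, since even $\crt^1_{\bbN} \leqW \CCantor'$), hence $\operatorname{Fin}_k(\crt^1_m) \leqW \operatorname{Fin}_k(\CCantor') \equivW \crt^1_k$ by the previous theorem. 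If $k > m$, then $\crt^1_{\min\{k,m\}} = \crt^1_m$, and again I want $\operatorname{Fin}_k(\crt^1_m) \equivW \crt^1_m$. Here $\crt^1_m \leqW \operatorname{Fin}_k(\crt^1_m)$ is immediate since $\crt^1_m$ has codomain $\mathbf{m} \subseteq \mathbf{k}$ and reduces to itself. For the reverse, the point is that any problem with codomain $\mathbf{k}$ reducing to $\crt^1_m$ cannot genuinely use more than $m$ of its output values in any essential way; formally, $\operatorname{Fin}_k(\crt^1_m) \leqW \operatorname{Fin}_k(\CCantor') \equivW \crt^1_k$, and then one has to further descend from $\crt^1_k$ to $\crt^1_m$.

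This last descent is where I would actually invoke Theorem~\ref{theo:rttkvstcnm}, or rather the idea behind Lemma~\ref{lemma:onestep}: since $\crt^1_m \leqW \tcn^{m}$ while $\crt^1_{m+1} \nleqW \tcn^{m}$, the degrees $\crt^1_j$ are strictly increasing, so $\crt^1_k$ for $k>m$ does \emph{not} reduce to $\crt^1_m$ and the naive chain does not close. The cleaner route is to observe directly that $\operatorname{Fin}_k(\crt^1_m) \leqW \crt^1_m$: given $f \leqW \crt^1_m$ with codomain $\mathbf{k}$ via witnesses $\phi, \Psi$, on input $x$ the reduction produces an instance $p_x \in \mathbf{m}^{\omega}$ of $\crt^1_m$, and for any colour $j$ occurring infinitely often in $p_x$ the value $\Psi(x, j)$ is a correct output of $f(x)$; since $\Psi(x,\cdot)$ is a fixed map from $\mathbf{m}$ to $\mathbf{k}$, composing $\crt^1_m$ on $p_x$ with $\Psi(x,\cdot)$ gives a problem with codomain $\mathbf{k}$ reducible to $\crt^1_m$, i.e. $f \leqW \crt^1_m$. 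This shows $\operatorname{Fin}_k(\crt^1_m) \leqW \crt^1_m$ for all $k$, and combined with the trivial reverse reduction (valid when $m \leq k$, and when $m > k$ one replaces $\crt^1_m$ by $\crt^1_k$ throughout) yields $\operatorname{Fin}_k(\crt^1_m) \equivW \crt^1_{\min\{k,m\}}$ in both cases.

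The main obstacle is purely bookkeeping: making sure the two cases are patched together correctly and that the ``$f$ factors through $\Psi(x,\cdot) : \mathbf{m} \to \mathbf{k}$'' argument is stated so that it genuinely produces a \emph{reduction} (one must hand $\Psi(x,\cdot)$ to the outer witness, which is fine since it is computable from $x$) rather than just an inequality of degrees of concrete problems. There is no deep content beyond the already-proven $\operatorname{Fin}_k(\CCantor') \equivW \crt^1_k$ and the elementary facts about $\operatorname{Fin}_k$; I would keep the write-up to a few lines, citing the previous theorem for the $\CCantor'$ computation and Theorem~\ref{theo:rttkvstcnm} only to remind the reader why $\crt^1_k \not\equivW \crt^1_m$ when $k \neq m$, so that the $\min$ is not vacuous.
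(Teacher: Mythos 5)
Your proposal is correct and follows the route the paper intends: the upper bound comes from monotonicity of $\operatorname{Fin}_k$ together with $\crt^1_m \leqW \CCantor'$ and the computation $\operatorname{Fin}_k(\CCantor') \equivW \crt^1_k$, combined with the trivial bound $\operatorname{Fin}_k(f) \leqW f$ (which settles the case $k > m$), while the lower bound is immediate since $\crt^1_{\min\{k,m\}}$ has codomain contained in $\mathbf{k}$ and reduces to $\crt^1_m$. The only stylistic remark is that your third paragraph re-derives $\operatorname{Fin}_k(\crt^1_m) \leqW \crt^1_m$ by hand, which is already immediate from the definition of the $k$-finitary part as the greatest degree of a $\mathbf{k}$-valued problem \emph{reducible to} $\crt^1_m$, as you yourself note at the outset.
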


\subsubsection*{Sequential versus parallel composition}
We use a technical result asserting that the sequential composition of $\lpo'\times\tcn$ after $\cn$ can actually be computed by the parallel product of $\lpo'$, $\tcn^n$ and $\cn$.
As customary, for every problem $\weipr P$ we write $\weipr P^n$ to mean $\underbrace{\weipr P\times\dots\times\weipr P}_{n \text{ times}}$. To see that the lemma actually applies to $\cn$, we point out that $\cn \equivW \min\cn$, where $\min\cn$ is the tightening of $\cn$ asking for the minimal valid solution.

\begin{lemma}
  \label{lem:comp-to-paral-prod}
    For all $a, b \in \mathbb{N}$ and every singlevalued problem $\weipr P : \subseteq \baire \to \baire$ with $\weipr P\leqW\cn$, it holds that $((\lpo')^a\times \tcn^b) \star \weipr P\leqW (\lpo')^a\times \tcn^b \times\weipr P$.
\end{lemma}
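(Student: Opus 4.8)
The plan is to analyze how a Weihrauch reduction of a composition $((\lpo')^a \times \tcn^b) \star \weipr P$ actually consumes its input, and to observe that because $\weipr P$ is single-valued and reducible to $\cn$, the ``second half'' of the composition can only depend on the input through finitely much committed information. Concretely, given an input $x$ to the composition, the first stage applies $\weipr P$ (via the compositional product, paired with the identity so that $x$ is retained) and then, using the output $\weipr P(x)$ together with $x$, produces an instance $y = y(x, \weipr P(x))$ of $(\lpo')^a \times \tcn^b$; solving that instance and combining with the retained data yields a solution to the original problem. Since $\weipr P$ is single-valued, $\weipr P(x)$ is uniquely determined, so $y$ is a genuine function of $x$; the only obstacle to computing $y$ directly from $x$ is that we do not have $\weipr P(x)$ in hand. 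The key move is to use $\weipr P \equivW \min\cn$: we add a parallel copy of $\weipr P$ (equivalently $\cn$, equivalently $\min\cn$) to the right-hand side, run it on the appropriate $\cn$-instance derived from $x$, and feed its output into the computation of $y$. This is legitimate precisely because $\cn$ on the right-hand side of a parallel product is available ``simultaneously'' with the other factors, and because with $\min\cn$ the solution is canonical, so no further guessing is needed to reconstruct which branch of the $\weipr P$-computation we are on.

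First I would unwind the definition of the compositional product $f \star g$ from the preliminaries: a reduction $h \leqW f \star g$ amounts to computable $H_1, K_1, H_2, K_2$ such that on input $x$ one computes an $f$-instance $H_1(x)$, obtains $v \in f(H_1(x))$, computes a $g$-instance $H_2(x,v)$, obtains $w \in g(H_2(x,v))$, and outputs $K_2(x,v,w)$, with the intermediate $K_1$ used as needed; here $g = (\lpo')^a \times \tcn^b$ and $f = \weipr P$. Since $\weipr P$ is single-valued, write $v = \weipr P(H_1(x))$; then $H_2(x,v)$ is a fixed $(\lpo')^a \times \tcn^b$-instance determined by $x$. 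Second I would set up the target reduction to $(\lpo')^a \times \tcn^b \times \weipr P$: the $\weipr P$-coordinate receives the instance $H_1(x)$ and returns $v$; but we cannot yet form $H_2(x,v)$ to feed the $(\lpo')^a$ and $\tcn^b$ coordinates, because in a parallel product all instances must be produced up front from $x$ alone. Third — and this is the real work — I would explain how to sidestep this. Using $\weipr P \equivW \min\cn$, replace the $\weipr P$-coordinate by $\min\cn$ run on a suitable instance; but the asymmetry persists, so instead the right approach is to note that the $\tcn^b$ and $(\lpo')^a$ instances $H_2(x,v)$ can be produced as ``delayed'' instances that behave trivially until $v$ is known, then switch to the genuine $H_2(x,v)$ — this works for $\tcn$ because of its totality (an instance that enumerates all of $\bbN$ until some point and then an enumeration of a proper subset is still a valid $\tcn$-instance whose solution is a solution to the eventual target), and for $\lpo'$ because an $\lpo'$-instance is a sequence of eventually-stable bit-streams, and we can hold each stream constant until the switch, then redirect it, keeping every limit defined. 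So the single extra $\cn \equivW \weipr P$ factor pins down $v$, and the other factors, set up to ignore their input until $v$ stabilizes, then solve the true instance $H_2(x,v)$.

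I expect the main obstacle to be the ``delayed instance'' construction for $\lpo'$: unlike $\tcn$, where appending an enumeration of a proper subset after an arbitrary finite prefix is still a valid instance, an $\lpo'$-instance must have every coordinate sequence converge, so one must argue carefully that holding coordinate $i$ constant at some value and then, once the $\cn$-answer $v$ has appeared, overwriting it to track the $i$-th coordinate of $H_2(x,v)$ still yields a convergent sequence whose limit equals $p_i$ of the intended instance — and that the relabeling of coordinates (since $H_2(x,v)$ may want a different number-of-coordinates bookkeeping) can be absorbed by taking $a$ large enough or by a fixed computable rearrangement. The $\tcn^b$ part and the final output map $K_2$ should be routine once this is in place: $K_2$ takes $x$, the $\cn$-answer $v$ (which equals the true $\weipr P(H_1(x))$), and the solutions from the $(\lpo')^a$ and $\tcn^b$ coordinates, which are now solutions to $H_2(x,v)$, and returns the same value as in the original composition reduction. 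A final check is that using $\min\cn$ rather than $\cn$ is what guarantees $v$ is the canonical, hence computably-recognizable-as-stabilized, witness, so that the ``switch'' in the delayed coordinates is triggered unambiguously.
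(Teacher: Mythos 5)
Your overall strategy --- exploit the fact that $\tcn$- and $\lpo'$-instances tolerate arbitrary corrupted finite prefixes, and use single-valuedness of $\weipr P$ to guarantee that the parallel copy of $\weipr P$ returns the very value used to build the other instances --- is the paper's. But the mechanism you propose for the ``switch'' has a genuine flaw. In a parallel product $(\lpo')^a\times\tcn^b\times\weipr P$ the inner reduction witness must produce all three instances from $x$ alone; it has no access to the output of the $\weipr P$-coordinate, which is only handed to the outer witness afterwards. So you cannot ``feed the output of the parallel $\min\cn$ copy into the computation of $y$'', and your delayed instances cannot wait for ``the $\cn$-answer $v$ to appear''. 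Your fallback --- that $\min\cn$ makes $v$ ``computably-recognizable-as-stabilized'' --- is false: a finite-mindchange approximation to $\min\cn$ does stabilize, but no algorithm can certify at any finite stage that it has done so, since the current minimum may still be enumerated later.

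The fix, which is what the paper does, is to let the \emph{inner} reduction witness itself run a finite-mindchange computation of $v=\weipr P(H_1(x))$ --- available from $x$ alone precisely because $\weipr P\leqW\cn$. At every stage it extends the $(\lpo')^a\times\tcn^b$-instance according to its \emph{current} guess at $v$; each of the finitely many mindchanges merely leaves a finite garbage prefix on each coordinate, which is harmless. The paper makes this literal by passing to the prefix-invariant representatives $\sisfin$ and $\ect_{b+1}$ of $\lpo'$ and $\tcn^b$; your direct argument for $\lpo'$ (finitely many rewrites keep every coordinate sequence convergent with the intended limit) also works. There is no moment at which one ``knows'' $v$; one only knows the guess is eventually correct forever, and prefix-invariance makes that enough. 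Single-valuedness then guarantees, as you correctly note, that the parallel $\weipr P$-coordinate returns the same $v$ the guesses converged to, so the outer witness can assemble the final answer.
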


\begin{proof}
    The proof of $\tcn^* \equivW \ect$ in \cite[Theorem $9$]{comb-weak-ind-davis-hirschfeldt-hirst-pardo-pauly-yokoyama} actually shows that $\tcn^b \equivW \ect_{b+1}$, where $\ect_{b+1}$ is the restriction of $\ect$ to colourings with $b+1$ colours. We can thus prove the following instead:
    $$((\sisfin)^a\times \ect_b)\star\weipr P\leqW (\sisfin)^a\times \ect_b \times\weipr P$$
We observe that $\sisfin(p) = \sisfin(wp)$ for any $w \in \{0,1\}^*$, and if $n \in \ect_b(wp)$ for some $w \in \{0,1,\ldots,b-1\}^*$, then $n \in \ect(p)$. In other words, both principles have the property that adding an arbitrary prefix to an input is unproblematic. As we assume that $\weipr P \leqW \cn$, there is a finite mindchange computation that solves $\weipr P$.

In $((\sisfin)^a\times \ect_b)\star\weipr P$, we can run this finite mindchange computation to obtain the inputs for the $\operatorname{isFinite}$ and $\ect_b$-instances. Due to the irrelevance of prefixes mentioned above, the mindchanges have no problematic impact. Thus, we can actually apply $\sisfin$ and $\ect_b$ in parallel, which yields the desired reduction to $(\sisfin)^a\times \ect_b \times\weipr P$.

The singlevaluedness of $\weipr P$ makes sure that in the parallel execution we get the same solution from $\weipr P$ as the one used to compute the instances for $\sisfin$ and $\ect_b$.
\end{proof}

The following shows that the restriction to singlevalued $\weipr P$ is necessary in the statement of Lemma \ref{lem:comp-to-paral-prod}:

\begin{proposition}
$\lpo' \star \Cc 2 \nleqW \lpo' \times \Cc 2$
\begin{proof}
The problem $\lpo' \star \Cc 2$ is equivalent to ``given $p_0,p_1 \in 2^\Nn$ and non-empty $A \in \mathcal{A}(2)$, return $(i,\operatorname{isFinite}(p_i))$ for some $i \in A$.''. Let us denote this problem with $\weipr{BI}$. We will also use $\Cc 2 \times \sisfin$ instead of $\lpo' \times \Cc 2$ on the right hand side. We furthermore assume that $\mathcal{A}(2)$ is represented by $\psi : 2^\Nn \to \mathcal{A}(2)$ where $i \in \psi(p)$ iff $\exists \ell \ p(2\ell + i) = 1$.

First, we argue that $\weipr{BI} \leqW \Cc 2 \times \sisfin$ would imply $\weipr{ BI} \leqsW \Cc 2 \times \sisfin$. Let the outer reduction witness be $K : \subseteq (2^\Nn \times 2^\Nn \times 2^\Nn) \times (2 \times 2) \to (2 \times 2)$. Note that the inner reduction needs to produce inputs to $\Cc 2 \times \sisfin$ leading to all four values $(i,b) \in 2 \times 2$ -- otherwise, it would even show that $\lpo' \star \Cc 2 \leqW \lpo'$, which is known to be false for reasons of cardinality. Thus, there are prefixes $w_0,w_1$ and $0^k$ such that $K(w_0,w_1,0^k,0,0)$ converges. Restricting $\weipr{BI}$ to extensions of $w_0,w_1,0^k$ does not change its strong Weihrauch degree. We then look for extensions $w_0^1 \succ w_0$, $w_1^1 \succ w_1$ and $0^{k + \ell}$ such that $K(w_0^1,w_1^1,0^{k + \ell},0,1)$ converges, and do the same for the remaining two elements of $2 \times 2$. By restricting to extensions of those ultimate prefixes, we obtain an outer reduction witness that only depends on the $2 \times 2$-inputs, and thus witnesses a strong reduction.

Next, we disprove $\weipr{BI} \leqsW \Cc 2 \times \sisfin$. The outer reduction witness $K : 2 \times 2 \to 2 \times 2$ has to be a permutation (as all four values actually occur on the left). The inner reduction witness has to map any instance involving $0^\Nn$ as the last component to one involving $0^\Nn$ as the first component: If any prefix $(w_0,w_1,0^k)$ would lead to a $\Cc 2 \times \sisfin$-instance where the first component is not $\{0,1\}$, then by restricting to the extensions of such an input, we would obtain a reduction $\weipr{BI} \leqsW \sisfin$.

Let us consider what happens on an input $(p,p,0^\omega)$. As above, this gets mapped to some $(0^\Nn,q)$. We see that the first component of $K(i,b)$ can only depend on $b$. Moreover, as the inner reduction witness cannot map $p$s with finitely many $1$s to $q$s with infinitely many $1$s and vice versa, we actually find that the first component of $K(i,b)$ has to be $b$. Due to the symmetry of $0, 1 \in 2$, this leaves us with two candidates $K_1,K_2$ for the outer reduction witness we need to consider: $K_1(i,b) = (b,i)$ and $K_2(i,0) = (0,i)$, $K_2(i,1) = (1,1-i)$.

Next, we consider inputs of the form $(p_0,p_1,0^\Nn)$ satisfying that $\sisfin(p_0) = 1 - \sisfin(p_1)$. As above, the inner reduction witness will generate some instance $(0^\Nn,q)$. Depending on $q$, using either $K_1$ or $K_2$ as outer reduction witness yields either the answers $(0,0)$ and $(0,1)$; or the answers $(1,0)$ and $(1,1)$. However, the correct answers are either $(1,0)$ and $(0,1)$ or $(1,1)$ and $(0,0)$. Thus, both $K_1$ and $K_2$ fail, and we achieved the desired contradiction.
\end{proof}
\end{proposition}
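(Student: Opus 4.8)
The plan is to put both problems in convenient normal form, reduce to refuting a \emph{strong} reduction, and then win by a short finite analysis. Using $\lpo' \equivsW \sisfin$ together with the essentially trivial mind-change behaviour of $\Cc 2$, I would first show that $\lpo' \star \Cc 2$ is (strongly) equivalent to the problem $\weipr{BI}$: given $p_0, p_1 \in 2^\bbN$ and (a complement-enumeration of) a non-empty $A \subseteq \{0,1\}$, return $(i, \sisfin(p_i))$ for some $i \in A$. (A composition first solves the $\Cc 2$-instance to obtain some $i \in A$, and only afterwards builds and solves a $\sisfin$-instance, which may depend on $i$; the two strings $p_0,p_1$ carry the case distinction on $i$.) On the right, $\lpo' \times \Cc 2 \equivW \Cc 2 \times \sisfin$. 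So it suffices to prove $\weipr{BI} \nleqW \Cc 2 \times \sisfin$.

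Next I would argue that a hypothetical reduction $\weipr{BI} \leqW \Cc 2 \times \sisfin$ could be taken to be strong. All four values $(i,b) \in 2 \times 2$ are forced as outputs of $\weipr{BI}$ on suitable inputs (take $A = \{i\}$ and $p_i$ with finitely, resp.\ infinitely, many $1$s); hence the inner reduction's $\Cc 2 \times \sisfin$-instances must realise all four answer pairs, as otherwise the realizer of $\weipr{BI}$ would factor through a problem with at most three outcomes, which cannot surject onto the four required ones (equivalently: $\lpo' \star \Cc 2 \leqW \lpo'$, impossible on cardinality grounds). One can then carve out nested prefixes along which the outer witness has committed for each of the four answer pairs — keeping the $\Cc 2$-component of the prefix all zeros, so that $A$ remains free to be a singleton or $\{0,1\}$ — and restrict $\weipr{BI}$ to the corresponding (degree-preserving) sub-family; the outer witness then depends only on the $2 \times 2$-input. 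So it suffices to prove $\weipr{BI} \nleqsW \Cc 2 \times \sisfin$.

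For the finite analysis, the outer witness is now a map $K : 2 \times 2 \to 2 \times 2$, which is surjective (all four outputs occur), hence a bijection. On any $\weipr{BI}$-input whose $\Cc 2$-component is $0^\bbN$ (so $A = \{0,1\}$), the inner witness must output a $\Cc 2$-instance again coding $\{0,1\}$: otherwise restricting to the extensions of a committing prefix (again with all-zero $\Cc 2$-component, hence degree-preserving) pins the $\Cc 2$-answer and yields $\weipr{BI} \leqsW \sisfin$, contradicting that $\weipr{BI}$ has four outcomes. Now feed the inputs $(p, p, 0^\bbN)$, whose only solutions are $(0, \sisfin(p))$ and $(1, \sisfin(p))$: the $\Cc 2$-answer is unconstrained here, so the finiteness-coordinate of $K$ depends only on the $\sisfin$-answer, and in fact equals it — a constant dependence could not track the varying value $\sisfin(p)$, and a complementing dependence would force the continuous map $p \mapsto q$ (the $\sisfin$-instance on such inputs) to pull back the $\Pi^0_2$ set of strings with infinitely many $1$s to the $\Sigma^0_2$-complete set of strings with finitely many $1$s, which is impossible. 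Thus $K(\gamma, \delta)$ always has finiteness-coordinate $\delta$.

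To finish, run the reduction on an input $(p_0, p_1, 0^\bbN)$ with $\sisfin(p_0) \neq \sisfin(p_1)$. The inner witness outputs a $\Cc 2$-instance coding $\{0,1\}$ together with some $\sisfin$-instance $q$, so $\Cc 2$ may return either of $0,1$ while $\sisfin$ returns $\delta = \sisfin(q)$; both $K(0,\delta)$ and $K(1,\delta)$ must then lie in $\{(0,\sisfin(p_0)),(1,\sisfin(p_1))\}$, and each has finiteness-coordinate $\delta$ — but exactly one of these two solutions has finiteness-coordinate $\delta$, so $K(0,\delta) = K(1,\delta)$, contradicting injectivity of $K$. The main obstacle is the rigidity step in the third paragraph: one has to show that the single $\sisfin$-query, built in parallel with the $\Cc 2$-call, cannot encode the finiteness bit of whichever of $p_0, p_1$ the $\Cc 2$-answer will eventually select — this entanglement (commit to $i$, \emph{then} the $\sisfin$-question depends on $i$) is precisely what separates $\star$ from $\times$; the bookkeeping needed to pass to a strong reduction is routine but must be checked to be degree-preserving.
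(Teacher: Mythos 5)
Your proposal is correct and follows essentially the same route as the paper: the same reformulation as $\weipr{BI}$ versus $\Cc 2 \times \sisfin$, the same upgrade from a weak to a strong reduction via nested committing prefixes, and the same rigidity analysis on inputs $(p,p,0^\bbN)$ forcing the finiteness-coordinate of the outer witness to equal the $\sisfin$-answer. The only (harmless, arguably cleaner) deviation is the endgame: where the paper enumerates the two surviving candidates $K_1,K_2$ and refutes each, you derive $K(0,\delta)=K(1,\delta)$ directly and contradict injectivity.
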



\subsection{Green theory}

Green theory is concerned with analysing the structure of ideals of finite
semigroups, be they one-sided on the left or right or even two-sided.
This gives rise to a rich structure to otherwise rather inscrutable algebraic
properties of finite semigroups.
We will need only a few related results, all of them relying on
the definition of the \emph{Green preorders} and of idempotents
(recall that an element $s$ of a semigroup is idempotent when $ss = s$).

\begin{definition}
\label{def:green}
  For a semigroup $(S, \cdot)$, define the \emph{Green preorders} as follows:
  \vspace{-0.5em}
  \[
    \begin{array}{c l c l r}
 \quad\bullet \;\; &  s \le_{\GrR} t & \text{  if and only if  } &  s=t \text{ or } s \in t  S=\{t  a ~\colon a\in S\} & \scriptsize\text{(suffix order)} \\
 \quad\bullet \;\; &  s \le_{\GrL} t & \text{  if and only if  } & s=t \text{ or } s \in S  t=\{a  t ~\colon a\in S\} & \scriptsize\text{(prefix order)} \\
 \quad\bullet \;\; &  s \le_{\GrH} t & \text{  if and only if  } & s \le_{\GrR} t \text{ and }s \le_{\GrL} t & \\
 \quad\bullet \;\; &  s \le_{\GrJ} t & \text{  if and only if  } &
                       \multicolumn{2}{l}{
                       s \le_{\GrR} t \text{ or } s \le_{\GrL} t \text{ or }
                       s \in S t  S = \{ a  t  b ~ \colon (a,b) \in S^2\}
                                                                   } \\
      & & & & \scriptsize{\text{(infix order)}}

    \end{array}
\vspace{-0.5em}
\]
The associated equivalence relations are written $\GrR$, $\GrL$, $\GrH$, $\GrJ$; their equivalence classes are called respectively $\GrR$, $\GrL$, $\GrH$, and $\GrJ$-classes.
\end{definition}




We conclude this section reporting, without proof, three technical lemmas that will be needed in \Cref{sec:artq} and \ref{sec:artn}. Although not proved in second-order
arithmetic originally, it is clear that their proofs go through in $\rca$:
besides straightforward algebraic manipulations, they only rely on the existence,
for each finite semigroup $(S, \cdot)$, of an index $n \in \bbN$ such that $s^n$ is idempotent for any $s \in S$.

\begin{lemma}[{\cite[Proposition~A.2.4]{PinPerrin}}]
\label{lem:Hidemgroup}
If $(S, \cdot)$ is a finite semigroup, $H\subseteq S$ an $\GrH$-class, and some $a,b\in H$ satisfy $a \cdot b\in H$ then for some $e\in H$ we know that $(H, \cdot, e)$ is a group.
\end{lemma}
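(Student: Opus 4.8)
The plan is to obtain the statement from Green's Lemma, working in the monoid $S^1$ obtained by freely adjoining an identity to $S$ (this is standard and harmless: on $S$, the Green preorders of $S^1$ agree with the ones defined above). Recall Green's Lemma in the following form: if $s \GrR t$ and $u, v \in S^1$ are chosen with $su = t$ and $tv = s$, then the right translations $\rho_u : x \mapsto xu$ and $\rho_v : x \mapsto xv$ are mutually inverse bijections between the $\GrL$-classes $L_s$ and $L_t$, each sending every element $x$ to an element $\GrR$-equivalent to $x$; dually for the left translations $\lambda_u : x \mapsto ux$ when $s \GrL t$. (Finiteness of $S$ plays no role below.) From this I would first extract the bootstrap observation $(\ast)$: \emph{if $h$ lies in the $\GrH$-class $H$, $s \in S^1$ and $hs \in H$, then $\rho_s$ restricts to a bijection of $H$ onto itself}; dually, if $sh \in H$ then $\lambda_s$ restricts to a bijection of $H$. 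Indeed, $h \GrH hs$ gives $h \GrR hs$ and $L_h = L_{hs}$, so picking $v$ with $hsv = h$, Green's Lemma makes $\rho_s$ a $\GrR$-class-preserving bijection of $L_h = L_{hs}$ onto itself; since $H = L_h \cap R_h$ and $x \GrR xs$ for each $x \in L_h$, this bijection carries $H$ onto $H$.

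Next I would apply $(\ast)$ to the hypothesis. From $a, b, ab \in H$, both $\rho_b$ and $\lambda_a$ restrict to bijections of $H$. Let $e \in H$ be the unique element with $eb = b$. Since $eb = b \in H$, the dual of $(\ast)$ shows $\lambda_e$ restricts to a bijection of $H$, so $ee = \lambda_e(e) \in H$; as $\rho_b(ee) = (ee)b = e(eb) = eb = \rho_b(e)$ and $\rho_b$ is injective on $H$, we get $ee = e$, so $e$ is idempotent. Moreover, as a self-map of $H$, $\lambda_e$ is a bijection with $\lambda_e \circ \lambda_e = \lambda_{ee} = \lambda_e$, hence $\lambda_e = \id_H$, that is $ex = x$ for all $x \in H$; symmetrically $ee = e \in H$ lets $\rho_e$ restrict to a bijection of $H$, giving $xe = x$ for all $x \in H$.

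Finally, for any $z \in H$ we have $ez = z \in H$ and $ze = z \in H$, so by $(\ast)$ and its dual both $\rho_z$ and $\lambda_z$ restrict to bijections of $H$. That they map $H$ into $H$ shows $H$ is closed under $\cdot$, hence a submonoid of $S$ with identity $e$; surjectivity provides, for each $z \in H$, elements $w, w' \in H$ with $wz = e = zw'$, whence $w = w(zw') = (wz)w' = w'$ is a two-sided inverse of $z$. Thus $(H, \cdot, e)$ is a group. I expect the only genuine difficulty to lie in the setup: stating Green's Lemma with the correct ``$\GrR$-class-preserving'' clause and verifying carefully that the relevant translations really restrict to self-bijections of the $\GrH$-class, i.e.\ establishing $(\ast)$; everything after that is routine semigroup bookkeeping.
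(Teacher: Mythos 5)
Your proof is correct; note that the paper does not actually prove this lemma but imports it verbatim from Pin--Perrin (Proposition A.2.4), and your argument --- establishing via Green's Lemma that $\rho_b$ and $\lambda_a$ restrict to self-bijections of $H$, extracting the idempotent $e$, and then showing each $\rho_z,\lambda_z$ for $z\in H$ is a self-bijection of $H$ to get closure and two-sided inverses --- is precisely the standard proof of this ``location theorem'' found in that reference. The only ingredient you assume rather than prove is Green's Lemma itself, which you state in the correct $\GrR$-class-preserving form and which is an elementary two-line verification, so the argument is complete.
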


\begin{lemma}[{\cite[Corollary~A.2.6]{PinPerrin}}]
  \label{lem:leRJR}
For any pair of elements $x, y \in S$ of a finite semigroup, if we have $x \le_{\GrR} y$ and $x, y$ $\GrJ$-equivalent,
then $x$ and $y$ are also $\GrR$-equivalent.
\end{lemma}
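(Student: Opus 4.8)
This is the classical \emph{stability} property of finite semigroups, specialised to one pair of Green relations, and rather than merely quoting \cite{PinPerrin} I would reprove it directly via idempotent powers. Write $S^1 = S \cup \{1\}$ for the monoid obtained by adjoining a fresh identity, so that $x \le_{\GrR} y$ is exactly the statement $x \in y S^1$. If $x = y$ there is nothing to prove, so fix $t \in S^1$ with $x = yt$. Since $yt \le_{\GrR} y$ holds unconditionally, it suffices to produce $z \in S^1$ with $y = ytz$: then $y \le_{\GrR} yt = x$, and together with the hypothesis $x \le_{\GrR} y$ this gives $x \GrR y$.

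To find such a $z$, I would unfold the hypothesis that $x$ and $y$ are $\GrJ$-equivalent. In particular $y \le_{\GrJ} x = yt$, so there are $u, v \in S^1$ with $y = u(yt)v = u\,y\,(tv)$. Put $w = tv \in S^1$; iterating the identity $y = uyw$ yields $y = u^n y w^n$ for every $n \ge 1$. Now invoke the only non-elementary ingredient — the one flagged in the remark preceding the lemma: choose $n$ with $e := w^n$ idempotent. From $y = u^n y e$ and $e = e^2$ we obtain $y = u^n y e^2 = (u^n y e)\,e = y e$.

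It then remains to expand $e$. Since $w = tv$, associativity gives $w^{n-1} t = (tv)^{n-1} t = t(vt)^{n-1}$, so
\[
  y \;=\; ye \;=\; y\,w^{n-1}\,(tv) \;=\; y\,t\,(vt)^{n-1}\,v \;=\; (yt)\,z, \qquad z := (vt)^{n-1} v \in S^1 ,
\]
which is exactly what was needed, since $yt = x$ shows $y \in x S^1$. The degenerate cases $n = 1$, or $u = 1$, or $v = 1$, are all harmless.

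The argument is short, and the only place calling for care is the bookkeeping with the adjoined identity together with the exponent rearrangement $w^{n-1} t = t(vt)^{n-1}$ (a plain associativity computation); the genuine content is the single appeal to finiteness producing the idempotent $e = w^n$ and hence the collapse $y = u^n y e = y e$. As the surrounding text notes, everything here is straightforwardly formalisable in $\rca$.
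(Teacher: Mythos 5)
Your proof is correct. The paper itself states this lemma without proof, merely citing \cite[Corollary~A.2.6]{PinPerrin}, and the remark preceding it only promises that the argument uses ``straightforward algebraic manipulations'' plus the existence of idempotent powers in a finite semigroup --- which is precisely the shape of your argument (it is the standard stability proof: from $y = uyw$ derive $y = u^n y w^n$ with $w^n$ idempotent, collapse to $y = ye$, and unpack $e = w^{n-1}tv$ via $w^{n-1}t = t(vt)^{n-1}$ to exhibit $y \in xS^1$). The edge cases you flag ($x=y$, $n=1$, $u$ or $v$ equal to the adjoined identity) are indeed all harmless, and the argument visibly formalises in $\rca$ as the paper claims.
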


\begin{lemma}[{\cite[Corollary~A.2.6]{PinPerrin}}]
\label{lem:leLRH}
For every finite semigroup $S$ and $s, t \in S$, $s \le_{\GrL} t$ and $s \GrR t$ implies $s \GrH t$.
\end{lemma}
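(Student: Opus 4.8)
The plan is to prove the statement by reducing it to the single claim that $t \le_{\GrL} s$: indeed $s \le_{\GrL} t$ is one of the hypotheses, so establishing $t \le_{\GrL} s$ yields $s \GrL t$, and combining this with the hypothesis $s \GrR t$ gives $s \GrH t$ directly from the definition of $\GrH$ as $\GrR$-equivalence together with $\GrL$-equivalence. Throughout I would work in $S^1$, i.e.\ $S$ with an identity $1$ adjoined, which harmlessly absorbs the degenerate ``$s = t$'' clauses in Definition~\ref{def:green}.

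First I would unfold the hypotheses: from $s \le_{\GrL} t$ pick $a \in S^1$ with $s = at$, and from $s \GrR t$ (using in particular $t \le_{\GrR} s$) pick $b \in S^1$ with $t = sb$. Substituting the first into the second gives $t = atb$, and an obvious induction gives $t = a^n t b^n$ for every $n \ge 1$. Now I would invoke the only non-trivial ingredient, the one flagged by the authors just before the lemma: fix an exponent $\omega$ such that $x^{\omega}$ is idempotent for every $x \in S$, and put $e = a^{\omega}$. Taking $n = \omega$ we get $t = e t f$ with $f = b^{\omega}$, and then, using $e^2 = e$, a two-line reassociation gives $et = e(etf) = (ee)(tf) = e(tf) = etf = t$, i.e.\ $et = t$.

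It remains to read off $t \le_{\GrL} s$. Since $e = a^{\omega}$ clearly lies in $S^1 a$ — write $e = ca$ with $c \in S^1$ (take $c = a^{\omega-1}$, or $c = 1$ if $\omega = 1$) — we get $t = et = c(at) = cs$, so $t \in S^1 s$ and hence $t \le_{\GrL} s$, as required. I do not expect a genuine obstacle here: the argument is routine Green-theory bookkeeping, and the only points needing a little care are keeping the $S$ versus $S^1$ distinction straight and correctly extracting $et = t$ from $t = etf$. As an alternative, one can sidestep the computation entirely: $s \GrR t$ implies $s \GrJ t$ (as $\le_{\GrR}$ refines $\le_{\GrJ}$), and then the left–right dual of Lemma~\ref{lem:leRJR} — obtained by applying that lemma in the opposite semigroup $S^{\mathrm{op}}$, where $\GrL$ and $\GrR$ are interchanged while $\GrJ$ is unchanged — applied to $s \le_{\GrL} t$ and $s \GrJ t$ yields $s \GrL t$, whence $s \GrH t$.
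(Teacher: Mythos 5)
Your proof is correct. The paper itself offers no proof of this lemma (it is quoted from Pin--Perrin, Corollary~A.2.6, and the authors explicitly state they report it without proof), so there is nothing to compare against; your argument --- unfold $s=at$, $t=sb$ to get $t=a^ntb^n$, pass to an idempotent power $e=a^{\omega}$ to obtain $et=t$ and hence $t\le_{\GrL}s$ --- is exactly the kind of ``straightforward algebraic manipulation relying only on idempotent powers'' the authors allude to, and your alternative route via the dual of Lemma~\ref{lem:leRJR} in the opposite semigroup is also sound.
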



\section{The shuffle principle and related problems}
\subsection{The shuffle principle in reverse mathematics}
\label{subsec:revmath-shuffle}
We start by giving a proof\footnote{From Leszek A. Ko{\l}odziejczyk, personal
communication.} of the shuffle principle in $\rca+\sztind$, since, in a way, it gives a clearer picture of some properties of shuffles that we use in the rest of the paper.

\begin{lemma}
  \label{lem:sztind-shuffle}
    $\rca+\sztind\vdash\weishuffle$
\end{lemma}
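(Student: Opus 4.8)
The plan is built on one observation: a $c$-shuffle interval is exactly an interval $J$ whose set of occurring colours $c(J) = \{i < k : \exists x \in J,\ c(x) = i\}$ is $\subseteq$-minimal among all intervals, and the existence of such a $J$ is precisely what the least-number principle for a suitable $\Sigma^0_2$-formula — a standard consequence of $\sztind$ over $\rca$ — delivers.

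First I would establish, in plain $\rca$, the structural fact about shuffles that the footnote alludes to: if $J = ]u,v[$ is an interval such that $c(J') = c(J)$ for every subinterval $J' \subseteq J$, then $J$ is a $c$-shuffle. Indeed, the decomposition of $J$ into its colour classes $D_i = \{x \in J : c(x) = i\}$ (for $i < k$) witnesses this: the $D_i$ partition $J$ and each is monochromatic by construction, and whenever $D_i \ne \emptyset$ it is dense in $J$, since for any $x < y$ in $J$ the subinterval $]x,y[$ has colour set $c(J) \ni i$ and hence contains a point of colour $i$. Producing this decomposition and verifying these properties is routine bookkeeping in $\rca$.

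Now consider the formula
\[
  \chi(n) \ :\equiv\ \exists u\ \exists v\ \bigl(u < v \ \wedge\ \card{c(]u,v[)} \le n\bigr).
\]
Since ``$\card{c(]u,v[)} \ge n+1$'' is $\Sigma^0_1$ (it asks for $n+1$ pairwise distinct colours, each witnessed by some point of $]u,v[$), the matrix ``$\card{c(]u,v[)} \le n$'' is $\Pi^0_1$, so $\chi$ is $\Sigma^0_2$ with $c$ as a parameter. Moreover $\chi(k)$ holds — any interval uses at most the $k$ available colours — while $\chi(0)$ fails, since every nonempty open interval of rationals contains a point, which has a colour. By $\sztind$, hence by the least-number principle for $\Sigma^0_2$-formulas, there is a least $n_0$ with $\chi(n_0)$, and $1 \le n_0 \le k$; fix an interval $J_0 = ]u_0,v_0[$ with $\card{c(J_0)} \le n_0$.

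Finally I claim $J_0$ is a $c$-shuffle, and this proves $\weishuffle$. Let $J' \subseteq J_0$ be any subinterval; then $c(J') \subseteq c(J_0)$. Were the inclusion proper we would have $\card{c(J')} < \card{c(J_0)} \le n_0$, so $J'$ would witness $\chi(n_0 - 1)$, contradicting the minimality of $n_0$; hence $c(J') = c(J_0)$ for every subinterval $J'$ of $J_0$, and by the lemma of the second paragraph $J_0$ is a $c$-shuffle. The only step that goes beyond $\rca$ is the extraction of the least $n_0$ — equivalently, of an interval with $\subseteq$-minimal colour set — which is exactly one instance of the $\Sigma^0_2$ least-number principle; so $\sztind$ suffices (and, as the converse direction shows, is also necessary). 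The main thing one has to see is the reformulation of ``$c$-shuffle'' as minimality of the colour set; once that is in place, the rest of the argument is essentially forced.
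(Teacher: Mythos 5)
Your proof is correct and is essentially the paper's own argument: both apply the $\Sigma^0_2$ least-number principle to "there is an interval using at most $n$ colours" and then observe that any subinterval of the minimizing interval must use exactly the same colour set, which forces every occurring colour to be dense. The only cosmetic difference is that the paper quantifies over an explicit finite set $L$ of colours covering the interval rather than over the cardinality of the occurring-colour set directly.
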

\begin{proof}
    Let $c: \bbQ \to n$ be a colouring of the rationals with $n$ colours.
For any natural number $k$, consider the following $\Sigma^0_2$ formula $\varphi(k)$:
``there exists a finite set $L \subseteq n$ of cardinality $k$ and there exist $u, v \in \bbQ$ with $u < v$ such that $c(w) \in L$ for every $w \in \ooint{u,v}$''. 
Since $\varphi(n)$ is true, it follows from the $\Sigma^0_2$ minimization principle that there exists a minimal $k$ such that $\varphi(k)$ holds. Consider $u,v \in \bbQ$ and the set of colours $L$ corresponding to this minimal $k$. We now only need to show that
$\ooint{u,v}$ is a $c$-shuffle to conclude.

Let $a = c(x)$ for some $x \in \ooint{u,v}$. We need to prove that $a$ occurs densely in $\ooint{u,v}$. Consider arbitrary $x, y \in \ooint{u,v}$ with $x < y$. We are done if we show that there exists some $w \in \ooint{x,y}$ with $c(w) = a$.
So, suppose that there is no such $w$. Consider the set of colours $L\setminus \{a\}$
Clearly, $\varphi(\card{L\setminus \{a\}})$ holds. However, $\card{L\setminus \{a\}}<\card{L}=k$,
contradicting the choice of $k$ as the minimal number such that $\varphi(k)$ holds.

\end{proof}

The proof above shows an important feature of shuffles: given a certain interval $]u,v[$, any of its subintervals having the fewest colours is a shuffle.
Interestingly, the above implication reverses, so we have the following equivalence.

\begin{theorem}
\label{thm:revmath-shuffle}
     Over $\rca$, $\weishuffle$ is equivalent to $\sztind$.
\end{theorem}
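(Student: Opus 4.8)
The plan is to prove the reversal $\weishuffle \vdash \sztind$ over $\rca$ by way of \Cref{lem:ect-ind}, i.e.\ it suffices to derive $\ect$ from $\weishuffle$ (working in $\rca$). So suppose we are given a colouring $f : \bbN \to n$ with finitely many colours; we must produce a bound $b$ such that the palette $\{f(x) : x > b\}$ is constant. First I would transfer the problem to $\bbQ$: fix a computable order-isomorphism between $\bbN$ and a cofinal increasing sequence $q_0 < q_1 < \dots$ of rationals inside $]0,1[$ converging to $1$, and colour each rational $r \in ]0,1[$ by the value $f(i)$ where $q_i$ is the largest element of the sequence with $q_i \le r$ (and colour $]{-\infty},0[$, say, by $f(0)$ and everything $\ge 1$ arbitrarily). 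Call this colouring $c : \bbQ \to n$.

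The key observation is that the ``limsup palette'' of $f$ is visible inside any shuffle of $c$ that sits arbitrarily close to $1$. Concretely, apply $\weishuffle$ to $c$ to obtain an interval $I = {]x,y[}$ that is a $c$-shuffle, and let $C = c(I)$ be the (finite) set of colours occurring densely in $I$. I would first argue that without loss of generality $y > 1$ — or more carefully, I would not apply $\weishuffle$ to $c$ directly but to a modification ensuring the shuffle interval reaches past every $q_i$; the cleanest route is: for the \emph{fixed} colouring $c$ above, a shuffle $]x,y[$ either lies entirely below some $q_m$ (in which case $c$ restricted to $]x,y[$ only sees colours $f(i)$ for $i < m$, and we can recurse inside $]q_m, 1[$), or it contains points $q_i$ for arbitrarily large $i$. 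Since there are only finitely many colours, iterating this finitely many times — using $\Sigma^0_1$-induction, which is available in $\rca$, on the number of colours — produces a shuffle interval $]x,y[$ with $q_i \in {]x,y[}$ for cofinally many $i$, so that $C$ is exactly $\{ c(r) : r \in {]x,y[} \}$ and every colour in $C$ occurs on $q_i$'s with arbitrarily large index.

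From such a shuffle I would extract the bound as follows. For each colour $j \notin C$: since $j$ does not occur (at all) in $]x,y[$, and the $q_i$ are cofinal in $]x,y[$ from below on that tail, there is $b_j$ with $f(i) \ne j$ for all $i$ with $q_i \in {]x,y[}$, i.e.\ for all sufficiently large $i$; take $b_j$ to be such an index. Conversely, for each colour $j \in C$: density of $j$ in $]x,y[$ forces $f(i) = j$ for infinitely many $i$. Let $b = \max\{ b_j : j \notin C\} \cup \{ i_0 \}$ where $q_{i_0} \ge x$; then for every $x' > b$ the palette $\{f(x'') : x'' > x'\}$ equals $C$ exactly — it is contained in $C$ by choice of the $b_j$, and contains every element of $C$ by the infinitely-often clause. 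This $b$ is the desired $\ect$-bound, and the extraction used only bounded search plus $\Sigma^0_1$-induction over $n$, so it goes through in $\rca$.

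The main obstacle is the first step of the second paragraph: a raw shuffle interval handed to us by $\weishuffle$ need not approach the ``end'' $1$ of our embedded copy of $\bbN$, so its colour set need not capture the limsup palette of $f$ at all — it might be a shuffle of some bounded initial segment. The fix is the finite recursion described above, bounded by the number of colours and hence formalizable with $\Sigma^0_1$-induction; I would take care to phrase this recursion so that at each stage we either succeed or strictly shrink the set of available colours, guaranteeing termination after at most $n$ steps. Everything else — the embedding of $\bbN$ into $\bbQ$, the definition of $c$, and the final bounded search for $b$ — is routine and visibly available in $\rca$.
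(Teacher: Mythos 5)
Your overall strategy is the right one --- derive $\ect$ from $\weishuffle$ over $\rca$ and invoke \Cref{lem:ect-ind} --- and it is exactly the route the paper points to (it defers the reversal to the argument of \Cref{lem:ect-leq-ishuffle}). The problem is your encoding. Colouring $\bbQ$ in blocks, with every rational in $[q_i,q_{i+1})$ receiving the colour $f(i)$, makes $c$ locally constant away from the points $q_i$. Consequently, if an interval $I$ contains pieces of two blocks carrying different colour values, neither value is dense in $I$ (there are no points of the first value inside the second block), so $I$ is not a shuffle: the $c$-shuffles are exactly the intervals on which $c$ is constant. Thus $\weishuffle$ may legitimately answer $]q_0,q_1[$, and the second horn of your dichotomy --- a shuffle containing $q_i$ for arbitrarily large $i$ --- can only occur when $f$ is already eventually constant, in which case its colour set is a singleton rather than the limsup palette. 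Your proposed fix does not repair this: recursing into $]q_m,1[$ does not shrink the set of colours available there (it is still $\{f(i) : i \ge m\}$, possibly all $n$ of them), so the announced ``$\Sigma^0_1$-induction on the number of colours'' has nothing to induct on, and the recursion can step through monochromatic blocks forever. More fundamentally, determining which colours recur cofinally is the entire content of $\ect$, so it cannot be obtained by a search whose length is bounded in advance by $n$; and even granting termination, chaining several applications of $\weishuffle$ is a composition that your single-axiom argument would still have to justify inside $\rca$.

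The repair is to interleave the colours so that each value of $f$ is spread across all of $\bbQ$ at a resolution depending on its argument, as in \Cref{lem:ect-leq-ishuffle}: set $c(\tfrac{a}{b}) = f(b)$ for reduced fractions. Then a colour taken by $f$ at infinitely many arguments is dense in every interval, whereas a colour taken only finitely often lies in a finite union of discrete sets, hence is nowhere dense and cannot occur in any shuffle at all. A single application of $\weishuffle$ then suffices, with no iteration: the length of the returned interval yields, by a bounded search, a stage $b$ past which every colour of $f$ must recur, which is the required $\ect$-bound. Your final step --- reading the bound off the colour set of the shuffle, using that colours outside it die out and colours inside it recur --- is sound once the shuffle is guaranteed to see exactly the recurring colours; with the block encoding it never is.
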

We do not offer a proof of the reversal here; such a proof can easily be done
by taking inspiration from the argument we give for \Cref{lem:ect-leq-ishuffle}.

With this equivalence in mind, we now introduce Weihrauch problems corresponding to $\weishuffle$,
beginning with the stronger one.

\begin{definition}
\label{def:wei-shuffle}
  We regard $\sshuffle$ as the problem with instances $(k,c)\in \bbN\times\baire$ such that $c:\bbQ\to k$ is a colouring of the rationals with $k$ colours, and such that, for every instance $(k,c)$, for every pair $(u,v)\in [\bbQ]^2$ and for every $C\subseteq k$, $(u,v,C)\in\sshuffle(k,c)$ if and only if $]u,v[$ is a $c$-shuffle for the colours in $C$.
\end{definition}

Note that the output of $\sshuffle$ contains two components that cannot be easily computed from one another.
It is very natural to split the principles into several problems, depending on the type of solution that we want to be given: one problem will output the \emph{colours} of a shuffle, whereas another will output the \emph{interval}. As we will see, the strength of these two versions of the same principle have very different uniform strength.

\begin{definition}
\label{def:wei-weak-shuffle}
  $\ishuffle$ (``i'' for ``interval'') is the same problem as $\sshuffle$ save for the fact that a valid output only contains the interval $]u,v[$ which is a $c$-shuffle.
  Complementarily, $\cshuffle$ (``c'' for ``colour'') is the problem that only outputs a possible set of colours taken by a $c$-shuffle.
\end{definition}

We will first start analysing the weaker problems $\cshuffle$ and $\ishuffle$
and show they are respectively equivalent to $(\lpo')^*$ and $\tcn^*$.
This will also imply that $\sshuffle$ is stronger than $(\lpo')^* \times \tcn^*$,
but the converse will require an entirely distinct proof.

Our definitions include the number of colours as part of the input. We discuss in Section \ref{sec:colours} how the variants with an unknown but still finite number of colours relate to the versions we focus on.

\subsection{Weihrauch complexity of the weaker shuffle problems}

We first provide a classification of $\cshuffle$, by gathering a few lemmas. The first also applies to $\ishuffle$ and $\sshuffle$.

\begin{lemma}\label{lem:cshuffle-finparall}
    $\cshuffle^*\equivW\cshuffle$, $\ishuffle^*\equivW\ishuffle$ and $\sshuffle^*\equivW\sshuffle$.
\end{lemma}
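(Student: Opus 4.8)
The plan is to prove that each of $\sshuffle$, $\ishuffle$, $\cshuffle$ absorbs finite parallelization, i.e.\ that $\weipr P^* \equivW \weipr P$ for $\weipr P$ being any of these three problems. The direction $\weipr P \leqW \weipr P^*$ is trivial in each case, so the content is in $\weipr P^* \leqW \weipr P$. The key idea is that we can combine finitely many colourings of $\bbQ$ into a single one by taking a "product colouring" and then decode: given instances $(k_0,c_0),\dots,(k_{m-1},c_{m-1})$ of, say, $\sshuffle$, form the colouring $c : \bbQ \to \prod_{j<m} k_j$ by $c(q) = (c_0(q),\dots,c_{m-1}(q))$, which uses $K = \prod_{j<m} k_j$ colours and is computable (uniformly) from the inputs.

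First I would verify the crucial observation that a single shuffle for $c$ simultaneously yields shuffles for all the $c_j$. If $]u,v[$ is a $c$-shuffle with $c(]u,v[) = C \subseteq K$, let $D_i$ ($i \in C$) be the witnessing dense partition with $c([D_i]) = \{i\}$ — but for singletons this just says $c$ is constant equal to $i$ on $D_i$, and the $D_i$ partition $]u,v[$ into dense pieces. Then for each coordinate $j < m$, the map $\pi_j : C \to k_j$ (projection to the $j$-th component) induces a coarser partition: group the $D_i$ by the value of $\pi_j(i)$. Each block of this coarser partition is a finite union of dense sets, hence dense, and $c_j$ is constant on it. Therefore $]u,v[$ is also a $c_j$-shuffle, with $c_j(]u,v[) = \pi_j(C) =: C_j$. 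So from the single output $(u,v,C)$ of $\sshuffle(K,c)$ we compute the tuple of outputs $\big((u,v,C_0),\dots,(u,v,C_{m-1})\big)$ for $\sshuffle^*$, the $C_j$ being obtained by applying the (computable) projections to $C$. This handles $\sshuffle$.

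For $\ishuffle$ the same construction works and is even simpler: the output of $\ishuffle(K,c)$ is just the interval $]u,v[$, and by the observation above this same interval is a $c_j$-shuffle for every $j$, so we return $m$ copies of it. For $\cshuffle$ the input side is identical (form $c : \bbQ \to K$), and the output of $\cshuffle(K,c)$ is a set $C \subseteq K$ which is $c(]u,v[)$ for some $c$-shuffle $]u,v[$; we then return $\big(\pi_0(C),\dots,\pi_{m-1}(C)\big)$, and by the argument above each $\pi_j(C)$ is a valid $\cshuffle(k_j,c_j)$-output witnessed by the very same interval. In all three cases both the inner reduction (building $(K,c)$ from the tuple of instances, which requires reading the number $m$ of instances and the $k_j$ from the input of $\weipr P^*$) and the outer reduction (applying projections) are computable, giving the desired Weihrauch — indeed strong Weihrauch — reductions.

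The main thing to get right, rather than a genuine obstacle, is the bookkeeping around the "densely homogeneous" condition: one must check that a finite union of dense subsets of an interval is again dense (immediate from the definition of dense) and that coarsening the partition preserves monochromaticity of the singleton sets $[D_i]^1$ (immediate, since a subfamily being constant is inherited by unions only when the constant values agree, which is exactly how we grouped them). Since the paper's definition of $c$-shuffle is phrased via $c$-densely homogeneous intervals for colourings of $\bbQ$ with $n=1$, I would just unfold that definition once and note these two facts. No deep ingredient is needed; this lemma is purely a packaging argument and is the reason finite parallelization is harmless for all the shuffle-type problems.
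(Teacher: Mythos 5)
Your proposal is correct and follows essentially the same route as the paper: both form the product colouring $c(q) = \langle c_0(q),\dots,c_{m-1}(q)\rangle$ and recover each answer by projecting the colour set and reusing the same interval (the paper states the binary case and notes a computable instance to handle the empty tuple, which your empty-product colouring also covers). Your explicit check that coarsening the dense partition along a projection again yields a shuffle is exactly the observation the paper relies on implicitly.
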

\begin{proof}
  Without loss of generality, let us show that we have $\cshuffle\times \cshuffle \leqW \cshuffle$, $\ishuffle \times \ishuffle \leqW \ishuffle$ and $\sshuffle\times \sshuffle\leqW \sshuffle$.
  Consider the pairing of the two input colourings. To give more details, let $(n_0,f_0)$ and $(n_1,f_1)$ be instances of $\sshuffle$. Let us fix a computable bijection $\langle \cdot,\cdot\rangle:n_0\times n_1\to n_0n_1$ and
    define the colouring $f\colon\bbQ\to n_0n_1$ by $f(x)=\langle f_0(x),f_1(x)\rangle$ for every $x\in\bbQ$. Hence, $(n_0n_1,f)$ is a valid instance of $\sshuffle$.

    Let
    $C\in\cshuffle(n_0n_1,f)$: this means that there is an interval $I$ that is a $f$-shuffle for the colours of $C$. For $i<2$, let $C_i:=\{ j: \exists
    c\in C( j=\pi_i(j)) \}$, where $\pi_i$ is the projection on the $i$th component. Then, $C_i\in\cshuffle(n_if_i)$, as witnessed by the interval $I$.

    With the same reasoning, if $I \in \ishuffle(n_0n_1,f)$, then also $I \in \ishuffle(n_0,f_0)$ and $I \in \ishuffle(n_1,f_1)$. Finally, if $(I,C) \in \sshuffle(n_0n_1,f)$, then $(I,C_0) \in \sshuffle(n_0,f_0)$ and $(I,C_1) \in \sshuffle(n_1,f_1)$.

    To conclude $\sshuffle^*\equivW\sshuffle$ from $\sshuffle\times \sshuffle \leqW \sshuffle$, we just need to observe that $\sshuffle$ has a computable instance; likewise for $\cshuffle$ and $\ishuffle$.

\end{proof}

Let $\cshuffle_n$ be the restriction of $\cshuffle$ to $n$-colourings. The following lemma is due to a suggestion by an anonymous referee:

\begin{lemma}
\label{lem:lpo-leq-cshuffle}
$(\lpo')^n \leqW \cshuffle_{n+1}$, but $(\lpo')^n \nleqW \cshuffle_{n}$.
\end{lemma}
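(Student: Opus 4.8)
The plan is to establish the two halves separately. For the positive direction $(\lpo')^n \leqW \cshuffle_{n+1}$, the idea is to encode $n$ independent $\lpo'$-instances into a single colouring $c : \bbQ \to n+1$ in such a way that the \emph{set} of colours occurring densely in a shuffle interval determines the answers to all $n$ questions. Concretely, recall that $\lpo' \equivW \siscofin$: an instance of $(\lpo')^n$ amounts to $n$ binary strings $p_0,\dots,p_{n-1}$, and we must decide for each $i$ whether $p_i$ has cofinitely many $1$s. First I would fix an effective partition of $\bbQ$ into $n$ dense subsets $Q_0,\dots,Q_{n-1}$ together with a ``background'' dense set; actually it is cleaner to pick $n$ pairwise disjoint dense subsets $Q_0,\dots,Q_{n-1}$ of $\bbQ$ whose union is co-dense, and an enumeration without repetition $q^i_0, q^i_1, \dots$ of each $Q_i$. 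On $Q_i$ we colour $q^i_s$ with colour $i$ if $p_i(s)=1$ and with the sentinel colour $n$ if $p_i(s)=0$; outside $\bigcup_i Q_i$ we colour everything $n$. Since each $Q_i$ is dense in every interval, colour $n$ occurs densely in every interval, and for each $i$ the set $\{q^i_s : p_i(s)=1\}$ is dense in a given interval iff $p_i$ takes value $1$ on a ``dense-in-the-enumeration'' set of indices. The point is that any shuffle interval $I$ returned by $\cshuffle_{n+1}$ has a well-defined colour set $C$, and $i \in C$ iff colour $i$ occurs densely in $I$. The technical heart is to argue that $i \in C$ iff $p_i$ is cofinite: if $p_i$ is cofinite then colour $i$ is dense everywhere, so $i \in C$; if $p_i$ is not cofinite then $p_i$ has infinitely many $0$s, and I would arrange the enumeration of $Q_i$ so that the $0$-indexed points are cofinal in \emph{every} subinterval as well — this can be ensured by interleaving, e.g. enumerating $Q_i$ so that for each basic interval infinitely many of its points get ``late'' indices — giving that colour $n$ (not colour $i$) is what is dense there among $Q_i$-points, provided $I$ is genuinely a shuffle. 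Then from $C$ we read off the answer to each of the $n$ instances, so the outer reduction is just the characteristic-function map $i \mapsto [i \in C]$.

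For the negative direction $(\lpo')^n \nleqW \cshuffle_n$, the plan is a counting/cardinality argument in the style of the proof of Lemma~\ref{lem:incomp-lpo-tcn} combined with the observation that an $n$-colouring cannot carry $n$ independent bits through the colour-set output. The key structural fact is that for a colouring $c : \bbQ \to n$, the possible colour sets of shuffles are linearly constrained: one can show (using the ``fewest colours'' remark following Lemma~\ref{lem:sztind-shuffle}) that there is always a shuffle whose colour set is \emph{minimal} among all colour sets of subintervals of a given interval, and in fact along any decreasing sequence of intervals the colour sets are nested, so the number of ``genuinely different'' answers a $\cshuffle_n$-oracle can be forced to commit to is bounded by $n$ rather than $2^n$. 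I would make this precise by the following diagonalization: suppose $H,K$ witness $(\lpo')^n \leqW \cshuffle_n$. Feed the all-$0$ instance (all $p_i = 0^\bbN$); $H$ produces a colouring, and on some finite prefix of $\bbQ$'s colouring-data $K$ must commit to an answer $b \in 2^n$ for each of the $\leq n$ possible colour-set outputs $C_1,\dots$ of the produced colouring — but there are at most $n+1$ possible colour sets of shuffles of an $n$-colouring refining a fixed initial segment (the colour sets of shuffles form a chain of subsets of $n$, hence at most $n+1$ of them). Since $K$ sees only $C$ and the (finite) input prefix, $K$ realizes at most $n+1$ distinct answers, whereas $(\lpo')^n$ has $2^n > n+1$ many outputs that all actually occur; extending the instances appropriately to realize two outputs that $K$ cannot distinguish yields the contradiction. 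The main obstacle I anticipate is the second bullet: making the claim ``colour sets of shuffles of an $n$-colouring, all refining a fixed interval, form a chain (hence number $\leq n+1$)'' fully rigorous — one must be careful that $\cshuffle_n$ may return a shuffle for a \emph{non-minimal} colour set, so the argument should instead be that whatever shuffle is returned, its colour set lies inside the fixed interval's colour set and the relevant ``reachable'' colour sets, across all ways of extending the still-undetermined part of the colouring, still form a family of size $\leq n+1$ because each is determined by which of the $\leq n$ colours survives densely. Once that combinatorial bound on the number of realizable outputs is in hand, the diagonalization is routine, parallel to the $\cn \nleqW (\lpo')^*$ argument already in the paper.
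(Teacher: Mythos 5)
Your positive direction has a genuine gap, in fact two. First, you reduce $\siscofin$ rather than $\sisfin$, and the biconditional ``$i \in C$ iff $p_i$ is cofinite'' fails when $p_i$ has infinitely many $1$s \emph{and} infinitely many $0$s: arranging the $0$-indexed points of $Q_i$ to be dense in every subinterval does not prevent the $1$-indexed points from also being dense there, and a shuffle by its very nature has several colours dense at once. So colour $i$ can perfectly well belong to $C$ (and, for a spread-out enumeration, must) even though $p_i$ is not cofinite; your outer map then answers incorrectly. The statement you can actually hope to read off from $C$ is ``$p_i$ has infinitely many $1$s,'' i.e.\ you should reduce $\sisfin$. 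Second, even after that correction, a generic enumeration $q^i_0, q^i_1, \dots$ of a dense set $Q_i$ does not guarantee that an infinite set of indices yields a dense set of rationals: the $1$-indexed points could all land in $(0,1)$, and a shuffle inside $(1,2)$ would then omit colour $i$ although $p_i$ has infinitely many $1$s. You need the enumeration to have a ``block'' structure in which each block is already an $\varepsilon$-net with $\varepsilon \to 0$. This is exactly what the paper's encoding provides: it colours the rational $\tfrac{k}{nm+\ell}$ (in lowest terms) by $\ell$ if $p_\ell(m)=1$ and by the sentinel $n$ otherwise, so that infinitely many $1$s in $p_\ell$ force colour $\ell$ to be dense in every interval, while finitely many $1$s leave only a discrete set of colour-$\ell$ points, which no shuffle can contain.

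For the negative direction your overall strategy (a cardinality count on realizable outputs) is the paper's, but the combinatorial claim you lean on is false and also insufficient. Colour sets of shuffles of an $n$-colouring need not form a chain: take $n=3$ with colours $0,1$ each dense on $(0,1)$ and colour $2$ absent there, while $1,2$ are each dense on $(1,2)$ and $0$ is absent there; then $\{0,1\}$ and $\{1,2\}$ are incomparable colour sets of shuffles. Moreover your bound of $n+1$ does not even beat $2^n$ at $n=1$. None of this is needed: the paper simply observes that $\cshuffle_n$ has at most $2^n-1$ possible answers (the non-empty subsets of $n$), whereas every finite prefix of a $(\lpo')^n$-instance extends to instances realizing each of the $2^n$ outputs as the unique correct answer; the routine diagonalization you describe then goes through with $2^n - 1 < 2^n$ in place of your chain bound.
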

\begin{proof}
  We will prove that $(\sisfin)^n \leqsW\cshuffle_{n+1}$. Given an input $( p_0,p_1,\ldots,p_{n-1})$ for $(\sisfin)^n$ we define a colouring $c : \mathbb{Q} \to (n+1)$ of the rationals by setting $c(\frac{k}{nm + \ell})= \ell$ if $p_{\ell} (m) = 1$ and $c(\frac{k}{nm + \ell})= n$ if $p_{\ell} (m) = 0$, where $\operatorname{gcd}(k,nm+ \ell) = 1$ and $\ell < n$. Let $C$ be a result from applying $\cshuffle_{n+1}$. We find that $\ell \in C$ iff $p_\ell$ contains infinitely many $1$s -- as a colour $\ell < n$ appears either only finitely many times, or is dense in every interval. The colour $n$ is only used as neutral, it is otherwise disregarded.

To see that $(\lpo')^n \nleqW \cshuffle_{n}$ we appeal to the cardinality of the possible outputs. For $(\lpo')^n$ there are $2^n$ different outputs, and each finite prefix can be extended to instances having any one of them as the only correct answer. On the other hand, $\cshuffle_{n}$ has only non-empty subsets of $n$ as possible answers, i.e.~there are only $2^n - 1$ possible answers.
\end{proof}

We can also prove that finitely many copies of $\lpo'$ suffice to solve $\cshuffle_n$. However, our construction involves an exponential increase in the parameter. We do not know whether this is necessary.

\begin{lemma}
\label{lem:cshuffle-leq-lpo}
$\cshuffle_n\leqW (\lpo')^{2^n-2}$
\end{lemma}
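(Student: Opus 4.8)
The goal is to show $\cshuffle_n \leqW (\lpo')^{2^n - 2}$. The idea is to use copies of $\lpo'$ to determine, for each non-empty proper subset $C$ of $n$, whether $C$ is ``eligible'' to be the colour-set of a shuffle somewhere, and then combine these answers to pick out an actual shuffle's colour-set. There are $2^n - 2$ non-empty proper subsets of $n$, which matches the bound; the set $n$ itself (the full palette) is handled without any query since it is always a fallback and needs no test.

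First I would fix the following $\Sigma^0_2$-style predicate, computable from the colouring $c$ relative to the halting problem. For a non-empty $C \subsetneq n$, say $C$ is \emph{realizable} if there is an interval $]u,v[$ in which every colour of $C$ occurs densely and no colour outside $C$ occurs at all (equivalently, $]u,v[$ is a $c$-shuffle with colour-set exactly $C$). Crucially, by the ``fewest colours'' property established in the proof of \Cref{lem:sztind-shuffle}, inside any interval a subinterval using the minimal number of colours is automatically a shuffle; so ``realizable'' is the same as ``there is an interval using exactly the colours of $C$'', which is a genuine $\Sigma^0_2$ property of $c$ (``$\exists u,v\ \forall w \in ]u,v[\ (c(w)\in C)$ and for each $i \in C$, $i$ is dense there''—the density clause, being $\Pi^0_2$, needs care, but by minimality one only needs the $\Sigma^0_2$ part ``$\exists u<v\ \forall w\in ]u,v[\ c(w)\in C$'' together with the side condition that $C$ is $\subseteq$-minimal among such sets, which is a finite Boolean combination of $\Sigma^0_2$ facts). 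Hence each query ``is $C$ realizable-as-a-minimal-set?'' can be answered by one copy of $\lpo'$, and we run all $2^n - 2$ of them in parallel.

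From the $2^n - 2$ bits returned, the outer reduction picks any non-empty $C$ that was flagged as a minimal realizable set; if none was flagged, it outputs the full palette $n$ (which is then necessarily a shuffle, since the minimal colour-set inside any fixed interval must be flagged—so this case cannot actually arise, but outputting $n$ is harmless). The selected $C$ is then a valid element of $\cshuffle_n(c)$. The translation of each $\lpo'$-query into the official input format of $\lpo'$ (a sequence of eventually-stable binary sequences) is the same routine manoeuvre used in \Cref{lem:incomp-lpo-tcn} and \Cref{lem:lpo-leq-cshuffle}, so I would treat it tersely.

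The main obstacle is packaging the density requirement correctly: a priori ``$C$ is the colour-set of some shuffle'' is $\Sigma^0_3$, not $\Sigma^0_2$, because of the $\Pi^0_2$ density clause. The fix—and the step I expect to need the most care—is to replace it by the manifestly $\Sigma^0_2$ condition ``some interval uses exactly the colours in $C$ and no such interval uses a strictly smaller set,'' using the combinatorial lemma that such a minimal $C$ is automatically realised densely. One must check this minimality condition is expressible with finitely many $\lpo'$ queries (it is: ``$\exists u<v\,\forall w\in]u,v[\,c(w)\in C$'' is $\Sigma^0_2$, and ``no interval uses a proper subset of $C$'' is a finite conjunction of negations of such $\Sigma^0_2$ statements, i.e.\ a finite Boolean combination, which $(\lpo')^{|C|\text{-many}}$ handles), and that the number of queries summed over all $C \subsetneq n$, $C \neq \emptyset$, does not exceed $2^n - 2$—which holds since one query per such $C$ suffices once we observe the minimality check for $C$ can reuse the ``uses exactly $C'$'' bits computed for subsets $C' \subsetneq C$. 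Getting this bookkeeping to land exactly at $2^n - 2$ rather than something larger is the delicate part.
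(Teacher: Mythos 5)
Your proposal is correct and follows essentially the same route as the paper: one $\lpo'$ (equivalently $\sisfin$) query per non-empty proper subset $C\subsetneq n$ asking the $\Sigma^0_2$ question ``is there an interval whose colours all lie in $C$?'', then selecting a flagged $C$ with no flagged proper subset (falling back to the full palette if nothing is flagged), with correctness resting on the same minimal-colour-set observation from \Cref{lem:sztind-shuffle}. Your resolution of the density issue---reusing the bits for the subsets $C'\subsetneq C$ rather than issuing separate minimality queries---is exactly how the paper lands on $2^n-2$ instances.
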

\begin{proof}We actually show that $\cshuffle_n\leqW\sisfin^{2^n-1}$. Let $(n,c)$ be an instance of $\cshuffle$. 
  The idea is that we will use one instance of $\sisfin$ for every non-empty strict subset $C$ of the set of colours $n$, in order to determine for which such $C$s there exists an interval $I_C$ such that $c(I_C) \subseteq C$. We will then prove that any $\subseteq$-minimal such $C$ is a solution for $(n,c)$.

  Let $C_i$, for $i<2^n-2$, be an enumeration of the non-empty strict subsets of $n$.
    Let $I_j$ be an enumeration of the open intervals of $\bbQ$, and let $q_h$ be an enumeration of $\bbQ$. For every $i<2^n-2$, we build an instance $p_i$ of $\sisfin$ in stages in parallel. At every stage $s$, for every component $i<2^n-2$, there will be a ``current interval'' $I_{j_s^i}$ and a ``current point'' $q_{h_s^i}$. We start the construction by setting the current interval to $I_0$ and the current point to $q_0$ for every component $i$.

    For every component $i$, at stage $s$ we do the following:
    \begin{itemize}
        \item if $q_{h_s^i}\not\in I_{j_s^i}$ or if $c(q_{h_s^i})\in C_i$, we set $I_{j_{s+1}^i}=I_{j_{s}^i}$ and $q_{h_{s+1}^i}=q_{h_{s}+1}$. Moreover, we set $p_i(s)=0$. In practice, this means that if the colour of the current point is in $C_i$, or if the current point is not in the current interval, no special action is required, and we can move to consider the next point.
        \item If instead $q_{h_s^i}\in I_{j_s^i}$ and $c(q_{h_s^i})\not\in C_i$, we set $I_{j_{s+1}^i}=I_{j_s+1}$ and $q_{h_{s+1}^i}=q_0$. Moreover, we set $p_i(s)=1$. In practice, this means that if the current point is in the current interval and its colour is not a colour of $C_i$, then, we need to move to consider the next interval in the list, and therefore we reset the current point to the first point in the enumeration. Moreover, we record this event by letting $p_i(s)$ take value $1$.
    \end{itemize}
    We iterate the construction for every $s\in\bbN$. After infinitely many steps, we obtain an instance $\langle p_0,p_1,\dots, p_{2^n-3}\rangle$ of
    $\sisfin^{2^n-2}$.

    Let $C$ be a $C_i$ be such that $\sisfin(p_i) = 1$ but $\sisfin(p_j) = 0$ for all $j$ such that $C_j \subset C_i$; or $C = \{0,1,\ldots,n-1\}$ if no such $i$ exists. We claim that $C$ is a correct answer for $\cshuffle_n$. If $C = C_i$, then the construction for $C_i$ only proceeded to a different interval finitely many times. The last interval $I$ we consider for $C_i$ has no colour in it not appearing in $C_i$. Moreover, we know that for all $C_j \subset C_i$ the process never stabilized, which means that every subinterval of $I$ has all colours in $C_i$ appearing. If $C = \{0,1,\ldots,n-1\}$, then every colour is dense everywhere, and thus $C$ is the only correct answer to $\cshuffle_n$.
\end{proof}

Putting the previous lemmas together, we have the following:
\begin{theorem}\label{thm:lpo-cshuffle}
    $(\lpo')^*\equivW \cshuffle$
\end{theorem}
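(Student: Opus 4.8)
The plan is simply to assemble the three preceding lemmas, using crucially that both $\cshuffle$ and $(\lpo')^*$ carry the relevant finite parameter (the number of colours, respectively the number of instances) as part of their input.

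For the reduction $(\lpo')^* \leqW \cshuffle$: an instance of $(\lpo')^*$ comes together with the number $n$ of $\lpo'$-instances it packages. From $n$ I compute $n+1$ and apply the first part of \Cref{lem:lpo-leq-cshuffle}, which gives $(\lpo')^n \leqW \cshuffle_{n+1} \leqW \cshuffle$ via a reduction that is uniform in $n$. Since $n$ is available from the input, these glue into a single reduction $(\lpo')^* \leqW \cshuffle^*$, and \Cref{lem:cshuffle-finparall} then yields $\cshuffle^* \equivW \cshuffle$, as desired.

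For the converse $\cshuffle \leqW (\lpo')^*$: an instance $(n,c)$ of $\cshuffle$ explicitly provides the number of colours $n$ (see \Cref{def:wei-shuffle}), so I may compute $2^n - 2$ and invoke \Cref{lem:cshuffle-leq-lpo} to obtain $\cshuffle_n \leqW (\lpo')^{2^n-2}$. Supplying the count $2^n-2$ to the finite parallelization gives $(\lpo')^{2^n-2} \leqW (\lpo')^*$, and again the whole construction is uniform in $n$, so it assembles into $\cshuffle \leqW (\lpo')^*$.

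I do not expect any genuine obstacle here: the two displayed reductions follow immediately from the cited lemmas. The only point that needs a word of care is the uniformity/gluing issue, namely that both sides expose the parameter ($n$ for $\cshuffle$, the number of instances for $(\lpo')^*$) so that the per-$n$ reductions provided by \Cref{lem:lpo-leq-cshuffle} and \Cref{lem:cshuffle-leq-lpo} combine into honest Weihrauch reductions; this is routine.
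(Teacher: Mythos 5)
Your proposal is correct and follows essentially the same route as the paper: the forward direction combines \Cref{lem:lpo-leq-cshuffle} with the idempotence of $\cshuffle$ under finite parallelization from \Cref{lem:cshuffle-finparall}, and the converse uses $\cshuffle \equivW \bigsqcup_{n}\cshuffle_n$ together with \Cref{lem:cshuffle-leq-lpo}. The uniformity point you flag is exactly the (routine) gluing step the paper also relies on implicitly.
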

\begin{proof}
  $(\lpo')^*\leqW\cshuffle$ is given by Lemmas \ref{lem:cshuffle-finparall} and \ref{lem:lpo-leq-cshuffle}. For the other direction, notice that $\cshuffle\equivW \bigsqcup_{n\in\bbN}\cshuffle_n$. The result then follows from \Cref{lem:cshuffle-leq-lpo}.
\end{proof}

While Theorem \ref{thm:lpo-cshuffle} tells us that for any finite number of parallel $\lpo'$-instances can be reduced to $\cshuffle$ for $m$-colourings for a suitable choice of $m$, and vice versa, a sufficiently large number of $\lpo'$-instances can solve $\cshuffle$ for $m$-colourings, one direction of our proof involved an exponential increase in the parameter. Before moving on to $\ishuffle$, we thus raise the open question of whether this gap can be narrowed:

\begin{question}
What is the relationship between $(\lpo')^n$ and $\cshuffle_m$ for individual $n,m \in \mathbb{N}$?
\end{question}

\begin{lemma}
  \label{lem:ishuffle-leq-tcn}
    Let $\ishuffle_n$ be the restriction of $\ishuffle$ to the instances of the form $(n,c)$.
    For $n \geq 1$, it holds that $\ishuffle_{n} \leqsW\tcn^{n-1}$.
\end{lemma}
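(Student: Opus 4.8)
The plan is to induct on $n$, building the reduction so that at each stage we use one copy of $\tcn$ to rule out a single colour, narrowing the working interval until we reach a subinterval that is a shuffle. For $n=1$ the statement is trivial, since $\tcn^0 \equivW \id$ and for a colouring $c : \bbQ \to 1$ the whole of $\bbQ$ (say the interval $]-1,1[$) is already a $c$-shuffle, computably. So assume $n \geq 2$. Given an instance $(n,c)$ of $\ishuffle_n$, the key idea from Lemma~\ref{lem:sztind-shuffle} is that \emph{any subinterval of a given interval using the fewest colours is a shuffle}; we exploit a uniform, one-step version of this. First I would fix an enumeration $(I_j)_{j \in \bbN}$ of all rational open subintervals of a fixed starting interval (say $]0,1[$) and an enumeration $(q_h)_h$ of the rationals. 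For each colour $\ell < n$ we want to detect whether there is a subinterval of $]0,1[$ avoiding $\ell$; mimicking the bookkeeping in the proof of Lemma~\ref{lem:cshuffle-leq-lpo}, we build, for each $\ell < n$, an enumeration $e_\ell$ of a subset of $\bbN$ which enumerates a ``failure'' marker every time the current candidate interval for ``avoids $\ell$'' is refuted by a point of colour $\ell$, and otherwise enumerates nothing, moving the candidate interval forward through $(I_j)_j$ each time a failure is recorded.

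Next I would feed, in parallel, the first $n-1$ of these enumerations $e_0, \ldots, e_{n-2}$ to the $n-1$ copies of $\tcn$. By totality of $\tcn$, each copy returns some number; the crucial observation is that $\ell$ is avoidable on a subinterval of $]0,1[$ exactly when the construction for $e_\ell$ stabilises on a final interval, i.e.\ when $e_\ell$ enumerates a proper (in fact finite) subset of $\bbN$, in which case $\tcn$ returns a witness that the range is coinfinite; whereas if $\ell$ is dense in every subinterval then $e_\ell$ enumerates all of $\bbN$ and $\tcn$ returns an arbitrary number. Actually, to make the reduction \emph{strong} I would instead arrange that when $\ell$ is avoidable, from the $\tcn$-output we computably recover an index $j$ of an actual subinterval $I_j$ of $]0,1[$ with $c(I_j) \subseteq n \setminus \{\ell\}$: this is done by having $e_\ell$ enumerate, at the moment a candidate interval is refuted, the code of that refuted interval together with enough data so that a number not in the range pins down a surviving interval — the same packaging trick used in Lemma~\ref{lem:cshuffle-leq-lpo} where the output of each $\tcn$ is decoded into a concrete interval. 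The inner reduction produces these $n-1$ instances; the outer reduction, from the $n-1$ returned numbers, computes for each $\ell < n-1$ either ``$\ell$ unavoidable'' or a concrete subinterval $J_\ell \subseteq ]0,1[$ avoiding $\ell$.

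Now the outer reduction reasons as follows. If every $\ell < n-1$ is unavoidable, then in particular colours $0,\ldots,n-2$ are each dense in $]0,1[$; if additionally colour $n-1$ is dense in $]0,1[$, then $]0,1[$ itself is a shuffle and we output it. If some colour among $0,\ldots,n-2$ is avoidable, pick the least such $\ell$ and pass to the subinterval $J := J_\ell$; on $J$ the colour $\ell$ does not occur, so $c$ restricted to $J$ uses at most $n-1$ colours, and after computably renaming we have an instance of $\ishuffle_{n-1}$, whose interval output (obtained via the induction hypothesis, which costs only $n-2$ further copies of $\tcn$ — but we have already spent $n-1$!). This is the point that needs care: I would not literally recurse, but rather set up the whole thing at once. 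Concretely, run all $n-1$ copies of $\tcn$ simultaneously on enumerations $e_0,\ldots,e_{n-2}$ as above; from the outputs, determine the set $A \subseteq \{0,\ldots,n-2\}$ of avoidable colours and, for $\ell \in A$, concrete avoiding subintervals. If $A = \emptyset$, output $]0,1[$ (colour $n-1$ is then automatically also dense, since $]0,1[$ using fewer colours would make some $\ell \leq n-2$ avoidable — wait, not quite; rather: if $A=\emptyset$ every colour $0,\dots,n-2$ is dense in $]0,1[$, and whether or not $n-1$ is dense, the subinterval of $]0,1[$ using fewest colours is $]0,1[$ itself or uses all of $\{0,\dots,n-1\}$, so $]0,1[$ is a shuffle). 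If $A \neq \emptyset$, all the enumerations $e_\ell$ for $\ell \in A$ having stabilised means we can take a common refinement $J := \bigcap_{\ell \in A} J_\ell$ and on $J$ only colours in $\{0,\ldots,n-1\} \setminus A$ occur; by minimality of those $J_\ell$'s (each surviving candidate interval had no further avoidable colour among its target, by the same ``fewest colours'' argument as in Lemma~\ref{lem:sztind-shuffle}) every colour in $\{0,\ldots,n-1\}\setminus A$ is dense in every subinterval of $J$, hence $J$ is a $c$-shuffle, and we output $J$.

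The main obstacle is bookkeeping: making sure that (i) $n-1$ copies of $\tcn$ genuinely suffice — one never needs to separately test colour $n-1$, because once all of $0,\ldots,n-2$ are classified the status of the interval is forced — and (ii) the information returned by $\tcn$ can be decoded \emph{computably} into an explicit surviving subinterval, so that the reduction is strong and does not secretly need $\cn$ instead of $\tcn$. Both points are handled by the same device as in Lemma~\ref{lem:cshuffle-leq-lpo}: encode into the enumeration $e_\ell$, at each refutation step, the index of the newly chosen candidate interval, so that any number outside $\ran(e_\ell)$ — which $\tcn$ is guaranteed to return when the process stabilises — determines the last, surviving candidate interval. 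When the process does not stabilise, $\ran(e_\ell) = \bbN$, $\tcn$'s output is meaningless, but then colour $\ell$ is genuinely dense everywhere and we correctly record it as unavoidable. The use of $\tcn$ rather than $\cn$ is exactly what lets us tolerate the ``$\ran(e_\ell)=\bbN$'' case without a promise.
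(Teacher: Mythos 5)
Your approach is genuinely different from the paper's (which searches for a nested chain of intervals with strictly decreasing \emph{colour counts}, one $\tcn$-instance per count from $n-1$ down to $1$, rather than one instance per individual colour), and it has three concrete gaps. First, the outer reduction cannot compute the set $A$ of avoidable colours: $\tcn$ returns a number in every case and gives no flag distinguishing ``genuine witness'' from ``arbitrary output because $\ran(e_\ell)=\bbN$.'' Your construction therefore cannot branch on whether $\ell\in A$; everything downstream that quantifies over $\ell\in A$ is not computable from the $\tcn$-outputs. Second, even granting knowledge of $A$, the intersection $\bigcap_{\ell\in A}J_\ell$ can be empty: take $n=3$ with colour $0$ dense exactly on $]0,\tfrac12[$, colour $1$ dense exactly on $]\tfrac12,1[$, colour $2$ dense everywhere. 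Both $0$ and $1$ are avoidable, but no interval avoids both; a shuffle here avoids only \emph{one} of them. Avoiding all globally-avoidable colours simultaneously is neither achievable nor what a shuffle requires --- the removal of colours has to be iterated \emph{locally}, inside the interval already chosen, which is exactly what the paper's nested construction does. Third, colour $n-1$ is not handled: if colours $0,\dots,n-2$ are all dense everywhere (so $A=\emptyset$) but colour $n-1$ occurs at a single point of $]0,1[$, then $]0,1[$ is not a shuffle (a colour occurs without being dense), contradicting your claim that its status is ``forced'' once the other colours are classified. Your parenthetical argument that $]0,1[$ then ``uses the fewest colours'' is false in this example, since a subinterval missing that single point uses strictly fewer.

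The paper's proof avoids all three problems at once by making $e_i$ search for an interval with at most $i$ colours \emph{nested inside} the current candidate of $e_{i+1}$, resetting the inner searches whenever an outer one advances. Correctness then does not require knowing which instances of $\tcn$ succeeded: one decodes all $n-1$ outputs into intervals, takes the longest nested chain among them, and returns its innermost interval; the least $m$ with $e_m$ not total yields an interval with exactly $m$ colours none of whose subintervals has fewer (hence a shuffle), and every decoded interval below it in the chain is a subinterval of it, hence also a shuffle. If you want to salvage a per-colour strategy, you would at minimum need to (i) test all $n$ colours while still using only $n-1$ instances, and (ii) replace the global intersection by a sequential/nested refinement so that each avoidance test is performed relative to the interval produced by the previous ones --- at which point you have essentially reconstructed the paper's argument.
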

\begin{proof}
    Fix an enumeration $I_j$ of the intervals of $\bbQ$, an enumeration $q_h$ of $\bbQ$,
    a computable bijection $\langle \cdot,\cdot\rangle\colon \bbN\times\bbN\to\bbN$,
    and let $(n,c)$ be an instance of $\ishuffle_n$.

  The idea of the reduction is the following: with the first instance $e_{n-1}$ of $\tcn$, we look for an interval $I_j$ on which $c$ takes only $n-1$ colours: if no such interval exists, then this means that every colour is dense in every interval, and so every $I_j$ is a valid solution to $c$. Hence, we can suppose that such an interval is eventually found: we will then use the second instance $e_{n-2}$ of $\tcn$ to look for a subinterval of $I_j$ where $c$ takes only $n-2$ values. Again, we can suppose that such an interval is found. We proceed like this for $n-1$ steps, so that in the end the last instance $e_1$ of $\tcn$ is used to find an interval $I'$ inside an interval $I$ on which we know that at most two colours appear: again, we look for $c$-monochromatic intervals: if we do not find any, then $I'$ is already a $c$-shuffle, whereas if we do find one, then that interval is now a solution to $c$, since $c$-monochromatic intervals are trivially $c$-shuffles..

    Although not apparent in the sketch given above, an important part of the proof is that the $n-1$ searches we described can be performed \emph{in parallel}: the fact that this can be accomplished relies on the fact that any subinterval of a shuffle is a shuffle.  More formally, we proceed as follows: we define $n-1$ instances $e_1,\dots,e_{n-1}$ of $\tcn$ as follows. For every stage $s$, every instance $e_i$ will have a ``current interval'' $I_{j_s^i}$ and a ``current point'' $q_{h_s^i}$ and a ``current list of colours'' $L_{k_s^i}$. We start the construction by the setting the current interval equal to $I_0$, the current point equal to $q_0$ and the current list of points equal to $\emptyset$ for every $i$.

    At stage $s$, there are two cases:
    \begin{itemize}
    \item if, for every $i$, $q_{h_s^i}\not\in I_{j_s^i}$ or $| L_{k_s^i} \cup \{ c(q_{h_s^i})\}|\leq i$, we set $I_{j_{s+1}^i}=I_{j_s^i}$, $q_{h_{s+1}^i}=q_{h_s^i+1}$ and $L_{k_{s+1}^i}=L_{k_s^i}\cup \{ c(q_{h_s^i})\}$. Moreover, we let every $e_i$ enumerate every number of the form $\langle s, a\rangle$, for every $a\in \bbN$, except for $\langle s, j^i_s\rangle$. We then move to stage $s+1$.

      In practice, this means that if the set of colours of the points of the current interval seen so far does not have cardinality larger than $i$, no particular action is required, and we can move to check the next point on the list.
        \item otherwise: let $i'$ be maximal such that $q_{h_s^i}\in I_{j_s^i}$ and $|L_{k_s^i} \cup \{ c(q_{h_s^i})\}|> i$. Then, for every $i> i'$ we proceed as in the previous case (i.e., the current interval, current point, current list of colours and enumeration are defined as above). For the other components, we proceed as follows: we first look for the minimal $\ell>j_s^{i'}$ such that $I_\ell\subseteq I_{j_s^{i'+1}}$ (if $i'=n-1$, just pick $\ell=j_s^{n-1}+1$). 
          Then, for every $i\leq i'$, we set $I_{j_{s+1}^i}=I_\ell$, $q_{h_{s+1}^i}=q_0$ and $L_{k_{s+1}^i}=\emptyset$. Moreover, we let $e_i$ enumerate every number of the form $\langle t, a\rangle$ with $t<s$ that had not been enumerated so far, and also every number of the form $\langle s, a\rangle$, with the exception of $\langle s, j_s^i\rangle$. We then move to stage $s+1$.

          In practice, this means that if, for a certain component $i'$, we found that the current interval has too many colours, then, for all the components $i\leq i'$, we move to consider intervals strictly contained in the current interval of component $i'$.
    \end{itemize}
    We iterate the procedure for every $s\in\bbN$, thus obtaining the $\tcn^{n-1}$-instance $\langle e_1,\dots,e_{n-1}\rangle$.

    Let $\sigma \in \bbN^{n-1}$ be such that $\sigma\in \tcn^{n-1}(\langle e_1,\dots,e_{n-1}\rangle)$. Then, we look for the minimal $i$ such that $I_{\pi_2(\sigma(i))}\subseteq I_{\pi_2(\sigma(i+1))}\subseteq\dots\subseteq I_{\pi_2(\sigma(n-1))}$ (by $\pi_i$ we denote the projection on the $i$th component, so $\langle \pi_1(x),\pi_2(x)\rangle=x$)). We claim that $I_{\pi_2(\sigma(i))}$ is a $c$-shuffle, which is sufficient to conclude that $\ishuffle_n\leqsW\tcn^{n-1}$.

    We now prove the claim. First, suppose that $e_{n-1}$ enumerates all of $\bbN$. Then, the second case of the construction was triggered infinitely many times with $i'=n-1$: hence, no interval contains just $n-1$ colours, and so, as we said at the start of the proof, this means that every interval is a $c$-shuffle. In particular, this imples that $I_{\pi_2(\sigma(i))}$ is a valid solution. Hence we can suppose that $e_{n-1}$ does not enumerate all of $\bbN$.

    Next, we notice that for every $m>1$, if $e_m$ enumerates all of $\bbN$, the so does $e_{m-1}$, by inspecting the second case of the construction. Let $m$ be minimal such that $e_m$ does not enumerate all of $\bbN$. For such an $m$, it is easy to see that $I_{\pi_2(\sigma(m))}$ is a valid solution to $c$: indeed, we know from the construction that $c$ takes $m$ colours on $I_{pi_2(\sigma(m))}$, and that for no interval contained in $I_{\pi_2(\sigma(m))}$ $c$ takes $m-1$ colours, which means that $I_{\pi_2(\sigma(m))}$ is a $c$-shuffle. Moreover, it is easy to see that $I_{\pi_2(\sigma(m))}\subseteq I_{\pi_2(\sigma(m+1))}\subseteq\dots I_{\pi_2(\sigma(n-1))}$, which implies that $i\leq m$. Since every subinterval of a $c$-shuffle is a $c$-shuffle, $I_{\pi_2(\sigma(i))}$ is a valid solution to $c$, as we wanted.
\end{proof}

\begin{lemma}\label{lem:ect-leq-ishuffle}
Let $\ect_n$ be the restriction of $\ect$ to the instances of the form $(n,f)$. It holds that $\ect_n\leqsW \ishuffle_n$.
\end{lemma}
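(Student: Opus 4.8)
The plan is to exhibit computable functionals $H$ and $K$ witnessing the strong reduction: $H$ sends a colouring $f\colon\bbN\to n$ to a colouring $c=c_f\colon\bbQ\to n$, and $K$ sends a $c_f$-shuffle interval $]u,v[$ to a valid $\ect_n$-bound for $f$. Since the reduction is \emph{strong}, $K$ sees only the pair $(u,v)$; hence the whole point of the construction of $c_f$ must be to force every $c_f$-shuffle interval to sit at a position whose endpoints already encode enough of the limsup palette of $f$ to read off a bound.

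For $H$, I would refine the naive idea ``spread the value $f(s)$ densely over a shrinking region $R_s$'' into a nested construction, in the spirit of (and essentially dual to) the proof of \Cref{lem:ishuffle-leq-tcn}. Concretely: after tiling $\bbQ$ with scaled copies of a base construction on $]0,1[$ (so that, by a routine boundary argument, one may assume the returned shuffle lies in one copy), I would set up a tree of nested rational sub-intervals whose addresses record successive blocks of stages of $f$ that have been inspected; on each region I place the value $f$ takes at the associated stage, densely, on part of the region, and delegate to a sub-region the recursively smaller colouring obtained from the shifted colouring over the \emph{strictly smaller} set of colours that remains after deleting the colours already certified to recur. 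This colour-count decrease is what makes the recursion terminate and $c_f$ both total and computable from $f$: a fixed rational is carried into the recursively coloured part only finitely often, since that part shrinks towards an endpoint and eventually omits the rational. The decoding $K$ then, given $]u,v[$, determines which copy and which region-address $]u,v[$ falls under -- i.e.\ the sequence of stage-blocks traversed -- and outputs (up to bookkeeping) the largest stage occurring in that address, which I claim is $\ge B_0$, the last stage at which a colour occurring only finitely often in $f$ appears, hence a valid $\ect_n$-bound.

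The step I expect to be the main obstacle is proving that \emph{every} $c_f$-shuffle interval, not only the ones the construction ``intends'', decodes to a valid bound. One cannot prevent ``early'' intervals from being shuffles -- by the shuffle principle every rational sub-interval contains a shuffle sub-interval -- so instead I must show the nesting is arranged so that any interval which has not genuinely seen the last occurrence of each non-tail colour contains a colour that is present but not dense, and hence is not a shuffle. This is exactly the reverse of the colour-peeling done with the $n-1$ copies of $\tcn$ in \Cref{lem:ishuffle-leq-tcn}: there the nested chain of ever-smaller-palette intervals is produced by the choice functions; here the chain is built into $c_f$, a shuffle sits at its bottom, and the stages peeled off on the way down are recorded in the interval's location. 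Getting the breakpoints created by the finitely occurring colours to propagate through every level of the recursion -- so that no ``premature'' interval survives as a shuffle -- is where the bulk of the argument goes; verifying well-definedness and termination of the recursion is the other, lighter, technical point. Finally, I would note that carrying the same construction out in $\rca$, with the nesting depth of a shuffle playing the role of the induction witness, yields the reversal in \Cref{thm:revmath-shuffle}.
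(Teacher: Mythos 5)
Your proposal is a plan rather than a proof, and the gap is exactly where you locate it yourself: the claim that \emph{every} $c_f$-shuffle interval decodes to a valid bound is never established, and the construction of $c_f$ on which it would rest is never pinned down. As written, ``the recursively smaller colouring obtained \dots over the strictly smaller set of colours that remains after deleting the colours already certified to recur'' does not determine a computable functional: recurrence of a colour in $f$ is a $\Pi^0_2$ property, so $H$ can never certify it, and if you instead mean ``colours seen to repeat within the current block'' you must say how the blocks are chosen and why the colour count genuinely drops at each level of the recursion. Likewise the decoding $K$ is only described for intervals that respect your nesting; an interval straddling the boundary between the densely coloured part of a region and its distinguished sub-region (or straddling two sibling regions) has no single address, and nothing in the proposal rules out such intervals being shuffles. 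Since you explicitly defer ``the bulk of the argument''---the propagation of the breakpoints through every level of the recursion---the proposal cannot be verified, and I cannot tell whether the approach would survive being made precise.

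The intended argument is far simpler and needs none of this machinery. Set $c(\frac{a}{b}) = f(b)$ for $\frac{a}{b}$ in lowest terms, so that the $c$-colour class of $i$ is the set of reduced fractions whose denominators $f$ colours $i$. If $i$ occurs only finitely often in $f$, this class is a discrete subset of $\bbQ$, hence dense in no interval, hence absent from any shuffle interval $I$ (every colour occurring in a shuffle occurs densely in it). On the other hand, an interval of length $> \frac{1}{b}$ contains reduced fractions of every sufficiently large denominator (effectively in $b$, by an elementary bound on gaps between totatives), so all such denominators must carry recurring colours; the bound returned by $K$ is therefore computable from the \emph{length} of $I$ alone---no positional encoding, no recursion on the number of colours, and no need to control where the shuffles sit. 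The point you are missing is that one does not have to prevent ``premature'' shuffles; one only has to make the finitely occurring colours nowhere dense, and the shuffle condition itself does the rest.
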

\begin{proof}
    Let $(n,f)$ be an instance of $\ect_n$. We define $c\colon \mathbb{Q}\to n$ by $c(\frac{a}{b})=f(b)$ where $\operatorname{gcd}(a,b) = 1$. Hence, all the points of the same denominator have the same colour according to $c$. Let $(\frac{u}{k},\frac{v}{\ell})\in \ishuffle_n(n,c)$. Let $b$ be such that $\frac{1}{b}< \frac{v}{\ell} -  \frac{u}{k}$. We claim that $b$ is a bound for $f$. Suppose not, then there is a colour $i<n$ and a number $x\in\bbN$ such that $x>b$ and $f(x)=i$, but for no $y>x$ it holds that $f(y)=i$. Hence, all the reduced of the form $\frac{w}{x}$ are given colour $i$, but $i$ does not appear densely often in any interval of $\mathbb{Q}$. But by choice of $b$, there is a $z\in\mathbf{Z}$ such that $\frac{z}{b}\in \left] \frac{u}{k},\frac{v}{\ell} \right[$, which is a contradiction. Hence $b$ is a bound for $f$.

\end{proof}

Putting things together, we finally have a characterization of $\ishuffle$. We even get a precise characterization for each fixed number of colours.

\begin{theorem}\label{thm:tcn-ect-ishuffle}
  We have the Weihrauch equivalence
  \vspace{-0.5em}
  \[\ect_n\equivW\ishuffle_n\equivW \tcn^{n-1} \qquad \text{whence} \qquad
  \ect\equivW \ishuffle\equivW \tcn^* \]
\end{theorem}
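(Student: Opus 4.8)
The plan is to close a short cycle of reductions at each fixed arity $n$ and then pass to the coproduct over $n$. First I would note that the preceding lemmas already assemble, for every $n \ge 1$, into the chain
\[ \tcn^{n-1} \equivW \ect_n \leqsW \ishuffle_n \leqsW \tcn^{n-1}, \]
where the middle reduction is Lemma~\ref{lem:ect-leq-ishuffle}, the last one is Lemma~\ref{lem:ishuffle-leq-tcn}, and the identification $\tcn^{n-1}\equivW\ect_n$ is the fixed-arity refinement of Lemma~\ref{lemma:ectandtcn*} already invoked in the proof of Lemma~\ref{lem:comp-to-paral-prod}: the argument of \cite[Theorem~9]{comb-weak-ind-davis-hirschfeldt-hirst-pardo-pauly-yokoyama} in fact shows $\tcn^{b}\equivW\ect_{b+1}$. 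Since the cycle opens and closes at $\tcn^{n-1}$, all three problems are Weihrauch equivalent, which is exactly the first displayed equivalence in the theorem.

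For the unparametrised statement, I would use that $\ishuffle \equivW \bigsqcup_{n\ge 1}\ishuffle_n$ and $\ect \equivW \bigsqcup_{n\ge 1}\ect_n$: instances of both problems carry the number of colours as an explicit component of the input, so each is, up to Weihrauch equivalence, the countable coproduct of its arity-$n$ restrictions (alternatively one can appeal to $\ishuffle^*\equivW\ishuffle$ from Lemma~\ref{lem:cshuffle-finparall} together with Lemma~\ref{lemma:ectandtcn*}). Taking the supremum over $n$ of the equivalences $\ect_n\equivW\ishuffle_n\equivW\tcn^{n-1}$, and using that $\bigsqcup_n$ is monotone for $\leqW$, yields $\ect\equivW\ishuffle\equivW\bigsqcup_{n\ge 1}\tcn^{n-1}$. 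Finally $\bigsqcup_{n\ge 1}\tcn^{n-1}=\bigsqcup_{m\ge 0}\tcn^{m}\equivW\tcn^*$ directly from the definition of finite parallelisation, which gives the ``whence'' clause.

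There is essentially no obstacle left at this point, since all of the combinatorial content is already packaged into Lemmas~\ref{lem:ishuffle-leq-tcn} and~\ref{lem:ect-leq-ishuffle}. The only points that need a moment's care are: using the fixed-arity identification $\tcn^{n-1}\equivW\ect_n$ rather than merely $\tcn^*\equivW\ect$; the boundary case $n=1$, where $\ishuffle_1$, $\ect_1$ and $\tcn^0$ all collapse to $\id$ (so that the restriction $n\ge 1$ in Lemma~\ref{lem:ishuffle-leq-tcn} is harmless here); and the routine bookkeeping that the coproduct $\bigsqcup_n$ commutes with the pointwise equivalences just established. None of these presents any real difficulty.
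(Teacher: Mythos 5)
Your proposal is correct and follows the same route as the paper: the cycle $\tcn^{n-1}\leqW\ect_n\leqW\ishuffle_n\leqW\tcn^{n-1}$ via the fixed-arity refinement of \cite[Theorem~9]{comb-weak-ind-davis-hirschfeldt-hirst-pardo-pauly-yokoyama}, Lemma~\ref{lem:ect-leq-ishuffle}, and Lemma~\ref{lem:ishuffle-leq-tcn}, followed by taking coproducts over $n$. The extra care you give to the $n=1$ case and to the passage to $\tcn^*$ is sound, if more explicit than the paper's own one-line "whence".
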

\begin{proof}
We get $\tcn^{n-1}\leqW\ect_n$ by inspecting the second half of \cite[Theorem 9]{comb-weak-ind-davis-hirschfeldt-hirst-pardo-pauly-yokoyama}. Then Lemma \ref{lem:ect-leq-ishuffle} gives us $\ect_n \leqW \ishuffle_n$. Lemma \ref{lem:ishuffle-leq-tcn} closes the cycle by establishing $\ishuffle_n\equivW \tcn^{n-1}$.
\end{proof}

\subsection{The full shuffle problem}

The main result of this section is that $\sshuffle\equivW \tcn^*\times(\lpo')^*$, which will be proved in \Cref{thm:sshuffle}. For one direction, we merely need to combine our results for the weaker versions:

\begin{lemma}
    $\tcn^*\times(\lpo')^*\leqW \sshuffle$
\end{lemma}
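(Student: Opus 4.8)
The plan is to combine the two characterizations of the weaker shuffle problems already established, namely $(\lpo')^* \equivW \cshuffle$ from \Cref{thm:lpo-cshuffle} and $\tcn^* \equivW \ishuffle$ from \Cref{thm:tcn-ect-ishuffle}, together with the observation that $\cshuffle$ and $\ishuffle$ are both ``projections'' of $\sshuffle$: a solution $(u,v,C)$ to an instance of $\sshuffle$ contains, as its interval part, a solution to $\ishuffle$, and as its colour part, a solution to $\cshuffle$. Thus a single instance of $\sshuffle$ simultaneously solves one instance of $\ishuffle$ and one instance of $\cshuffle$ on the same input. So it suffices to show $\cshuffle \times \ishuffle \leqW \sshuffle$ and then invoke the idempotency of finite parallelization for $\sshuffle$ from \Cref{lem:cshuffle-finparall}.

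Concretely, first I would note that $\tcn^* \times (\lpo')^* \equivW \ishuffle \times \cshuffle$ by the two cited theorems (Weihrauch reducibility respects products). Then, given instances $(n_0, f_0)$ of $\ishuffle$ and $(n_1, f_1)$ of $\cshuffle$, I form the product colouring $f : \bbQ \to n_0 n_1$ with $f(x) = \langle f_0(x), f_1(x)\rangle$, exactly as in the proof of \Cref{lem:cshuffle-finparall}. Feeding $(n_0 n_1, f)$ to $\sshuffle$ yields some $(u, v, C) \in \sshuffle(n_0 n_1, f)$, i.e.\ $]u,v[$ is an $f$-shuffle for the colours of $C$. As argued in \Cref{lem:cshuffle-finparall}, the interval $]u,v[$ is then an $f_0$-shuffle, so it is a valid output for $\ishuffle(n_0, f_0)$; and the set $C_1 = \{\pi_1(c) : c \in C\}$ is a valid output for $\cshuffle(n_1, f_1)$. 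Both the forward map (building $f$) and the backward map (extracting $]u,v[$ and $C_1$) are computable, so we get $\ishuffle \times \cshuffle \leqW \sshuffle$, hence $\tcn^* \times (\lpo')^* \leqW \sshuffle^* \equivW \sshuffle$.

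There is essentially no obstacle here: the content has all been done already in the proofs of \Cref{lem:cshuffle-finparall}, \Cref{thm:lpo-cshuffle} and \Cref{thm:tcn-ect-ishuffle}, and this lemma is just the easy half of the main theorem $\sshuffle \equivW \tcn^* \times (\lpo')^*$ — the genuinely hard direction being the converse reduction $\sshuffle \leqW \tcn^* \times (\lpo')^*$, which (as the text announces) needs a separate argument. The only thing to be slightly careful about is the bookkeeping with the colour-projection $C_1$, which should be stated uniformly so that it is clearly computable from $C$ alone; this is immediate since $C_1$ is just the image of $C$ under the fixed computable projection $\pi_1 : n_0 n_1 \to n_1$ induced by the chosen bijection $\langle \cdot, \cdot \rangle$.
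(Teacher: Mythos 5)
Your proposal is correct and matches the paper's argument: the paper likewise combines Theorem \ref{thm:lpo-cshuffle} and Theorem \ref{thm:tcn-ect-ishuffle} with the observations $\ishuffle \leqW \sshuffle$, $\cshuffle \leqW \sshuffle$ and the closure $\sshuffle \times \sshuffle \leqW \sshuffle$ from Lemma \ref{lem:cshuffle-finparall}. You merely unfold the product-colouring construction explicitly rather than citing it, which is fine.
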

\begin{proof}
  From \Cref{thm:lpo-cshuffle} and \Cref{thm:tcn-ect-ishuffle}, we have that $\tcn^*\times(\lpo')^*\leqW \ishuffle\times\cshuffle$, and since clearly $\ishuffle \leqW\sshuffle$ and $\cshuffle\leqW\sshuffle$, by \Cref{lem:cshuffle-finparall} we have that $\tcn^*\times(\lpo')^*\leqW\sshuffle$.
\end{proof}

For the other direction, again, we want to be precise as to the number of $\tcn$- and $(\lpo')$-instances we use to solve an instance of $\sshuffle$. Note that we will use a far larger number of $\tcn$-instances to obtain a suitable interval than we used in \Cref{lem:ishuffle-leq-tcn}.

\begin{lemma}
\label{lem:sshuffle-leq-tcnlpo}
    Let $\sshuffle_n$ be the restriction of $\sshuffle$ to the instances of the form $(n,c)$. Then, $\sshuffle_n \leqW (\tcn\times\lpo')^{2^n-1}$
\end{lemma}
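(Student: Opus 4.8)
The plan is to reuse the principle already isolated in the proof of \Cref{lem:sztind-shuffle}: if $C$ is $\subseteq$-minimal among the nonempty colour sets for which \emph{some} interval $I$ satisfies $c(I) \subseteq C$, then for any such $I$ one automatically has $c(I) = C$ and $I$ is a $c$-shuffle for the colours in $C$. Since $\lpo' \equivsW \sisfin$, it is enough to build a reduction $\sshuffle_n \leqW (\tcn \times \sisfin)^{2^n-1}$, devoting one copy of $\tcn \times \sisfin$ to each of the $2^n-1$ nonempty subsets $C \subseteq n$. This is in the spirit of \Cref{lem:cshuffle-leq-lpo}, but now the colour set we report has to come equipped with a compatible interval, which forces us to do the two searches in a coupled way.

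Concretely, I would fix a computable enumeration $(I_j)_{j\in\bbN}$ of the intervals of $\bbQ$ and, given an instance $(n,c)$, prepare for each nonempty $C \subseteq n$ two inputs in parallel. The first is fed to $\sisfin$ (equivalently $\lpo'$) and answers the binary $\Sigma^0_2$ question ``is there an interval $I$ with $c(I) \subseteq C$?'', the matrix $c(I) \subseteq C$ being $\Pi^0_1$ in the endpoints of $I$. The second is a computable enumeration of $\{\,j : \exists w \in I_j,\ c(w) \notin C\,\}$, which is uniformly c.e.\ in $c$ and $C$, and is fed to $\tcn$. Feeding all these to $(\tcn\times\sisfin)^{2^n-1}$ yields, for each nonempty $C$, a bit $b_C$ (equal to $1$ exactly when some interval has all its colours in $C$) and a number $j_C$; crucially, whenever $b_C = 1$ the enumeration above does not exhaust $\bbN$, so $\tcn$ is forced to return a genuine witness, i.e.\ $c(I_{j_C}) \subseteq C$.

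For the output I would let $C^*$ be $\subseteq$-minimal among the nonempty sets with $b_{C^*} = 1$ (such a $C^*$ exists since $b_n = 1$ trivially) and return $(I_{j_{C^*}}, C^*)$. Correctness then follows the pattern of \Cref{lem:sztind-shuffle}: writing $I = I_{j_{C^*}}$ we have $c(I) \subseteq C^*$, and in fact $c(I) = C^*$, since otherwise $c(I)$ would be a nonempty proper subset of $C^*$ with $b_{c(I)} = 1$ witnessed by $I$ itself, contradicting minimality; similarly, if some $a \in C^*$ failed to occur densely in $I$, a subinterval $J \subseteq I$ avoiding $a$ would give $\emptyset \neq c(J) \subseteq C^* \setminus \{a\} \subsetneq C^*$ with $b_{c(J)}=1$, again contradicting minimality (and the case $|C^*|=1$ is vacuous). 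Hence every colour of $c(I) = C^*$ occurs densely in $I$, which is exactly what it means for $I$ to be a $c$-shuffle for the colours in $C^*$, so $(I,C^*) \in \sshuffle_n(n,c)$.

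The one delicate point — and the reason for the exponential blow-up — is that all the $\tcn$-instances must be committed to \emph{before} the $\sisfin$-answers tell us which $C^*$ is the relevant colour set, so we cannot simply run a single search for the final interval; instead we run one search $e^C$ per nonempty $C$, knowing that for the eventually-selected $C^*$ it returns the interval we need and that for every other $C$ it is harmless. Conversely, extracting $(I_{j_{C^*}}, C^*)$ from the answers uses nothing but the answers themselves, so the reduction is in fact strong. Everything else is routine bookkeeping with the enumeration $(I_j)_j$.
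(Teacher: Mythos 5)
Your proposal is correct and follows essentially the same strategy as the paper's proof: one copy of $\tcn\times\lpo'$ (in the guise of $\sisfin$) per nonempty colour set $C\subseteq n$, with the $\lpo'$-component deciding whether some interval has all its colours in $C$ and the $\tcn$-component forced to exhibit such an interval whenever one exists, after which a minimal successful colour set yields a shuffle exactly as in \Cref{lem:sztind-shuffle}. The only differences are cosmetic: the paper interleaves the two searches into a single staged construction and selects a colour set of minimal cardinality, whereas you feed $\tcn$ the set of bad interval indices directly and select a $\subseteq$-minimal colour set — both variants are sound.
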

\begin{proof}
  Let $(n,c)$ be an instance of $\sshuffle$. The idea of the proof of $\sshuffle_n\leqW (\tcn \times \lpo')^{2^n-1}$ is, in essence, to combine the proofs of \Cref{lem:ishuffle-leq-tcn} and of \Cref{lem:cshuffle-leq-lpo}: we want to use $\tcn$ to find a candidate interval for a certain subset $C$ of $n$, and on the side we use $\lpo'$ (or equivalently, $\sisfin$) to check for every such set $C$ whether a $c$-shuffle for the colours of $C$ actually exists.
    The main difficulty with the idea described above is that the two proofs must be intertwined, in order to be able to find both a $c$-shuffle and the set of colours that appears on it.

    We proceed as follows: let $C_i$ be an enumeration of the non-empty subsets of $n$. Moreover, let us fix some computable enumeration $I_j$ of the intervals of $\bbQ$, some computable enumeration $q_h$ of the points of $\bbQ$, and some computable bijection $\langle \cdot, \cdot\rangle\colon \bbN\times\bbN\to\bbN$. For every $C_i$, we will define an instance $\langle p_i,e_i \rangle$ of $\sisfin\times \tcn$ in stages as follows: at every stage $s$, for every index $i$, there will be a ``current interval'' $I_{j_s^i}$ and a ``current point'' $q_{h_s^i}$. We begin stage $0$ by setting the current interval to $I_0$ and the current point to $q_0$ for every index $i$.

    At stage $s$, for every component $i$, there are two cases:
    \begin{itemize}
    \item if $q_{h_s^i}\not\in I_{j_s^i}$ or if $c(q_{h_s^i})\in C_i$, we set $I_{j_{s+1}^i}=I_{j_{s}^i}$ and $q_{h_{s+1}^i}=q_{h_{s}^i+1}$. Moreover, we set $p_i(s)=0$ and we let $e_i$ enumerate all the integers of the form $\langle s, a\rangle$, except $\langle s, j^i_{s+1}\rangle$. We then move to stage $s+1$.

      In plain words, for every component $i$, we check if the colour of the current point is in $C_i$, or if the current point is not in the current interval: if this happens, no special action is required.
    \item If instead $q_{h_s^i}\in I_{j_s^i}$ and $c(q_{h_s^i})\not\in C_i$, we set $I_{j_{s+1}^i}=I_{j_s^i+1}$ and $q_{h_{s+1}^i}=q_0$. Moreover, we set $p_i(s)=1$, and we let $e_i$ enumerate all the numbers of the form $\langle t,a\rangle$, for $t<s$, that had not been enumerated at a previous stage, and also all the numbers of the form $\langle s, a\rangle$, with the exception of $\langle s,j^i_{s+1}\rangle$. We then move to stage $s+1$.

      In plain words: if we find that for some component $i$ the colour of the current point is not in $C_i$, then, from the next stage, we start considering another interval, namely the next one in the fixed enumeration. We then reset the current point to $q_0$ (so that all rationals are checked again), and we record the event by letting $p_i(s)=1$ and changing the form of the points that $e_i$ is enumerating.
    \end{itemize}
    We iterate the procedure for every integer $s$.
    Let $\sigma\in (2\times \bbN)^{2^n-1}$ be such that
    \[\sigma\in (\sisfin\times \tcn)^{2^n-1}(\langle \langle p_1,e_1\rangle \dots,\langle p_{2^n-1},e_{2^n-1}\rangle)\rangle\]
    Let $k$ be the minimal cardinality of a subset $C_i\subseteq n$ such that $\sisfin (p_i)=1$: 
    notice that such a $k$ must exist, because $c$-shuffle exist. Then, we claim that the corresponding $I_{\pi_2(\sigma(i))}$ is a $c$-shuffle (by $\pi_i$ we denote the projection on the $i$th component, so $\langle \pi_1(x),\pi_2(x)\rangle=x$)). If we do this, it immediately follows that $\sshuffle\leqW ((\lpo')\times \tcn)^{2^n-1}$.

    Hence, all that is left to be done is to prove the claim. By the fact that $\sisfin(p_i)=1$, we know that the second case of the construction is triggered only finitely many times. Hence, $e_i$ does not enumerate all of $\bbN$, and so $I_{\pi_2(\sigma(i))}$ is an interval containing only colours from $C_i$. Moreover, by the minimality of $|C_i|$, we know that no subinterval of $I_{\pi_2(\sigma(i))}$ contains fewer colours, which proves that $I_{\pi_2(\sigma(i))}$ is a $c$-shuffle.
 \end{proof}

Putting the previous results together, we obtain the following.
\begin{theorem}\label{thm:sshuffle}
   $\sshuffle\equivW \tcn^*\times(\lpo')^*$
\end{theorem}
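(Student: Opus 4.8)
The plan is to prove the two directions $\tcn^*\times(\lpo')^*\leqW \sshuffle$ and $\sshuffle\leqW\tcn^*\times(\lpo')^*$ separately, as is standard for a Weihrauch equivalence. The first direction has already been isolated as a lemma in the excerpt (the one immediately preceding Theorem~\ref{thm:sshuffle}), so I would simply cite it. For the converse, I would combine the parallelization result with the parametrized reduction. The hard work is the bound $\sshuffle_n\leqW(\tcn\times\lpo')^{2^n-1}$, which is Lemma~\ref{lem:sshuffle-leq-tcnlpo}, so the theorem itself becomes a short assembly argument.

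Concretely, the proof would go: $\sshuffle\equivW\bigsqcup_{n\in\bbN}\sshuffle_n$ since the number of colours is part of the input and $\sshuffle$ restricted to its computable instance is trivial; hence by Lemma~\ref{lem:cshuffle-finparall} it suffices to reduce each $\sshuffle_n$ to $\tcn^*\times(\lpo')^*$. By Lemma~\ref{lem:sshuffle-leq-tcnlpo} we have $\sshuffle_n\leqW(\tcn\times\lpo')^{2^n-1}$, and since the parallel product distributes over parallel powers we get $(\tcn\times\lpo')^{2^n-1}\equivW\tcn^{2^n-1}\times(\lpo')^{2^n-1}\leqW\tcn^*\times(\lpo')^*$. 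Taking the join over $n$ and using $(\tcn^*\times(\lpo')^*)^*\equivW\tcn^*\times(\lpo')^*$ (which follows from $\tcn^{**}\equivW\tcn^*$ and $(\lpo')^{**}\equivW(\lpo')^*$, and the fact that finite parallelization commutes with finite products) yields $\sshuffle\leqW\tcn^*\times(\lpo')^*$. Combined with the preceding lemma this gives the claimed equivalence.

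The only genuinely delicate point in this assembly is making sure the join $\bigsqcup_n \sshuffle_n$ really reduces uniformly: given an input $(n,c)$ for $\sshuffle$ we first read off $n$, then apply the reduction of Lemma~\ref{lem:sshuffle-leq-tcnlpo} for that particular $n$, which requests $2^n-1$ instances of $\tcn\times\lpo'$ — and since $\tcn^*\times(\lpo')^*$ lets us request any finite number of instances with the number supplied as part of the input, this is unproblematic. So there is no real obstacle at the level of Theorem~\ref{thm:sshuffle}; all the combinatorial difficulty has been front-loaded into Lemmas~\ref{lem:ishuffle-leq-tcn}, \ref{lem:cshuffle-leq-lpo} and \ref{lem:sshuffle-leq-tcnlpo}. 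If anything, the one thing to be careful about is citing incomparability (Lemma~\ref{lem:incomp-lpo-tcn}) if one wants to additionally remark that neither $\tcn^*$ nor $(\lpo')^*$ alone suffices, but that is a corollary rather than part of the proof of the stated equivalence.

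\begin{proof}
The reduction $\tcn^*\times(\lpo')^*\leqW\sshuffle$ is the content of the lemma preceding the statement. For the converse, observe first that $\sshuffle\equivW\bigsqcup_{n\in\bbN}\sshuffle_n$, since the number of colours is given as part of the input (and $\sshuffle$ has a computable instance, so the finitely many small cases cause no trouble). By \Cref{lem:cshuffle-finparall} it therefore suffices to show $\sshuffle_n\leqW\tcn^*\times(\lpo')^*$ for each $n$. By \Cref{lem:sshuffle-leq-tcnlpo} we have $\sshuffle_n\leqW(\tcn\times\lpo')^{2^n-1}$, and since the parallel product is associative and commutative up to Weihrauch equivalence, $(\tcn\times\lpo')^{2^n-1}\equivW\tcn^{2^n-1}\times(\lpo')^{2^n-1}\leqW\tcn^*\times(\lpo')^*$. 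Taking the join over $n\in\bbN$ and using that $\tcn^*\times(\lpo')^*$ absorbs finite parallelization, we obtain $\sshuffle\leqW\tcn^*\times(\lpo')^*$, which completes the proof.
\end{proof}
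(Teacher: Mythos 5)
Your proposal is correct and matches the paper's (implicit) argument: the paper also obtains the lower bound from the lemma immediately preceding the theorem and the upper bound by assembling Lemma~\ref{lem:sshuffle-leq-tcnlpo} with the standard closure properties of $\tcn^*\times(\lpo')^*$ under finite parallelization. The paper simply states ``putting the previous results together'' where you have spelled out the bookkeeping.
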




\subsection{The $\etap$-problem}
\label{subsec:eta}
A weakening of the shuffle principle was studied in \cite{FrittaionP17} under the name $\etap$. The principle $\etap$ asserts that for any colouring of $\mathbb{Q}$ in finitely many colours, some colour will be dense somewhere. We formalize it here as follows:

\begin{definition}
The principle $\etap$ takes as input a pair $(k,\alpha)$ where $k\in\Nn$ and $\alpha:\mathbb{Q}\to k$ is a colouring, and returns an interval $I$ and a colour $n<k$ such that $\alpha^{-1}(n)$ is dense in $I$. The principle $\ietap$ returns only the interval $I$, $\cetap$ only the dense colour $n$. Let $(\cetap)_k$ be the restriction of $\cetap$ to $k$-colourings.
\end{definition}

An important aspect of the definition above to notice is that we require a bound on the number of colours used to be declared in the instance of $\etap$. We discuss what happens if the number of colours is unknown and merely promised to be finite in Section \ref{sec:colours}.

While $\etap$ also exhibits the pattern that we can neither compute a suitable interval from knowing the dense colour nor vice versa, we shall see that as far as the Weihrauch degree is concerned, finding the interval is as hard as finding both interval and colour. Our proof does not preserve the number of colours though - we leave open whether this can be avoided.

\begin{proposition}
$\etap \equivW \ietap \equivW \tcn^* \equivW \ishuffle$
\begin{proof}
Taking into account Theorem \ref{thm:tcn-ect-ishuffle}, it suffices for us to show that $\etap \leqW \ishuffle$ and that $\ect \leqW \ietap$. For $\etap \leqW \ishuffle$ we observe that an interval which is a shuffle not only has a dense colour in it, but every colour that appears is dense. Thus, we return the interval obtained from $\ishuffle$ on the same colouring, together with the first colour we spot in that interval.

It remains to prove that $\ect \leqW \ietap$. Given a $k$-colouring $c$ of $\mathbb{N}$, we will compute a $2^k$-colouring $\alpha$ of $\mathbb{Q}$. We
  view the $2^k$-colouring as a colouring by subsets of $k$, i.e.~each rational gets assigned a set of the original colours. To determine whether the
  $n$-th rational $q_n$ should be assigned the colour $j < k$, we consider the number $m_{n,j} = |\{s \mid s \leq n \wedge c(s) = j\}|$ of prior ocurrences
  of the colour $j$ in $c$. If the integer part of $q_n \cdot 2^{m_{n,j}}$ is odd, $q_n$ is assigned colour $j$, otherwise not.

If $j$ appears only finitely many times in $c$, then $m_{n,j}$ is eventually constant, and the distribution of $j$ in $\alpha$ follows (with finitely many exceptions) the pattern of alternating intervals of width $2^{-m_{n,j}}$. This ensures that none of the $2^k$-many colours for $\alpha$ can be dense on an interval wider than $2^{-m_{n,j}}$. Subsequently, we find that the width of the interval having a dense colour returned by $i\etap$ provide a suitable bound to return for $\ect$.
\end{proof}
\end{proposition}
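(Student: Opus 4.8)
The plan is to close the cycle of reductions
\[ \ect \;\leqW\; \ietap \;\leqW\; \etap \;\leqW\; \ishuffle, \]
so that, together with $\ishuffle\equivW\tcn^*\equivW\ect$ from \Cref{thm:tcn-ect-ishuffle}, all five problems fall into one Weihrauch degree; the stated equivalences are then special cases. The two rightmost reductions come for free. For $\ietap\leqW\etap$ one simply discards the colour component of the output. For $\etap\leqW\ishuffle$ one uses that an interval witnessing $\ishuffle$ for a colouring $\alpha$ is an $\alpha$-shuffle, hence \emph{every} colour occurring in it is dense there: run $\ishuffle$ on the same instance, then output the returned interval $I$ together with the colour $\alpha(q)$ of the first enumerated rational $q$ inside $I$.

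The work is in $\ect\leqW\ietap$. Given an $\ect$-instance $(k,c)$ with $c\colon\bbN\to k$, I would build a colouring $\alpha\colon\bbQ\to 2^k$, thinking of $2^k$ as the subsets of $k$, as follows: for the $n$-th rational $q_n$ and each colour $j<k$, let $\ell_{n,j}$ be the largest $s\le n$ with $c(s)=j$; if there is no such $s$, put $j\notin\alpha(q_n)$, and otherwise put $j\in\alpha(q_n)$ precisely when $\lfloor q_n\cdot 2^{\ell_{n,j}}\rfloor$ is odd. Then I would feed $(2^k,\alpha)$ to $\ietap$, obtain an interval $I={]u,v[}$, and output the bound $b:=\lceil\log_2\tfrac{1}{v-u}\rceil$ (and $b:=0$ when $v-u\ge 1$). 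The point of using the position $\ell_{n,j}$ of the last occurrence as the exponent is that the width of any interval carrying a dense colour then bounds the last place where a colour that occurs only finitely often in $c$ can appear.

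To justify the output: if a colour $j$ occurs only finitely often, with last occurrence at position $p_j$, then $\ell_{n,j}=p_j$ for every $n\ge p_j$, so up to the finitely many rationals $q_n$ with $n<p_j$ the set $\{q_n : j\in\alpha(q_n)\}$ alternates along dyadic blocks of width $2^{-p_j}$. Hence if some $2^k$-colour $C$ is dense on an interval $I'$, that $I'$ must sit inside a single such block of the parity dictated by whether $j\in C$: a sub-interval of $I'$ straddling a block of the wrong parity would, after passing to a further sub-interval avoiding the finitely many exceptional rationals, contain no point of colour $C$. Therefore $v-u\le 2^{-p_j}$ for every finitely-occurring $j$, so $b\ge p_j$ for all of them, so every $x>b$ is past the last occurrence of every finitely-occurring colour; i.e.\ $c(x)$ occurs infinitely often, which is exactly $\forall x>b\,\exists y>x\,(c(x)=c(y))$.

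The hard part will be making that density analysis watertight: one has to show both that the finitely many ``pre-stabilisation'' rationals cannot rescue the density of a colour (using that every interval has a sub-interval disjoint from a given finite set) and that the nested dyadic block structure at the scales $2^{-p_j}$ really does force the returned interval to be narrower than $2^{-p_j}$. A smaller point to check is that $\ietap$ is total --- every finite colouring of $\bbQ$ has a colour dense on some interval, for instance by the truth of $\weishuffle$ --- so that there is always an interval $I$ from which to read off $b$.
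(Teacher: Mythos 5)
Your proposal is correct and follows the same route as the paper: $\ietap \leqW \etap \leqW \ishuffle$ are handled identically, and $\ect \leqW \ietap$ is obtained by the same device of a $2^k$-valued colouring of $\bbQ$ built from alternating dyadic blocks, with the $\ect$-bound read off from the width of the interval returned by $\ietap$. The one substantive difference is the exponent governing the block width: the paper uses $m_{n,j}$, the \emph{number} of occurrences of colour $j$ up to stage $n$, whereas you use $\ell_{n,j}$, the \emph{position} of its most recent occurrence. Your choice is the better one, and it actually repairs a gap in the paper's final step. With the count, the width $w$ of the returned interval only yields $M_j \le \log_2(1/w)$ for the total number $M_j$ of occurrences of each finitely-occurring colour $j$; but $\ect$ demands a bound past the \emph{positions} of their last occurrences, and a count bound does not computably convert into a position bound (a colour may occur exactly once, arbitrarily late, so an interval of width $1/2$ could be returned while the last occurrence sits at position $10^{100}$). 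With the last-occurrence position as exponent, the width directly bounds $p_j$ for every finitely-occurring $j$, and the conclusion $\forall x>b\,\exists y>x\,(c(x)=c(y))$ is immediate, exactly as you argue. Your density analysis --- any interval wider than $2^{-p_j}$ contains a full block of the wrong parity, inside which one passes to a subinterval avoiding the finitely many pre-stabilisation rationals --- is the right way to make the key step watertight, and the remarks on totality of $\ietap$ and on never-occurring colours are easily checked.
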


\begin{question}
How do the restrictions $(\etap)_k$ and $(\ietap)_k$ to $k$-colourings compare to $\tcn^\ell$?
\end{question}

The proposition above implies that $\cetap$ has to be weaker than $(\lpo')^*$, since it is immediate to see that it is computed by both $\etap$ and $\cshuffle$. We now give more bounds on its strength.

\begin{lemma}
$(\cetap)_{k+1} \leqW \crt^1_{k+1} \times (\cetap)_k$
\begin{proof}
Fix some enumeration $(I_n)_{n \in \mathbb{N}}$ of all rational intervals. The forwards reduction witness is constructed as follows. We keep track of an interval index $n$ and a colour $c$, starting with $n = 0$ and $c = 0$. We keep writing the current value of $c$ to the input of $\crt^1_{k+1}$, and we construct a colouring $\beta : \mathbb{Q} \to \{0,1,\dots,k-1\}$ by scaling the colouring $\alpha$ restricting to $I_n$ up to $\mathbb{Q}$, while excluding $c$ and subtracting $1$ from every colour $d > c$. The fact that we may have already assigned $\beta$-colours to finitely many points in a different way before is immaterial.

If we ever find a rational $q \in I_n$ with $\alpha(q) = c < k$, we increment $c$. If we find $q \in I_n$ with $\alpha(q) = c = k$, we set $c = 0$ and increment $n$. In particular, we stick with any particular $I_n$ until we have found points of all different colours inside it.

The backwards reduction witness receives two colours, $c \in \{0,1,\ldots,k\}$ and $d \in \{0,1,\ldots,k-1\}$. If $d < c$, it returns $d$. If $d \geq c$, it returns $d + 1$.

To see that the reduction works correctly, first consider the case where every colour is dense everywhere. In this case, everything is a correct answer, and the reduction is trivially correct. Otherwise, there has to be some interval $I_n$ and some colour $c$ such that $\alpha^{-1}(c) \cap I_n = \emptyset$. In this case, our updating of $n$ and $c$ will eventually stabilize at such a pair. The answer we will receive from $\crt^1_{k+1}$ is $c$. Apart from finitely many points, $\beta$ will be look like the restriction of $\alpha$ to $I_n$ with $c$ skipped. Thus, any colour $d$ which is dense somewhere for $\beta$ will be dense somewhere inside $I_n$ for $\alpha$ if $d < c$, or if $d \geq c$, then $d + 1$ will be dense. Thus, the reduction works.
\end{proof}
\end{lemma}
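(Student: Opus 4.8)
The plan is to exploit $\crt^1_{k+1}$ as a device that commits to a single colour $c$ which will then be \emph{discarded}, leaving a problem on only $k$ colours that can be handed to $(\cetap)_k$. The underlying combinatorial observation is: if it is not the case that every colour of $\alpha$ is dense everywhere, then some rational interval $I$ omits an entire colour $c$; on $I$ the colouring $\alpha$ effectively uses at most $k$ colours, so after transporting $\alpha$ restricted to $I$ along a computable order isomorphism $I \cong \mathbb{Q}$ and renaming the surviving colours to $\{0,\dots,k-1\}$, we get a genuine instance of $(\cetap)_k$; its dense colour, renamed back, is dense on a subinterval of $I$, hence on some rational interval, which is what $(\cetap)_{k+1}$ asks for.

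Concretely, I would first fix a computable enumeration $(I_n)_n$ of the rational intervals and run a search maintaining a ``current'' pair $(n,c)$, starting at $(0,0)$. At every step we write the current $c$ onto the tape feeding $\crt^1_{k+1}$, and in parallel we build a $k$-colouring $\beta\colon\mathbb{Q}\to\{0,\dots,k-1\}$ by scaling $\alpha$ restricted to $I_n$ up to all of $\mathbb{Q}$, deleting colour $c$ and subtracting $1$ from every colour above $c$. The update rule scans $I_n$ for points of each colour in turn: on finding $q\in I_n$ with $\alpha(q)=c<k$ we increment $c$; on finding $q\in I_n$ with $\alpha(q)=c=k$ we reset $c=0$ and pass to $I_{n+1}$. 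Thus we leave an interval only once all $k+1$ colours have turned up inside it.

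The verification splits into two cases. If every $I_n$ contains every colour, then every colour of $\alpha$ is dense in $\mathbb{Q}$, so any output is correct and the reduction is trivially sound (this subsumes every run in which $(n,c)$ fails to stabilize, so the behaviour of $\beta$ in that case is irrelevant). Otherwise $(n,c)$ stabilizes at the least $n$ whose interval omits a colour, with $c$ the least such omitted colour; then $c$ is written cofinitely often and every other value only finitely often, so $\crt^1_{k+1}$ must return $c$, and $\beta$ agrees — outside finitely many points — with the rescaled copy of $\alpha$ restricted to $I_n$ with colour $c$ removed. The backwards reduction witness then takes the colour $c$ from $\crt^1_{k+1}$ together with the colour $d<k$ returned by $(\cetap)_k$ and outputs $d$ if $d<c$ and $d+1$ if $d\ge c$; by the correspondence above this colour is dense on a subinterval of $I_n$, hence dense in some rational interval.

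The step I expect to require the most care is the bookkeeping around the evolving colouring $\beta$: it is built rational by rational while $(n,c)$ may still be changing, so one must argue that only finitely many rationals ever receive a $\beta$-value computed from a non-final pair, and that such finitely many exceptions can neither destroy nor create density on an interval — which is exactly what makes ``$d$ dense somewhere for $\beta$'' transfer to ``the renamed colour is dense somewhere inside $I_n$ for $\alpha$''. Beyond that, the remaining points are routine: that the non-stabilizing run is always covered by the trivial case, and that in the stabilizing case $\crt^1_{k+1}$ cannot return a spurious colour, which is immediate from the frequency count above.
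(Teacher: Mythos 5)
Your proposal is correct and follows essentially the same route as the paper's own proof: the same $(n,c)$-search over the interval enumeration, the same rescaled $k$-colouring $\beta$ with colour $c$ deleted, the same backwards witness, and the same two-case verification. The extra care you flag about the finitely many rationals coloured before $(n,c)$ stabilizes is exactly the point the paper dismisses as ``immaterial'', and your justification for it is the right one.
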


\begin{corollary}
\label{corr:etainduction}
\begin{align*}(\cetap)_k & \leqW \crt^1_{k} \times \crt^1_{k-1} \times \ldots \times \crt^1_{2} \\
 & \leqW (\crt^1_2)^{k-1} \times (\crt^1_2)^{k-2} \times \ldots (\crt)^1_2 \\
   & \equivW (\crt^1_2)^{k(k-1)/2} \end{align*}
\end{corollary}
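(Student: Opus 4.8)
The plan is to derive the corollary by iterating the preceding lemma and then simplifying, using that the parallel product is monotone in each argument with respect to $\leqW$ (i.e., $f \leqW f'$ and $g \leqW g'$ imply $f \times g \leqW f' \times g'$).

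For the first displayed inequality I would induct on $k$. The base case $k = 1$ is trivial: a $1$-colouring of $\bbQ$ has colour $0$ dense in every interval, so $(\cetap)_1$ always outputs $0$ and is therefore computable, i.e.\ $(\cetap)_1 \leqW \id$ (the empty parallel product heading the list). For the inductive step, the preceding lemma gives $(\cetap)_k \leqW \crt^1_k \times (\cetap)_{k-1}$; composing this with the induction hypothesis $(\cetap)_{k-1} \leqW \crt^1_{k-1} \times \cdots \times \crt^1_2$ and using monotonicity of $\times$ yields $(\cetap)_k \leqW \crt^1_k \times \crt^1_{k-1} \times \cdots \times \crt^1_2$. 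This also explains why the product starts at $\crt^1_2$: the $k = 1$ layer contributes nothing.

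For the second inequality I would use the reduction $\crt^1_{j} \leqW (\crt^1_2)^{j-1}$, which for $j = k+1$ is exactly the bound $\crt^1_{k+1} \leqW (\crt^1_2)^k$ appearing in the proof of \Cref{theo:rttkvstcnm} (the general case follows from it by an immediate induction). Substituting this into each factor of the first line, again by monotonicity of $\times$, gives $(\cetap)_k \leqW (\crt^1_2)^{k-1} \times (\crt^1_2)^{k-2} \times \cdots \times (\crt^1_2)^1$.

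Finally, since $\times$ is associative and commutative up to $\equivW$ and $(\crt^1_2)^a \times (\crt^1_2)^b \equivW (\crt^1_2)^{a+b}$, this last expression is $\equivW (\crt^1_2)^m$ with $m = \sum_{j=1}^{k-1} j = k(k-1)/2$. I do not anticipate any genuine obstacle here: all the mathematical content sits in the preceding lemma and in \Cref{theo:rttkvstcnm}, and what remains is routine bookkeeping, the only points deserving (minor) attention being the base case of the induction and the fact that $\leqW$ is a congruence for the parallel product.
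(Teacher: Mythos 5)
Your proposal is correct and is exactly the derivation the paper intends (the corollary is stated without an explicit proof, as an immediate consequence of the preceding lemma together with the bound $\crt^1_{k+1} \leqW (\crt^1_2)^k$ from the proof of Theorem~\ref{theo:rttkvstcnm}). The iteration of the lemma, the trivial base case $(\cetap)_1$, and the exponent count $\sum_{j=1}^{k-1} j = k(k-1)/2$ are all handled correctly.
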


These bounds allow us to characterize the strength of $(\cetap)_k$ by virtue of our considerations of the $k$-finitary part of $\crt_m$.

\begin{corollary}
$(\cetap)_k \equivW \crt^1_k$, whence $\cetap \equivW \crt^1_+$.
\begin{proof}
Corollary \ref{corr:etainduction} tells us that $(\cetap)_k \leqW \crt^1_m$ for some $m \geq k$. As $(\cetap)_k$ has codomain $k$, this already suffices to conclude that $(\cetap)_k \leqW \crt^1_m$ by Corollary \ref{corr:fincrtkm}. Next we show $\crt^1_k \leqW (\cetap)_k$. Fix a computable bijection $\nu : \mathbb{N} \to \mathbb{Q}$. Given a colouring $f : \mathbb{N} \to k$ as input for $\crt^1_k$, we define the colour $\alpha_f : \mathbb{Q} \to k$ by $\alpha_f(q) = f(\nu^{-1}(q))$. Clearly, any colour appearing somewhere dense in $\alpha_f$ must have appeared infinitely often in $f$.
\end{proof}
\end{corollary}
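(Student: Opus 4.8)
The plan is to prove, for each fixed $k$, the two reductions $(\cetap)_k\leqW\crt^1_k$ and $\crt^1_k\leqW(\cetap)_k$ separately, and then to obtain the statement about $\cetap$ and $\crt^1_+$ by passing to countable coproducts, noting that both reductions are uniform in $k$.

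For the upper bound I would start from Corollary~\ref{corr:etainduction}, which already gives $(\cetap)_k\leqW\crt^1_k\times\crt^1_{k-1}\times\dots\times\crt^1_2$. Using $\crt^1_j\equivW(\Cc j)'$ together with the identity $\Cc a\times\Cc b\equivW\Cc{ab}$ for closed choice on finite sets and the fact that the jump commutes with parallel products, this finite product collapses to a single problem $\crt^1_m$; since $\crt^1_k$ occurs among the factors, $k\mid m$, so in particular $m\ge k$. The crucial step is then the observation that $(\cetap)_k$ has codomain $\mathbf{k}$ (it returns a colour $n<k$), so that $(\cetap)_k\leqW\operatorname{Fin}_k(\crt^1_m)$; by Corollary~\ref{corr:fincrtkm} this finitary part is $\crt^1_{\min\{k,m\}}\equivW\crt^1_k$, which closes the direction.

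For the lower bound I would fix a computable bijection $\nu:\bbN\to\bbQ$ and, given $f:\bbN\to k$ as input to $\crt^1_k$, pass to the colouring $\alpha_f=f\circ\nu^{-1}:\bbQ\to k$ as an instance of $(\cetap)_k$. Since a subset of $\bbQ$ that is dense in some interval is infinite, any colour that $(\cetap)_k$ certifies as dense somewhere for $\alpha_f$ is attained infinitely often by $\alpha_f$, hence infinitely often by $f$; the identity backward witness therefore works, and this is in fact a strong reduction. Together with the previous paragraph this yields $(\cetap)_k\equivW\crt^1_k$.

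Finally, $\cetap\equivW\bigsqcup_{k>0}(\cetap)_k$ (the number of colours being part of the input) and $\crt^1_+=\bigsqcup_{k>0}\crt^1_k$ by definition, and both families of reductions above are computed uniformly in $k$ — for the forward direction one can simply send $(k,\alpha)$ to an instance of some $\crt^1_{m}$ with $m$ depending computably on $k$, which reduces to $\crt^1_+$ trivially — so the equivalence lifts to $\cetap\equivW\crt^1_+$. The one genuinely non-routine ingredient is the finitary-part step: Corollary~\ref{corr:etainduction} on its own only gives a reduction to $\crt^1_m$ for an $m$ much larger than $k$, and it is the combination of the codomain constraint on $(\cetap)_k$ with Corollary~\ref{corr:fincrtkm} that brings this down to exactly $\crt^1_k$; everything else is bookkeeping.
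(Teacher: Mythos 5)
Your proposal is correct and follows essentially the same route as the paper: Corollary~\ref{corr:etainduction} plus the codomain-$\mathbf{k}$ observation and Corollary~\ref{corr:fincrtkm} for the upper bound, and the pullback of a colouring of $\bbN$ along a computable bijection $\nu:\bbN\to\bbQ$ for the lower bound. The only difference is that you spell out how the finite product $\crt^1_k\times\dots\times\crt^1_2$ collapses to a single $\crt^1_m$ (via $(\Cc j)'$ and the jump commuting with products), a bookkeeping step the paper leaves implicit.
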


Our result that $c\etap \equivW \crt^1_+$ stands in contrast to the reverse mathematics results obtained in \cite{FrittaionP17}. In reverse mathematics, $\rt^1_\Nn$ is equivalent to $\mathrm{B}\Sigma_2^0$ \cite{hirstphd}, yet \cite[Theorem 3.5]{FrittaionP17} shows that $\mathrm{B}\Sigma_2^0$ does not imply $\etap$ over $\mathrm{RCA}_0$.


\section{$\art_\bbQ$ and related problems}\label{sec:artq}

We now analyse the logical strength of the principle $\art_\bbQ$.
As in the case of $\weishuffle$, we start with a proof of $\art_\bbQ$
in $\rca+\sztind$.
This will give us enough insights to assess the strength of the corresponding
Weihrauch problems.

\subsection{Additive Ramsey over $\bbQ$ in reverse mathematics}

As a preliminary step, we figure out the strength of $\ort_\bbQ$,
the ordered Ramsey theorem over $\bbQ$.
It is readily provable from $\rca$ and is thus much weaker than most other principles we analyse. We can be a bit more precise by considering $\rca^*$
which is basically the weakening of $\rca$ where induction is restricted to $\Delta^0_1$ formulas 
(see~\cite[Definition X.4.1]{simpson} for a nice formal definition).
\begin{lemma}
  \label{lem:revnath-ortq}
  $\rca\vdash\weiortom_{\bbQ}$. Moreover, if we substitute ``finite poset $P$'' with ``bounded poset $P$'', 
  $\rca^* \vdash \rca \Rightarrow \weiortom_{\bbQ}$
\end{lemma}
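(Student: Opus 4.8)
The plan is to exploit the monotonicity built into the notion of an ordered colouring: shrinking an interval can only move the colour of its endpoints $\prec_P$-downwards, so it is enough to find an interval whose colour is $\prec_P$-minimal among the colours realized by its sub-pairs. Concretely, I would fix a finite poset $(P,\prec_P)$ and an ordered colouring $c\colon[\bbQ]^2\to P$, choose an arbitrary pair $x_0<y_0$ of rationals, and form the set
\[
S=\{p\in P\mid \exists x,y\ (x_0\le x<y\le y_0\ \wedge\ c(x,y)=p)\}.
\]
This is a $\Sigma^0_1$-definable subset of the finite set $P$, hence it exists by the bounded $\Sigma^0_1$-comprehension available over $\rca$; and $S\neq\emptyset$ because $c(x_0,y_0)\in S$.

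Being a non-empty subset of a finite poset, $S$ has a $\prec_P$-minimal element $p^*$, which can be exhibited in $\rca$ by a bounded search through $P$ (the existence of a minimal element following, as usual, from the fact that an infinite $\prec_P$-descending chain in $P$ would give more than $\card{P}$ pairwise distinct elements of $P$). Fix a witnessing pair $x^*,y^*$ with $x_0\le x^*<y^*\le y_0$ and $c(x^*,y^*)=p^*$. I then claim $]x^*,y^*[$ is $c$-homogeneous. Indeed, take any rationals $x,y$ with $x^*<x<y<y^*$; then $x_0\le x^*\le x<y\le y^*\le y_0$, so on the one hand $c(x,y)\in S$, and on the other hand $c(x,y)\preceq_P c(x^*,y^*)=p^*$ since $c$ is ordered. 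Minimality of $p^*$ in $S$ then forces $c(x,y)=p^*$, so $c$ is constant on the pairs of $]x^*,y^*[$, as required.

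For the ``moreover'' clause I would just inspect this argument: it nowhere uses that $P$ is in bijection with an initial segment, only that its carrier is bounded, and the only comprehension/induction it invokes — bounded $\Sigma^0_1$ comprehension to form $S$, plus a bounded search together with finite pigeonhole to read off a $\prec_P$-minimal element of $S$ — are provable in $\rca$. Formalizing this observation in the weak base theory yields $\rca^*\vdash\rca\Rightarrow\weiortom_{\bbQ}$ with ``finite poset'' relaxed to ``bounded poset''. There is no genuine obstacle here; the only point needing care is this reverse-mathematical bookkeeping — checking that forming $S$ and extracting a minimal element of a bounded poset are both legitimate already over $\rca$, so that nothing stronger sneaks in.
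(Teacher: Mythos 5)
Your proof of the first claim is correct and is essentially the paper's argument: form the ($\Sigma^0_1$-definable, bounded) set of colours realized by pairs, pick a $\prec_P$-minimal element $p^*$ of it together with a witnessing pair $(x^*,y^*)$, and use orderedness plus minimality to see that $]x^*,y^*[$ is homogeneous. Bounded $\Sigma^0_1$-comprehension is indeed available in $\rca$, so no issue there.

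The problem is with the ``moreover'' clause. Read literally, ``$\rca^* \vdash \rca \Rightarrow \weiortom_{\bbQ}$'' is subsumed by the first claim (assuming $\Sigma^0_1$-induction on top of $\rca^*$ just gives back $\rca$), which is why your treatment of it reduces to ``inspect the argument''. But the statement as printed has the implication the wrong way round: the content of the second claim, and what the paper actually proves, is the \emph{reversal} $\rca^* \vdash \weiortom_{\bbQ} \Rightarrow \Sigma^0_1\textrm{-induction}$ for bounded posets (the clarification ``bounded'' rather than ``finite'' is needed precisely because one is now working below $\Sigma^0_1$-induction, where ``finite'' is not robust). Your proposal contains no trace of this direction, which requires a genuine construction: given a downward-closed $\Sigma^0_1$ formula $\varphi(i) \equiv \exists k\,\theta(i,k)$ bounded by $n$, one sets $h(x,y)$ to be the largest $h$ with $|x-y| < 2^{-h}$ and colours the pair $(x,y)$ by the least $i \le n$ admitting no witness $k < h(x,y)$; this is an ordered colouring into the bounded poset $([0,n],\ge)$, the colour of any homogeneous interval is the exact threshold $m$ with $\forall i(\varphi(i) \leftrightarrow i < m)$, and one concludes via the Simpson--Smith criterion that $\Sigma^0_1$-bounding-to-a-least-upper-bound yields $\Sigma^0_1$-induction over $\rca^*$. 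So while nothing you wrote is false, the second half of the lemma is not established by your argument.
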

\begin{proof}
  We start by proving $\weiortom_{\bbQ}$ in $\rca$. Let $c:[\bbQ]^2\to P$ be an
  orderd colouring, where $(P,\prec_P)$ is a finite poset. Then, $\rca$ proves
  that there exists a $\prec_P$-minimal $p\in P$ such that, for some $x'<y'$ in
  $\bbQ$, $c(x',y')=p$ (this follows, for instance, from bounded $\Sigma^0_1$
  comprehension: consider the set $\{q\in P:\exists x',y'(c(x',y')=q\}$,
  and take any $\prec_P$-minimal element of this set).
  Then, however we take $x<y$ such that $x'\leq x < y\leq y'$,
  we have that $c(x,y)=p$, by $\prec_P$-minimality of $p$ and orderedness of $c$.
  It follows that $]x',y'[$ is $c$-homogeneous (the same would be true for
  $[x',y']$), as we needed.

  We now move to prove the second claim. We point out that this claim plays
  no part in the rest of the paper, and it is only added for completeness.
  We also point out that the use of ``bounded posets'' as opposed to
  ``finite posets'' is due to the fact that the term ``finite'' is not
  well-defined when working over $\rca^*$ (see for instance
  \cite{weaker-cousins-rt-fiori-carones-kolodziejczyk-kowalik} for more on
  the matter): by this, we do not claim that the result would fail for other
  notions of ``finiteness'', we simply selected the notion that
  seems to be the most standard in the literature.
  
  Suppose that $\varphi(i)$ is a $\Sigma^0_1$ formula such that
  $\forall i,j((\varphi(j)\wedge i<j) \rightarrow \varphi(i))$,
  and that $n\in\bbN$ is such that $\forall i(\varphi(i)\rightarrow i<n)$. We
  claim that there exists $m\in\bbN$ such that
  $\forall i(\varphi(i)\leftrightarrow i<m)$.
  By Lemma $2.5$ of \cite{rca0star-simpson-smith},
  this gives $\rca^*\vdash \weiortom_{\bbQ}\Rightarrow \Sigma^0_1\text{-induction}$.

  Suppose that $\varphi(i)$ is $\exists k\theta(i,k)$, for $\theta$ a $\Delta^0_1$
  formula. We define a colouring $c(x,y)$ of pairs of rationals as follows.
  For $x\neq y$, let $h(x,y)$ be the largest integer $h$ such that $|x-y|<2^{-h}$.
  For rationals $x<y$, we let $c(x,y)$ be the least $i\leq n$ such that
  $\forall k<h(x,y)\neg \theta(k,i)$. Let us consider the poset
  $([0,n],\geq)$: then, $c:[\bbQ]^2\to [0,n]$ is ordered, since $x'\leq x < y
  \leq y'$ implies $h(x,y)\geq h(x',y')$, and so the minimal $i\leq n$ such
  that $\forall k<h(x,y)\neg \theta(k,i)$ cannot be $\leq_{\bbN}$-smaller than
  the minimal $i\leq n$ such that $\forall k<h(x',y')\neg \theta(k,i)$.

  By $\weiortom_{\bbQ}$, there is a $c$-homogeneous interval $]x',y'[$. Let $m$
  be the colour for which $]x',y'[$ is homogeneous. It is easy to see that this
  $m$ is the bound we were looking for. 
\end{proof}

We now show that the shuffle principle is equivalent to $\art_\bbQ$. So overall,
much like the Ramsey-like theorems of~\cite{KMPS19}, they are equivalent to
\sztindtxt.

\begin{lemma}\label{lem:ind-artq}
    $\rca+\weishuffle\vdash \art_\bbQ$. Hence, $\rca+\sztind\vdash\art_\bbQ$.
\end{lemma}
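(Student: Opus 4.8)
The plan is to first regularise $c$ using the much weaker principle $\ort_\bbQ$ so that it takes values in a single $\GrJ$-class, and then to apply $\weishuffle$ twice: once to pass to an interval on which $c$ looks, $\GrH$-class by $\GrH$-class, like a finite disjoint union of group-valued additive colourings, and once more to density-homogenise those group-valued pieces via the usual ``coboundary'' presentation. Throughout, recall that every sub-interval of $\bbQ$ is order-isomorphic to $\bbQ$, so $\weishuffle$ applies to colourings of sub-intervals. Since $\rca+\sztind\vdash\weishuffle$ by \Cref{lem:sztind-shuffle}, the ``hence'' clause is immediate once $\rca+\weishuffle\vdash\art_\bbQ$ is established.

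First I would apply $\ort_\bbQ$, provable in $\rca$ by \Cref{lem:revnath-ortq}, to the colouring $(x,y)\mapsto[c(x,y)]_{\GrJ}$ valued in the finite poset of $\GrJ$-classes ordered by the reverse of $\le_{\GrJ}$: additivity makes this an ordered colouring, since enlarging the interval $(x,y)$ can only drive the $\GrJ$-class of $c(x,y)$ down. This yields an interval $I_0$ and a $\GrJ$-class $J$ with $c([I_0]^2)\subseteq J$. On $I_0$, \Cref{lem:leRJR} (and its left--right dual) forces, for each fixed $x\in I_0$, all colours $c(x,y)$ with $y\in I_0$ and $y>x$ to lie in a single $\GrR$-class $R(x)$, and symmetrically all colours $c(x,y)$ with $y$ fixed in a single $\GrL$-class $L(y)$: two such colours are both in $J$ and related by $\le_{\GrR}$ (resp.\ $\le_{\GrL}$) through additivity, so the Green lemma upgrades the inequality to an equivalence. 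In particular $R$ and $L$ are computable from $c$ (read $R(x)$ off $c(x,y)$ for the first $y\in I_0$ past $x$ in a fixed enumeration of $\bbQ$), so $x\mapsto(R(x),L(x))$ is a genuine finite colouring of $I_0$.

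I would then apply $\weishuffle$ to $x\mapsto(R(x),L(x))$, obtaining a sub-interval $I_1={]u_1,v_1[}$ partitioned into dense sets $D_1,\dots,D_m$ on each of which $R$ and $L$ are constant; hence on each $[D_i]^2$ the colouring $c$ takes values in a single $\GrH$-class $H_i$ (an $\GrR$-class intersected with an $\GrL$-class inside $J$). Choosing $x<y<z$ in $D_i$ and applying \Cref{lem:Hidemgroup} to $a=c(x,y)$, $b=c(y,z)$ and $ab=c(x,z)$, all in $H_i$, shows $(H_i,\cdot)$ is a group; let $e_i$ be its identity. Fix $w\in{]u_1,v_1[}$ and, for each $i$, a point $p_i\in D_i\cap{]u_1,w[}$ (nonempty since $D_i$ is dense in $I_1$), put $Y={]w,v_1[}$, and define $g:Y\to\bigcup_i H_i$ by $g(x)=c(p_i,x)$ for the unique $i$ with $x\in D_i$; this is well defined and computable, and $g(x)\in H_i$ since $p_i<w<x$ and $p_i,x\in D_i$. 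Finally, apply $\weishuffle$ to $g$ to get a sub-interval $I_2\subseteq Y$ partitioned into dense sets $F_1,\dots,F_\ell$ on which $g$ is constant. If $x<y$ lie in a common $F_j$ then they lie in a common $D_i$ (equal $g$-values lie in a common $\GrH$-class, which pins down the piece), and additivity inside the group $H_i$ gives $c(x,y)=c(p_i,x)^{-1}c(p_i,y)=g(x)^{-1}g(y)=e_i$; so each $[F_j]^2$ is monochromatic and $I_2$ witnesses $\art_\bbQ$ for $c$.

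The routine parts — checking the auxiliary colourings are $\Delta^0_1$ and that the finite-semigroup facts used (existence of an idempotent power, \Cref{lem:Hidemgroup}, \Cref{lem:leRJR}) are available in $\rca$, which they are as remarked before \Cref{lem:Hidemgroup} — should cause no trouble. The step I expect to be the crux is the passage from $\GrJ$-constancy on $I_0$ to the well-definedness (and computability) of $R$ and $L$: it is this that makes $R,L$ into an honest colouring and makes every shuffle piece $\GrH$-class valued, and it is why the preliminary use of $\ort_\bbQ$ cannot be skipped, as for a general additive colouring $[c(x,y)]_{\GrR}$ genuinely depends on $y$.
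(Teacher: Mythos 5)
Your argument is correct, but it is organised differently from the paper's. Both proofs begin identically: apply $\ort_\bbQ$ to the $\GrJ$-class colouring to reach an interval on which all colours are $\GrJ$-equivalent, and both finish with Lemmas~\ref{lem:leRJR} and~\ref{lem:Hidemgroup}. The divergence is in the middle. The paper applies $\weishuffle$ \emph{once}, to the colouring $\gamma(z)=(c(u,z),c(z,v))$ recording the actual colours to the two endpoints of the $\ort_\bbQ$-interval; for two points $w<z$ with the same $\gamma$-value, the single identity $c(w,z)\cdot r=r$ then yields in one stroke the $\GrR$-comparison, the $\GrH$-equivalence of all colours on the piece, and (via $s=r\cdot a$, hence $s\cdot s=s$) the fact that the common colour is the idempotent. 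You instead split this into two shuffle applications: one to the coarser Green-class data $(R(x),L(x))$ to isolate $\GrH$-class-valued pieces, and a second to the coboundary $g(x)=c(p_i,x)$ to force monochromaticity. This works, and your well-definedness argument for $R$ and $L$ (additivity gives $\le_{\GrR}$, $\GrJ$-equivalence upgrades it via Lemma~\ref{lem:leRJR}) is exactly right, but note two costs. First, your parenthetical ``equal $g$-values pin down the piece'' needs the pieces $D_i$ to carry pairwise distinct $(R,L)$-values --- harmless, since you may take the $D_i$ to be the level sets of that colouring, but it should be said. Second, and more consequentially for this paper, the extra compositional layer matters downstream: the Weihrauch upper bounds in Theorem~\ref{thm:fin-theo} are read off from this proof as $\weishuffle\star\ort_\bbQ$, and your version would instead give $\weishuffle\star\weishuffle\star\ort_\bbQ$, requiring additional work (or a second appeal to Lemma~\ref{lem:comp-to-paral-prod}-type arguments) to recover the same bounds. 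For the reverse-mathematics statement itself, your proof is a complete and valid alternative.
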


\begin{proof}
Fix a finite semigroup $(S, \cdot)$ and an additive colouring $c : [\bbQ]^2 \to S$. Say a colour $s\in S$ \emph{occurs} in $X \subseteq \bbQ$ if there exists $(x,y) \in [X]^2$ such that $c(x,y) = s$.

We proceed in two stages: first, we find an interval $\ooint{u, v}$ such that all colours occurring in $\ooint{u, v}$ are $\GrJ$-equivalent to one another.
Then we find a subinterval of $\ooint{u, v}$ partitioned into finitely many dense homogeneous sets.
For the first step, we apply the following lemma to obtain a subinterval $I_1 = {]u,v[}$ of $\bbQ$ where
all colours lie in a single $\GrJ$-class.

\begin{lemma}\label{lem:interm-ortq}
For every additive colouring $c$, there exists $(u,v) \in [\bbQ]^2$
such that all colours of $\restr{c}{\ooint{u,v}}$ are $\GrJ$-equivalent to one another.
\end{lemma}
\begin{proof}
If we post-compose $c$ with a map taking a semigroup element to its $\GrJ$-class, we get an ordered colouring. Applying $\weiortom_\bbQ$ yields a suitable
  interval.
\end{proof}

  Moving on to stage two of the proof, we want to look for a subinterval of
  $I_1$ partitioned into finitely many dense homogeneous sets.
  To this end, define a colouring $\gamma: I_1 \to S^2$ by setting $\gamma(z) = (c(u,z),c(z,v))$.

By $\weishuffle$, there exist $x,y \in I_1$ with $x < y$ such that
$]x,y[$ is a $\gamma$-shuffle.
For $l, r \in S$, define $H_{l,r} \eqdef \gamma^{-1}(\{(l,r)\}) \cap \ooint{x, y}$; note that this is a set by bounded recursive comprehension.
Clearly, all $H_{l,r}$ are either empty or dense in $\ooint{x,y}$, and moreover $\ooint{x, y} = \bigcup_{l,r} H_{l, r}$. Since there are finitely many
  pairs $(l,r)$, all we have to prove is that each non-empty $H_{l,r}$ is homogeneous for $\colA$.

Let $s = c(w,z)$ such that $w, z \in H_{l,r}$ with $w < z$. By additivity of $c$ and the definition of $H_{l,r}$,
\begin{equation} \label{eq:ramQR} s \cdot r = c(w,z) \cdot c(z,v) = c(w, v)  = r. \end{equation}
In particular $r \le_{\GrR} s$.
But we also have $r \GrJ s$, which gives $r \GrR s$ by Lemma~\ref{lem:leRJR}.
This shows that all the colours occurring in $H_{l,r}$ are $\GrR$-equivalent to one another.
A dual argument shows that they are all $\GrL$-equivalent, so they are all $\GrH$-equivalent.
The assumptions of Lemma~\ref{lem:Hidemgroup} are satisfied, so their $\GrH$-class is actually a group.

All that remains to be proved is that any colour $s$ occurring in $H_{l,r}$ is actually the (necessarily unique) idempotent of this $\GrH$-class.
Since $r \GrR s$, there exists $a$ such that $s = r \cdot a$. But then by~(\ref{eq:ramQR}), $s\cdot s = s\cdot r\cdot a = r\cdot a = s$, so $s$ is necessarily the idempotent.
Thus, all sets $H_{l,r}$ are homogeneous and we are done.
\end{proof}

We conclude this section by showing that the implication proved in the Lemma above reverses., thus giving the precise strength of $\art_\bbQ$ over $\rca$.

\begin{theorem}\label{thm:shuffle-artq}
    $\rca + \art_\bbQ\vdash\weishuffle$. Hence, $\rca \vdash \art_\bbQ \leftrightarrow \sztind$.
\end{theorem}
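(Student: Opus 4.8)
The plan is to reduce $\weishuffle$ to $\art_\bbQ$ by turning a colouring of the \emph{points} of $\bbQ$ into an additive colouring of \emph{pairs} valued in a fixed finite semigroup. Given a colouring $c\colon\bbQ\to k$, I would equip $k\times k$ with the \emph{rectangular band} multiplication $(a,b)\cdot(a',b')=(a,b')$ --- associativity is immediate --- and define $c'\colon[\bbQ]^2\to k\times k$ by $c'(x,y)=(c(x),c(y))$. For $x<y<z$ one computes $c'(x,y)\cdot c'(y,z)=(c(x),c(y))\cdot(c(y),c(z))=(c(x),c(z))=c'(x,z)$, so $c'$ is additive; moreover $c'$ is computable from $c$, so $\rca$ proves that this function exists.

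Next I would apply $\art_\bbQ$ to $(k\times k,\cdot)$ and $c'$, obtaining an interval $I$ together with a finite partition of $I$ into sets $D_i$ ($i<m$), each dense in $I$, such that every $[D_i]^2$ is monochromatic for $c'$. The key observation is that each $D_i$ is then $c$-monochromatic, so that the \emph{same} partition witnesses that $I$ is a $c$-shuffle. Indeed, given $w,z\in D_i$ with $w<z$, density of $D_i$ in $I$ yields some $u\in D_i$ with $w<u<z$; since $(w,u),(u,z)\in[D_i]^2$ we get $c'(w,u)=c'(u,z)$, i.e.\ $(c(w),c(u))=(c(u),c(z))$, and comparing components gives $c(w)=c(u)=c(z)$. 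This establishes $\rca+\art_\bbQ\vdash\weishuffle$.

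For the concluding equivalence, $\rca+\sztind\vdash\art_\bbQ$ is already Lemma~\ref{lem:ind-artq}, while combining the part just proved with $\rca+\weishuffle\vdash\sztind$ (Theorem~\ref{thm:revmath-shuffle}) gives $\rca+\art_\bbQ\vdash\sztind$; hence $\rca\vdash\art_\bbQ\leftrightarrow\sztind$.

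I do not expect a serious obstacle; the one point that needs care is the choice of encoding. A more obvious candidate --- mapping $(x,y)$ to the set of colours attained on the closed interval $[x,y]$, which is additive for the semigroup of nonempty subsets of $k$ under union --- is only $\Sigma^0_1$-definable from $c$ and hence not a legitimate object of $\rca$; the rectangular band avoids this since $c'$ is outright computable from $c$. One should also remember to use that ``$D_i$ dense in $I$'' supplies a third point of $D_i$ strictly between any two of its points, which is precisely what the component-comparison step needs.
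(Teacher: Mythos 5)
Your proof is correct and follows essentially the same route as the paper: both reduce $\weishuffle$ to $\art_\bbQ$ by encoding the point colouring into an additive pair colouring over a degenerate semigroup whose multiplication forgets the middle point, and then observe that a densely homogeneous partition for the pair colouring is already a shuffle for the original one. The only difference is cosmetic — the paper uses the left-zero semigroup on $n$ elements with $c(x,y)=f(x)$, whereas you use the rectangular band on $k\times k$; your closing remark about why the union-of-colours encoding is not $\rca$-legitimate over $\bbQ$ is a correct and worthwhile observation.
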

\begin{proof}
    Let $f\colon\bbQ\to n$ be a colouring of the rationals. Let $(S_n,\cdot)$ be the finite semigroup defined by $S_n=n$ and
    $a \cdot b = a$ for every $a,b\in S_n$.
    Define the colouring $c\colon [\bbQ]^2\to S_n$ by setting $c(x,y)=f(x)$ for every $x,y\in \bbQ$.
    Since for every $x<y<z$, $c(x,z)=f(x)=c(x,y)\cdot c(y,z)$, $\colA$ is additive.
    By additive Ramsey, there exists $]u,v[$ which is $c$-densely homogeneous
    and thus a $f$-shuffle.
\end{proof}

\subsection{Weihrauch complexity of additive Ramsey}

We now start the analysis of $\art_\bbQ$ in the context of
Weihrauch reducibility. We will mostly summarize results, relying on the intuitions
we built up so far.
First off, we determine the Weihrauch degree of the ordered Ramsey theorem over $\bbQ$.

\begin{theorem}
  \label{thm:wei-ort}
    Let $\ort_\bbQ$ be the problem whose instances are ordered colourings $c:[\bbQ]^2\to P$, for some finite poset $(P,\prec)$, and whose possible outputs on
    input $c$ are intervals on which $c$ is constant. We have that $\ort_\bbQ\equivW\lpo^*$.
\end{theorem}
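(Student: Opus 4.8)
**Plan for the proof of $\ort_\bbQ \equivW \lpo^*$.**

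The plan is to prove the two directions separately, in each case reducing the problem to finitely many instances of $\lpo$ whose number is computed from the input. For the upper bound $\ort_\bbQ \leqW \lpo^*$, I would follow the structure of the reverse-mathematics proof in \Cref{lem:revnath-ortq}: given an ordered colouring $c : [\bbQ]^2 \to P$ with $P$ a finite poset of size $p$, the key point is that a $\preceq_P$-minimal colour that is actually \emph{realised} by some pair gives a homogeneous interval. So I would fix an enumeration of $[\bbQ]^2$ and, for each of the finitely many $q \in P$, build a binary sequence that records whether $q$ has been seen as $c(x,y)$; a single call to $\lpo$ per colour decides whether $q$ is realised at all. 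From the set $R \subseteq P$ of realised colours (which is non-empty since $c$ has some value), compute any $\preceq_P$-minimal element $p_0 \in R$; since $p_0$ is realised, there is a witnessing pair $(x',y')$, and by orderedness together with $\preceq_P$-minimality every pair $(x,y)$ with $x' \le x < y \le y'$ gets colour $p_0$, so $]x',y'[$ is $c$-homogeneous. The number $|P|$ of $\lpo$-instances is part of the input, so this is a legitimate reduction to $\lpo^*$ (and $\lpo^* \equivW (\lpo^*)^*$ handles the case where we want the actual witnessing pair to be searched for in parallel with the $\lpo$-calls).

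For the lower bound $\lpo^* \leqW \ort_\bbQ$, I would encode $n$ instances $r_0, \dots, r_{n-1} \in 2^\bbN$ of $\lpo$ into a single ordered colouring. Since $\lpo^* \equivW \lpo^n$ with $n$ given, it suffices to produce, for fixed $n$, an ordered colouring $c$ whose unique constant colour on a suitable interval encodes the tuple $(\lpo(r_0), \dots, \lpo(r_{n-1})) \in 2^n$. The idea is to take $P = 2^n$ partially ordered by the product order (or by reverse inclusion of the corresponding subsets of $\{0,\dots,n-1\}$), and colour a pair $(x,y)$ by the bitvector recording, coordinate by coordinate $i$, whether a $1$ has appeared in $r_i$ within the first $h(x,y)$ bits, where $h(x,y)$ measures how close $x$ and $y$ are (as in \Cref{lem:revnath-ortq}). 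As $x,y$ shrink toward each other $h(x,y) \to \infty$, so the colour of a small enough pair stabilises at exactly $(\lpo(r_0),\dots,\lpo(r_{n-1}))$ — reading $0$ in coordinate $i$ iff $r_i = 0^\bbN$. One checks this $c$ is genuinely ordered with respect to the chosen order on $2^n$ (shrinking the interval can only turn a coordinate from $0$ to $1$, i.e.\ move up in the order), so $\ort_\bbQ$ applied to $c$ returns a homogeneous interval whose colour is precisely the desired tuple, from which the backward reduction witness reads off the $n$ answers to $\lpo$.

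The main obstacle I anticipate is purely bookkeeping: getting the order on the colour set $P$ exactly right so that $c$ is verifiably ordered (the condition $c(x,y) \preceq_P c(x',y')$ when $x' \le x < y \le y'$), and making sure that the colour returned by $\ort_\bbQ$ on a homogeneous interval is the \emph{stable} one rather than some intermediate colour — this follows because any homogeneous interval is itself shrinkable, so its colour must already be maximal along every coordinate it can reach, but it needs to be stated carefully. There is also a minor subtlety in the upper-bound direction about whether one needs to output the interval itself: one searches in parallel for a witnessing pair for the computed minimal colour $p_0$, and such a pair must exist and be found in finite time because $p_0 \in R$. None of this should require more than routine verification once the encodings are fixed.
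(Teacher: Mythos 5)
Your proposal is correct, and the lower bound $\lpo^* \leqW \ort_\bbQ$ is essentially the paper's own argument: the colouring by the sets $\{i : \exists t < h(x,y)\, (p_i(t)=1)\}$ under reverse inclusion, with the stabilisation argument showing that the colour of any homogeneous interval must be the true answer tuple. For the upper bound $\ort_\bbQ \leqW \lpo^*$, however, you take a genuinely different and arguably cleaner route. The paper fixes a linear extension $<_L$ of $\prec_P$ and runs a sequential construction in which components are progressively ``deactivated'' when a pair of smaller colour is found, the current interval being shrunk to that witnessing pair each time; the answer is decoded from the maximal never-deactivated component, and the interval is recovered by re-running the construction. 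You instead effectivize \Cref{lem:revnath-ortq} directly: one $\lpo$-call per colour $q \in P$ decides whether $q$ is realised at all, the backward functional picks a $\preceq_P$-minimal realised colour $p_0$ and performs a terminating search for a witnessing pair $(x',y')$, and orderedness plus minimality force every pair inside $]x',y'[$ to have colour $p_0$. Both use $|P|$ instances of $\lpo$ and both defer a terminating search to the backward functional, so nothing is lost; your version avoids the linear extension and the interval-shrinking bookkeeping entirely, at the price of not localising the search (the paper's nested-interval style is the one reused in its other reductions, e.g.\ for the shuffle problems). Your parenthetical appeal to $\lpo^* \equivW (\lpo^*)^*$ is unnecessary --- the witness search lives entirely in the outer reduction functional, which may run any terminating computation --- but it does no harm.
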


\begin{proof}
    $\lpo^*\leqsW\ort_\bbQ$: let $\langle n,p_0,\dots,p_{n-1}\rangle$ be an instance of $\lpo^*$. Let $(P,\prec)$ be the poset such that $P=2^n$, the set of subsets of $n$, and $\prec \, = \, \supset$, i.e.\ $\prec$ is reverse inclusion.

    We define an ordered colouring $c:[\bbQ]^2\to P$ in stages by deciding, at stage $s$, the colour of all the pairs of points $(x,y)\in [\bbQ]^2$ such that $|x-y|>2^{-s}$.

    At stage $0$, we set $c(x,y)=\emptyset$ for every $(x,y)\in [\bbQ]^2$ such that $|x-y|>1$. At stage $s>0$, we check $\restr{p_i}{s}$ for every $i<n$ (i.e., for every $i$, we check the sequence $p_i$ up to $p_i(s-1)$), and for every $(x,y)\in [\bbQ]^2$ with $2^{-s+1}\geq |x-y|> 2^{-s}$, we let
    \[
        c(x,y)= \{ i<n: \exists t< s (p_i(t)=1) \}.
    \]
    It is easily seen that $c$ defined as above is an ordered colouring: if $x\leq x'<y'\leq y'$, then $|x'-y'|\leq |x-y|$, which means that to determine the colour of $(x',y')$ we need to examine a longer initial segment of the $p_i$s. Let $I\in \ort_\bbQ(P,c)$, and let $\ell\in\bbN$ be least such that the length of $I$ is larger that $2^{-\ell}$: since $I$ is $c$-homogeneous, we know that for every $i<n$, $\exists t (p_i(t)=1) \Leftrightarrow \exists t<\ell (p_i(t)=1)$.
    Hence, for every pair of points $(x,y)\in [I]^2$, the colour of $c(x,y)$ is exactly the set of $i$ such that $\lpo(p_i)=1$.

    $\ort_\bbQ\leqW\lpo^*$: Let $(P,c)$ be an instance of $\ort_\bbQ$, for some finite poset 
    $(P,\prec_P)$. Let $<_L$ be a linear extension of $\prec_P$, and notice that $c:\bbQ\to (P,<_L)$ is still an ordered colouring. Let $r_0 <_L r_1 <_L \dots <_L r_{|P|-1}$ be the elements of $P$. The idea of the proof is to have one instance of $\lpo$ per element of $P$, and to check in parallel the intervals of the rationals to see if they are $c$-homogeneous for the corresponding element of $P$. Anyway, one has to be careful as to how these intervals are chosen: to give an example, if we find that a certain interval $I$ is not $c$-homogeneous for the $<_L$-maximal element $r_{|P|-1}$, because we found, say, $x<y$ such that $c(x,y)\neq r_{|P|-1}$, then not only do we flag the corresponding instance of $\lpo$ by letting it contain a $1$, but we also restrict the research of all the other components so that they only look at intervals \emph{contained in $]x,y[$}. By proceeding similarly for all the components, since $c$ is ordered, we are sure that we will eventually find a $c$-homogeneous interval.

    We define the $|P|$ instances $p_0,p_1,\dots,p_{|P|-1}$ of $\lpo$ in stages as follows. Let $a_n$ be an enumeration of the ordered pairs of rationals, i.e.\ an enumeration of $[\bbQ]^2$, with infinitely many repetitions. At every stage $s$, some components $i$ will be ``active'', whereas the remaining components will be ``inactive'': if a component $i$ is inactive, it can never again become active. Moreover, at every stage $s$, there is a ``current pair'' $a_{n_s}$ and a ``current interval'' $a_{m_s}$ (for this proof, it is practical to see ordered pairs of rational as both pairs and as denoting extrema of an open interval). We begin stage $0$, by putting the current pair and the current interval equal to $a_0$. Moreover, every component is set to be active.

    At stage $s$, for every inactive component $j<|P|$, we set $p_j(s)=1$. For every active component $i$, there are two cases:
    \begin{itemize}
        \item if, for every active component $i$, $c(a_{n_s})\geq_L r_i$, then we look for the smallest $\ell>n_s$ such that $a_{\ell}\subseteq a_{m_s}$ (i.e., we look for a pair of points contained in the current interval), and set $a_{n_{s+1}}=a_{\ell}$, and $a_{m_{s+1}}=a_{m_s}$. We set $p_i(s)=0$ and no component is set to inactive. We then move to stage $s+1$.
        \item suppose instead there is an active component $i$ such that $c(a_{n_s})<_L r_i$: let $i$ be the minimal such $i$, then we set every $j\geq i$ to inactive (the ones that were already inactive remain so) and we let $p_j(s)=1$. We then let $a_{m_{s+1}}=a_{n_s}$, and we look for the least $\ell>{n_s}$ such that $a_{\ell}\subset a_{n_s}$: we set $a_{n_{s+1}}=a_{\ell}$, and we set $p_k(s)=0$ for every active component $k<|P|$. We then move to stage $s+1$.
    \end{itemize}
    We iterate the procedure above for every integer $s$.

    Let $\sigma \in 2^{|P|}$ be such that $\sigma\in \lpo^*(\langle |P|, p_0,\dots,p_{|P|-1}\rangle)$. Notice that $\sigma(0)=0$, since no pair of points can attain colour $<_L$-below $r_{0}$. Moreover, notice that $\sigma(i)=0$ if and only if the component $i$ was never set inactive. Hence, let $i$ be maximal such that $\sigma(i)=0$, and let $t$ be a state such all components $j>i$ have been set inactive by step $t$. Hence, after step $t$, the current interval $I$ never changes, and thus we eventually check the colour of all the pairs in that interval. Since the second case of the construction is never triggered, it follows that $I$ is a $c$-homogeneous interval. Hence, in order to find it, we know we just have to repeat the construction above until all the components of index larger than $i$ are set inactive. This proves that $\ort_\bbQ\leqW \lpo^*$.
\end{proof}

Now let us discuss Weihrauch problems corresponding to $\art_\bbQ$.

\begin{definition}
\label{def:sartbbq}
Regard $\sart_\bbQ$ as the following Weihrauch problem:
  the instances are pairs $(S,c)$ where $S$ is a finite semigroup and
  $c:[\bbQ]^2\to S$ is an additive colouring of $[\bbQ]^2$, and such that, for
  every $C\subseteq S$ and every interval $I$ of $\bbQ$, $(I,C)\in \sart_\bbQ$
  if and only if $I$ is $c$-densely homogeneous for the colours of $C$.
\\\\
Similarly to what we did in \Cref{def:wei-weak-shuffle}, we also
introduce the problems $\cart_\bbQ$ and $\iart_\bbQ$
that only return the set of colours and the interval respectively.
\end{definition}


We start by noticing that the proof of \Cref{thm:shuffle-artq} can be readily adapted to show the following.
\begin{lemma}\label{lem:artq-lower-bound}
    \begin{itemize}
        \item $\cshuffle\leqsW \cart_\bbQ$, hence $(\lpo')^*\leqW \cart_\bbQ$.
        \item $\ishuffle\leqsW\iart_\bbQ$, hence $\tcn^*\leqW \iart_\bbQ$.
        \item $\sshuffle\leqsW\sart_\bbQ$, hence $(\lpo')^*\times\tcn^*\leqW \sart_\bbQ$.
    \end{itemize}
\end{lemma}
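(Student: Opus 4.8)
The plan is to reuse the construction in the proof of \Cref{thm:shuffle-artq}, now read as a strong Weihrauch reduction. Given an instance $(n,f)$ of $\sshuffle$ (and simultaneously of $\cshuffle$ and of $\ishuffle$, since the inner reduction will be the same for all three), I would compute the instance $(S_n, c)$ of $\sart_\bbQ$, where $S_n$ is the left-zero semigroup on the carrier $n$ (so $a \cdot b = a$ for all $a,b$) and $c \colon [\bbQ]^2 \to S_n$ is given by $c(x,y) = f(x)$. As observed in \Cref{thm:shuffle-artq}, $c$ is additive, so $(S_n,c)$ is a legitimate input, and $(n,f) \mapsto (S_n,c)$ is computable. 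I claim the outer reduction witness may be taken to be the identity, which is what turns all three reductions into \emph{strong} ones.

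The crux is to verify that, for this particular $c$, an interval $I \subseteq \bbQ$ is $c$-densely homogeneous if and only if it is an $f$-shuffle, and that the accompanying set of colours is literally the same object in both settings. First I would check that the two notions of partition coincide: a finite partition of $I$ into dense sets $D_i$ with each $[D_i]^2$ monochromatic (say of colour $s_i$) under $c$ is exactly a finite partition of $I$ into dense sets with $|f(D_i)| \le 1$. Indeed, for $x, x' \in D_i$, density of $D_i$ in the open interval $I$ gives some $y \in D_i$ with $y > \max(x,x')$, and then $f(x) = c(x,y) = s_i = c(x',y) = f(x')$; the converse is immediate from $c(x,y) = f(x)$. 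Next I would note that since $I$ is open, every $x \in I$ has some $y \in I$ above it, so the set of colours occurring in $[I]^2$ is exactly $f(I)$, which also equals $\bigcup_i \{s_i\}$ (including possible cross pairs $(x,y)$ with $x \in D_i$, $y \in D_j$, for which $c(x,y) = f(x) = s_i$ contributes nothing new). Hence $(I,C) \in \sart_\bbQ(S_n,c)$ entails $(I,C) \in \sshuffle(n,f)$, and the analogous implications hold at the level of the colour-only and interval-only variants.

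Granting the claim, the inner reduction above together with the identity outer reduction yields $\cshuffle \leqsW \cart_\bbQ$, $\ishuffle \leqsW \iart_\bbQ$ and $\sshuffle \leqsW \sart_\bbQ$, and the ``hence'' clauses follow by composing with the classifications already proved: $(\lpo')^* \equivW \cshuffle$ (\Cref{thm:lpo-cshuffle}) gives $(\lpo')^* \leqW \cart_\bbQ$; $\tcn^* \equivW \ishuffle$ (\Cref{thm:tcn-ect-ishuffle}) gives $\tcn^* \leqW \iart_\bbQ$; and $\sshuffle \equivW \tcn^* \times (\lpo')^*$ (\Cref{thm:sshuffle}) gives $(\lpo')^* \times \tcn^* \leqW \sart_\bbQ$. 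I do not expect a serious obstacle here; the one point meriting care is the bookkeeping of the output set $C$, i.e.\ making sure that ``$c$-densely homogeneous for the colours of $C$'' (\Cref{def:sartbbq}) and ``$c$-shuffle for the colours in $C$'' (\Cref{def:wei-shuffle}) denote the same set when instantiated at the left-zero colouring $c(x,y)=f(x)$, which is precisely the colour-set computation in the previous paragraph. Once that is in place, the identity outer reduction is justified and nothing else is needed.
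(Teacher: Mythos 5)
Your proposal is correct and follows exactly the route the paper intends: Lemma~\ref{lem:artq-lower-bound} is stated in the paper as a direct adaptation of the proof of Theorem~\ref{thm:shuffle-artq}, i.e.\ the left-zero semigroup $S_n$ with $c(x,y)=f(x)$ and the identity as outer reduction witness. Your verification that $c$-densely homogeneous intervals coincide with $f$-shuffles (and that the colour sets agree) is precisely the detail the paper leaves implicit, and the composition with Theorems~\ref{thm:lpo-cshuffle}, \ref{thm:tcn-ect-ishuffle} and \ref{thm:sshuffle} for the ``hence'' clauses is as intended.
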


The rest of the section is devoted to find upper bounds for $\cart_\bbQ$, $\iart_\bbQ$ and $\sart_\bbQ$.
The first step to take is a careful analysis of the proof of \Cref{lem:ind-artq}. For an additive colouring $c\colon[\bbQ]^2\to S$, the proof can be summarized as follows:
\begin{itemize}
    \item we start with an application of $\ort_\bbQ$ to find an interval $]u,v[$ such that all the colours of $\restr{c}{]u,v[}$ are all $\GrJ$-equivalent (\Cref{lem:interm-ortq}).
    \item define the colouring $\gamma\colon \bbQ\to S^2$ and apply $\weishuffle$ to it, thus obtaining the interval $]x,y[$.
    \item the rest of the proof consists simply in showing that $]x,y[$ is a $c$-densely homogeneous interval.
\end{itemize}
Hence, from the uniform point of view, this shows that $\art_\bbQ$ can be computed via a composition of $\weishuffle$ and $\ort_\bbQ$. Whence the next theorem.

\begin{theorem}
  \label{thm:fin-theo}
    \begin{itemize}
    \item $\cart_\bbQ\leqW (\lpo')^*\times \lpo^*$, therefore $\cart_\bbQ\equivW (\lpo')^*$.
        \item $\iart_\bbQ\leqW \tcn^*\times\lpo^*$, therefore $\iart_\bbQ\equivW \tcn^*$.
        \item $\sart_\bbQ\leqW (\lpo')^*\times\tcn^*\times\lpo^*$, therefore $\sart_\bbQ\equivW (\lpo')^*\times\tcn^*$.
    \end{itemize}
\end{theorem}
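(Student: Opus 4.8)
The plan is to exploit the observation made just before the statement: the proof of \Cref{lem:ind-artq} computes $\art_\bbQ$ as $\ort_\bbQ$ followed by the shuffle principle. Given an additive colouring $c\colon [\bbQ]^2 \to S$, one first applies $\ort_\bbQ$ to the ordered colouring sending $(x,y)$ to the $\GrJ$-class of $c(x,y)$, obtaining an interval $\ooint{u,v}$ on which all colours of $c$ are $\GrJ$-equivalent; then one applies the shuffle principle to the colouring $\gamma\colon z \mapsto (c(u,z),c(z,v))$ of $\ooint{u,v}$ (rescaled to $\bbQ$, so with at most $\card S^2$ colours); and finally one reads the answer off by purely computable means, the witnessing interval being the one returned for $\gamma$ and the set of $c$-colours it carries being determined, through the Green-theoretic analysis of that proof (using \Cref{lem:Hidemgroup} and \Cref{lem:leRJR}), by the colour set returned for $\gamma$ together with $S$. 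Since the compositional product is exactly the device that carries the original input through to this last computable step, we obtain
\[ \cart_\bbQ \leqW \cshuffle \star \ort_\bbQ, \qquad \iart_\bbQ \leqW \ishuffle \star \ort_\bbQ, \qquad \sart_\bbQ \leqW \sshuffle \star \ort_\bbQ. \]

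Next I would flatten these compositional products. By (the proof of) \Cref{thm:wei-ort}, $\ort_\bbQ$ over a poset of size $m$ reduces to $\lpo^m$; here $m$ is the number of $\GrJ$-classes of $S$, so $m \le \card S$ can be read off the input, and $\lpo^m$ is single-valued and Weihrauch below $\cn$ (recall $\cn \equivW \mincn$). By \Cref{lem:cshuffle-leq-lpo}, \Cref{lem:ishuffle-leq-tcn} and \Cref{lem:sshuffle-leq-tcnlpo}, the shuffle principle on a colouring with $\card S^2$ colours reduces to $(\lpo')^a$, to $\tcn^b$ and to $(\lpo')^a \times \tcn^b$ respectively, with $a,b$ computable from $\card S$. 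Plugging these into \Cref{lem:comp-to-paral-prod} (with $\weipr P = \lpo^m$, allowing $a$ or $b$ to be $0$) converts each composition into a parallel product, and since all parameters depend only on $\card S$, which the reduction reads off its input, this gives
\[ \cart_\bbQ \leqW (\lpo')^* \times \lpo^*, \qquad \iart_\bbQ \leqW \tcn^* \times \lpo^*, \qquad \sart_\bbQ \leqW (\lpo')^* \times \tcn^* \times \lpo^*. \]

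Finally I would simplify and add the lower bounds. From the standard reductions $\lpo \leqW \lpo'$ and $\lpo \leqW \tcn$ (the latter e.g.\ via $\lpo \leqW \cn$), finite parallelization yields $\lpo^* \leqW (\lpo')^*$ and $\lpo^* \leqW \tcn^*$, and since $(-)^*$ is idempotent under $\times$ the three upper bounds collapse to $(\lpo')^*$, to $\tcn^*$ and to $(\lpo')^* \times \tcn^*$. The matching lower bounds $(\lpo')^* \leqW \cart_\bbQ$, $\tcn^* \leqW \iart_\bbQ$ and $(\lpo')^* \times \tcn^* \leqW \sart_\bbQ$ are \Cref{lem:artq-lower-bound}.

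I expect the only genuine difficulty to be bookkeeping rather than a new idea: one must check carefully that every step between the $\ort_\bbQ$-call and the shuffle-call in the proof of \Cref{lem:ind-artq}, as well as the final extraction of the answer (in particular of the colour set via Green theory), is computable and uses only data available from the original input, so that the reduction really has the shape of a compositional product with the claimed factors; and one must ensure that the number of colours handed to the shuffle principle and the number of $\lpo$-instances produced for $\ort_\bbQ$ are bounded by computable functions of the input --- they are, being governed by $\card S$ --- so that the finite parallelizations $(\lpo')^*$, $\tcn^*$ and $\lpo^*$ indeed suffice.
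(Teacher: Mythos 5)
Your proposal is correct and follows essentially the same route as the paper: the same decomposition of $\art_\bbQ$ into $\ort_\bbQ$ followed by the shuffle principle applied to $\gamma(z)=(c(u,z),c(z,v))$, the same upper bounds via \Cref{thm:wei-ort}, \Cref{lem:cshuffle-leq-lpo}, \Cref{lem:ishuffle-leq-tcn} and \Cref{lem:sshuffle-leq-tcnlpo}, the same flattening of the compositional product via \Cref{lem:comp-to-paral-prod} (the paper takes $\weipr P=\lpo^*$ directly rather than $\lpo^m$), and the same lower bounds from \Cref{lem:artq-lower-bound}. Your extra remarks on extracting the colour set computably via the Green-theoretic analysis and on absorbing $\lpo^*$ into $(\lpo')^*$ and $\tcn^*$ are exactly the bookkeeping the paper leaves implicit.
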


\begin{proof}
  The three results are all proved in a similar manner. We recall that $\lpo^*\leqW\cn$ and observe that $\lpo^*$ is single-valued.
  This enables us to use \Cref{lem:comp-to-paral-prod} with $\lpo^*$ in place of $\weipr P$.

  For $\mathsf{x}\in \{\mathsf{c},\mathsf{i},\mathsf{s}\}$ and every $n\in\bbN$, let
    $\mathsf{x}\art_{\bbQ,n}$ be the restriction of $\mathsf{x}\art_{\bbQ}$ to instances of the form $(S,c)$ with $S$ of cardinality $n$. Hence, by the
    considerations preceding the statement of the theorem in the body of the paper, we have the following facts:
    \begin{itemize}
        \item $\cart_{\bbQ,n}\leqW \cshuffle_{n^2}* \ort_\bbQ$, hence, by \Cref{lem:cshuffle-leq-lpo} and \Cref{thm:wei-ort}, we have that $\cart_{\bbQ,n}\leqW (\lpo')^{2^{n^2}-1}*\lpo^*$. By \Cref{lem:comp-to-paral-prod}, we have that $\cart_{\bbQ,n}\leqW (\lpo')^{2^{n^2}-1}\times\lpo^*$, from which the claim follows.
        \item $\iart_{\bbQ,n}\leqW \ishuffle_{n^2}*\ort_\bbQ$, hence, by \Cref{lem:ishuffle-leq-tcn} and \Cref{thm:wei-ort}, we have that $\iart_{\bbQ,n}\leqW \tcn^{n^2-1}*\lpo^*$. By \Cref{lem:comp-to-paral-prod}, we have that $\iart_{\bbQ,n}\leqW \tcn^{n^2-1}\times\lpo^*$, from which the claim follows.
        \item $\sart_{\bbQ,n}\leqW \sshuffle_{n^2}*\ort_\bbQ$, hence, by \Cref{lem:sshuffle-leq-tcnlpo} and \Cref{lem:comp-to-paral-prod}, we have that $\sart_{\bbQ,n}\leqW (\lpo'\times\tcn)^{2^{n^2}-1}*\lpo^*$. By \Cref{lem:comp-to-paral-prod}, we have that  $\sart_{\bbQ,n}\leqW (\lpo'\times\tcn)^{2^{n^2}-1}\times\lpo^*$, from which the claim follows.
        \end{itemize}

\end{proof}


\section{$\art_\bbN$ and $\ort_\bbN$}\label{sec:artN}
\label{sec:artn}

We finally turn to the case of the additive and ordered theorems over $\bbN$ and prove \Cref{thm:wei-eqs-bbN}.
We obtain results which are completely analogous to
the case of $\bbQ$ when it comes to the additive Ramsey theorem. However, in contrast to Theorem \ref{thm:wei-ort}, the ordered Ramsey theorem for $\bbN$ exhibits the same behaviour as the additive Ramsey theorem.

That the principles $\ort_\bbN$ and $\art_\bbN$ are equivalent to \sztindtxt~
was established in~\cite{KMPS19}, so we only focus on the analysis of the
Weihrauch degrees below. We first start by defining properly the principles
involved, and then we give the proof that $\tcn^*$, $(\lpo')^*$ or their product
reduces to them. We then give the converse reductions, first for the
principles pertaining to the ordered colourings, and then we handle the additive
colourings.
The proof for the ordered colouring is a simple elaboration on~\cite[Lemma 4.3]{KMPS19}.
For the additive colouring, formally the corresponding statement in that paper,~\cite[Proposition 4.1]{KMPS19},
depends on the ordered version in a way that would translate to a composition
in the setting of Weihrauch degrees. It turns out that we can avoid
invoking the composition by carefully interleaving the two steps in our analysis.

\subsection{Definitions}

We have already covered the principles $\ort_\bbN$ and $\art_\bbN$ in~\Cref{sec:background}.
The corresponding Weihrauch problems are relatively clear: given a colouring as
input, as well as the finite semigroup or finite ordered structure, output an infinite homogeneous
set. The principles $\cort_\bbN$ and $\cart_\bbN$ instead only output a possible
colour for an infinite homogeneous set -- much like in the case for $\bbQ$.
However, the principles $\iort_\bbN$ and $\iart_\bbN$ will require some more
attention; now it is rather meaningless to ask for a containing interval.
Nevertheless, the analogous principle will also output some information
regarding the possible location of an homogeneous set, without giving away a
whole set or a candidate colour, so we keep a similar naming convention.

\begin{definition}
  \label{def:ramN-wei}
Define the following Weihrauch problems:
  \begin{itemize}
    \item $\ort_\bbN$ takes as input a finite poset $(P, \preceq_P)$ and
      a right-ordered colouring $c : [\bbN]^2 \to P$, and outputs an infinite
      $c$-homogeneous set $\subseteq \bbN$.
    \item $\art_\bbN$ takes as input a finite semigroup $S$ and
      an additive colouring $c : [\bbN]^2 \to S$, and outputs an infinite
      $c$-homogeneous set $\subseteq \bbN$.
    \item $\cort_\bbN$ takes as input a finite poset $(P, \preceq_P)$ and
      a right-ordered colouring $c : [\bbN]^2 \to P$, and outputs a colour $p \in P$
      such that there exists an infinite $c$-homogeneous set $\subseteq \bbN$
      with colour $p$.
    \item $\cart_\bbN$ takes as input a finite semigroup $S$ and
      an additive colouring $c : [\bbN]^2 \to S$, and outputs a colour $s \in S$
      such that there exists an infinite $c$-homogeneous set $\subseteq \bbN$
      with colour $s$.
    \item $\iort_\bbN$ takes as input a finite poset $(P, \preceq_P)$ and
      a right-ordered colouring $c : [\bbN]^2 \to P$, and outputs a $n_0 \in \bbN$
      such that there is an infinite $c$-homogeneous set $X \subseteq \bbN$
      with two elements $\le n_0$.
    \item $\iart_\bbN$ takes as input a finite semigroup $S$ and
      an additive colouring $c : [\bbN]^2 \to S$, and outputs a $n_0 \in \bbN$
      such that there is an infinite $c$-homogeneous set $X \subseteq \bbN$
      with two elements $\le n_0$.
  \end{itemize}
\end{definition}

\subsection{Reversals}
\label{subsec:aortN-reversal}

\begin{lemma}
  \label{lem:iaortN-reversal}
We have $\ect \lesW \iort_\bbN$ and $\ect \lesW \iart_\bbN$.
\end{lemma}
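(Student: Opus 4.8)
The plan is to treat both reductions with a single construction. Given an instance $(n,f)$ of $\ect$, with $f : \bbN \to n$, I would build the colouring $c : [\bbN]^2 \to \powerset(n)$ defined by
$c(x,y) = \{f(w) : x \le w < y\}$, i.e.\ $c(x,y)$ records the set of colours that $f$ uses on the block $\{w : x \le w < y\}$. This is plainly computable from $f$. Viewing $\powerset(n)$ as the poset $(\powerset(n),\subseteq)$, the colouring $c$ is (right\nobreakdash-)ordered, since enlarging the interval can only enlarge the set of colours seen; viewing $\powerset(n)$ as the semigroup $(\powerset(n),\cup)$, the colouring $c$ is additive, since $\{w : x \le w < z\} = \{w : x \le w < y\} \cup \{w : y \le w < z\}$ whenever $x < y < z$. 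As $\powerset(n)$ is finite, $(\powerset(n),c)$ is simultaneously a legitimate instance of $\iort_\bbN$ and of $\iart_\bbN$.

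The combinatorial heart of the argument is the observation that, for \emph{any} infinite $c$-homogeneous set $X = \{x_0 < x_1 < \cdots\}$ with homogeneous colour $C \subseteq n$, the least element $\min X = x_0$ is already a valid bound for $f$. Indeed, $c(x_i,x_{i+1}) = C$ for every $i$ says that on each block $\{w : x_i \le w < x_{i+1}\}$ the colouring $f$ uses exactly the colours of $C$. Hence on the one hand every $w \ge x_0$ lies in some such block and so $f(w) \in C$; on the other hand every colour of $C$ occurs in each block, and since these blocks are cofinal in $\bbN$, every colour of $C$ occurs infinitely often in $f$. Combining the two: for every $w \ge x_0$ the colour $f(w)$ occurs infinitely often, so there is $y > w$ with $f(y) = f(w)$, which is exactly the statement that $x_0$ — and, since the property of being a bound is upward closed, any $b \ge x_0$ — is a bound for $f$ in the sense of $\ect$.

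Given this, both reductions are strong and immediate. The inner reduction witness sends $(n,f)$ to $(\powerset(n),c)$ and the outer reduction witness is the identity: an answer $n_0$ returned by $\iort_\bbN$ (respectively $\iart_\bbN$) is by definition such that some infinite $c$-homogeneous $X$ has two elements $\le n_0$, whence $\min X \le n_0$, and therefore $n_0$ is a bound for $f$ by the observation above. This gives $\ect \leqsW \iort_\bbN$ and $\ect \leqsW \iart_\bbN$.

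There is no genuine obstacle here; the only point needing a little care is the combinatorial observation, and specifically the fact that one should not merely conclude that $C$ equals the set of colours occurring infinitely often, but must note that the \emph{location} of the homogeneous set (its least element) is itself a witness for $\ect$ — this is precisely what allows the positional output of the $\mathsf{i}$-variants to be converted into an $\ect$-bound. It is also worth recording that the reduction is not colour-preserving: an $n$-colouring of $\bbN$ is mapped to a colouring valued in a poset/semigroup of size $2^n$, which is harmless for Weihrauch reducibility.
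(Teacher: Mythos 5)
Your proposal is correct and follows essentially the same route as the paper: you build the colouring $c(x,y)=\{f(w): x\le w < y\}$, which is simultaneously ordered for $(\powerset(n),\subseteq)$ and additive for $(\powerset(n),\cup)$, and observe that the least element of any infinite $c$-homogeneous set is already an $\ect$-bound, so the positional output $n_0$ of the $\mathsf{i}$-variants can be returned directly. This matches the paper's proof (which uses the closed-interval convention $x\le w\le y$, an immaterial difference), so there is nothing to add.
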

\begin{proof}
Let $f : \bbN \to k$ be a would-be instance of $\ect$.
  Then one may define the colouring $\tilde{f} : [\bbN]^2 \to \pow(k)$
  by setting $a \in \tilde{f}(n,m)$ if and only if there is $n'$ with
  $n \le n' \le m$ and $f(n') = a$.
  This colouring is both additive for the semigroup $(\pow(k), \cup)$ and
  ordered by $\subseteq$, and can be fed to either $\iort_\bbN$ or $\iart_\bbN$.
Let $n_0$ be such that there is an infinite $\tilde{f}$-homogeneous set with
first two elements $k_0 < k_1 \le n_0$. Clearly, every colour occuring in $f$
  after $n_0$ needs to occur in $\tilde{f}(k_0,k_1)$; so $n_0$ is a solution of
  the given instance for $\ect$.
\end{proof}

\begin{lemma}
$\cort_\bbN^* \equivW \cort_\bbN$ and $\cart_\bbN^* \equivW \cart_\bbN$
\begin{proof}
The non-trivial
  reductions are easily made by amalgamating finite sequences of colouring via
  a pointwise product, which will always still carry an additive or ordered
  structure.
\end{proof}
\end{lemma}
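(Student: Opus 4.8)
The plan is to reduce both equivalences to the binary case and then iterate. For $\cort_\bbN^* \equivW \cort_\bbN$ the reduction $\cort_\bbN \leqW \cort_\bbN^*$ is trivial (supply a single copy), so the work is to show $\cort_\bbN^* \leqW \cort_\bbN$. As in the proof of \Cref{lem:cshuffle-finparall}, it suffices to establish $\cort_\bbN \times \cort_\bbN \leqsW \cort_\bbN$: applying this reduction $n-1$ times then solves $n$ parallel copies (and $n$ is supplied with the input of $\cort_\bbN^*$), while the $n = 0$ case is absorbed by the observation that $\cort_\bbN$ has a computable instance, e.g.\ the constant colouring into a one-element poset. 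The same scheme will handle $\cart_\bbN$, so I only spell out the ordered case and then indicate the single change needed for the additive one.

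The \textbf{forward witness} amalgamates the two colourings by a pointwise product. Given instances $(P_0, \preceq_0, c_0)$ and $(P_1, \preceq_1, c_1)$ with each $c_i : [\bbN]^2 \to P_i$ right-ordered, I take the poset $P_0 \times P_1$ with the componentwise order and the colouring $c : [\bbN]^2 \to P_0 \times P_1$ given by $c(n,m) = (c_0(n,m), c_1(n,m))$. This is computable from the two inputs, and it is again right-ordered: if $n < m \le m'$ then $c_i(n,m) \preceq_i c_i(n,m')$ for $i = 0, 1$, hence $c(n,m) \preceq c(n,m')$ in the product order. So $(P_0 \times P_1, c)$ is a legal instance of $\cort_\bbN$. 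For $\cart_\bbN$ one instead uses the direct product semigroup $(S_0 \times S_1, \cdot)$ with $(s_0,s_1) \cdot (t_0,t_1) = (s_0 t_0, s_1 t_1)$; the amalgamated colouring stays additive because $c(x,z) = (c_0(x,y) c_0(y,z),\, c_1(x,y) c_1(y,z)) = c(x,y) \cdot c(y,z)$ whenever $x < y < z$.

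The \textbf{backward witness} just projects the returned colour. If $(p_0, p_1) \in \cort_\bbN(P_0 \times P_1, c)$, there is an infinite $c$-homogeneous $X \subseteq \bbN$ with $c([X]^2) = \{(p_0,p_1)\}$; projecting onto each coordinate gives $c_i([X]^2) = \{p_i\}$, so the very same $X$ witnesses $p_i \in \cort_\bbN(P_i, c_i)$ for $i = 0, 1$. Hence outputting the pair of coordinate projections of the colour delivered by $\cort_\bbN$ solves both original instances; since this backward map is essentially the identity on the (tacit) encodings, the reduction is in fact strong, and the analogous reading-off of colours from the homogeneous set works for $\cart_\bbN$. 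I do not expect any genuine obstacle here: the only point that deserves a sentence is exactly the verification above that the direct product preserves the structural side condition — right-orderedness, respectively additivity — which is immediate because that side condition is a conjunction of conditions on the two coordinates.
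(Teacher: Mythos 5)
Your proposal is correct and is precisely the argument the paper intends: amalgamate via the pointwise product (product poset with componentwise order, respectively direct product semigroup), observe that right-orderedness/additivity is preserved coordinatewise, and project the returned colour back. The iteration to $n$ copies and the computable instance for $n=0$ are handled exactly as in the paper's Lemma on $\cshuffle^*\equivW\cshuffle$, so there is nothing to add.
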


\begin{lemma}
  \label{lem:caortN-reversal}
$\lpo' \lesW \cort_\bbN$ and $\lpo' \lesW \cart_\bbN$.
\end{lemma}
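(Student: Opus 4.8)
The plan is to use the equivalence $\lpo' \equivsW \sisfin$ and reduce $\sisfin$ to each of $\cort_\bbN$ and $\cart_\bbN$ via a single colouring. Given a binary sequence $p$, I would work with the two-element set $\{0,1\}$, regarded simultaneously as the semigroup $(\{0,1\},\vee)$ and as the poset with $0 \prec 1$, and define $c_p : [\bbN]^2 \to \{0,1\}$ by letting $c_p(n,m) = 1$ exactly when $p(i) = 1$ for some $i$ with $n \le i < m$. This colouring is computable uniformly in $p$, and the two defining properties are immediate: it is additive because splitting $\{n,\dots,m-1\}$ at an intermediate point $\ell$ partitions it, so $c_p(n,m) = c_p(n,\ell) \vee c_p(\ell,m)$; and it is ordered --- hence right-ordered --- because enlarging the interval can only change the colour from $0$ to $1$. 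Thus $((\{0,1\},\vee), c_p)$ and $((\{0,1\},\preceq), c_p)$ are legitimate instances of $\cart_\bbN$ and $\cort_\bbN$ respectively.

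Next I would check that the colour of any infinite $c_p$-homogeneous set decides whether $p^{-1}(1)$ is finite. If the last $1$ in $p$ occurs at position $N$, then $\{N+1, N+2, \dots\}$ is an infinite $c_p$-homogeneous set of colour $0$, while no infinite set can be $c_p$-homogeneous of colour $1$, since any such set has two elements exceeding $N$ and those get colour $0$; so $0$ is the only colour witnessed. Dually, if $p$ has infinitely many $1$s, one builds an infinite $c_p$-homogeneous set of colour $1$ by taking $x_0 = 0$ and choosing $x_{k+1}$ past the least $1$ above $x_k$, and no infinite set has colour $0$, because for $x = \min X$ the sets $\{x,\dots,m-1\}$ with $m \in X$ are unbounded and therefore eventually meet $p^{-1}(1)$. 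Hence the colour $s$ returned by $\cort_\bbN$ (respectively $\cart_\bbN$) on $c_p$ is $0$ iff $p$ has finitely many $1$s, so the backward functional need only output $1 - s$; since this never consults $p$, both reductions are strong, and composing with $\lpo' \equivsW \sisfin$ gives the claim.

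I do not anticipate a real obstacle: the content is entirely in the observation that the ``does $p$ take value $1$ somewhere on this interval'' colouring is simultaneously additive and (right-)ordered, and that homogeneity forces the colour to record the finiteness of $p^{-1}(1)$. The only point demanding a little care is checking that in each of the two cases an infinite homogeneous set of the intended colour genuinely exists, so that $\cort_\bbN$ and $\cart_\bbN$ are actually defined on the instances produced --- which is precisely what the two explicit constructions above provide.
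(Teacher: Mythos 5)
Your proposal is correct and follows essentially the same route as the paper: both reduce $\sisfin$ to the colouring problems via the colouring $c_p(n,m)=1$ iff $p$ takes the value $1$ somewhere between $n$ and $m$, which is simultaneously additive for $(\{0,1\},\vee)$ and (right-)ordered, and both read off $\sisfin(p)$ from the colour of a homogeneous set. Your write-up merely supplies the verification details (existence of homogeneous sets of the correct colour in each case) that the paper leaves implicit.
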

\begin{proof}
  We use $\sisfin$ in place of $\lpo'$. We start with an input $f : \bbN \to 2$ for $\sisfin$. We compute
  $\tilde{f} : [\bbN]^2 \to 2$ where $\tilde{f}(n,m) = 1$ iff $1 \in f^{-1}([n,m])$. This yields an additive and ordered colouring.
  The colour of any given $\tilde{f}$-homogeneous set indicates if $f$
  has infinitely many ones or not, thus answering $\sisfin$ for $f$.
\end{proof}

\subsection{
Reducing the ordered Ramsey theorem over $\bbN$ to $(\lpo')^*$ and $\ect$}

We now explain how to bound the Weihrauch degree of $\ort_\bbN$ and its weakenings.
To do so, it will be helpful to consider a construction approximating would-be
homogeneous sets for a given right-ordered colouring $c : [\bbN]^2 \to P$
and a target colour $p \in P$. With these parameters, we build a recursive
sequence of finite sets $Y^{(p)} : \bbN \to \powfin(\bbN)$ meant to approximate a
$p$-homogeneous set (we shall simply write $Y$ instead of $Y^{(p)}$ when $p$
may be inferred from context).
If the construction succeeds, $\limsup(Y)$ will be an
infinite homogeneous set with colour $p$, otherwise $\limsup(Y)$ will be finite.
But the important aspect will be that a fixed number of calls to $(\lpo')^*$
will let us know if the construction was successful or not, while $\ect$ can
indicate after which indices $n$ we shall have $Y_n \subseteq \limsup(Y)$ when
it succeeds.

Now let us describe this construction for a fixed $c$ and $p$.
We begin with $Y_0 = \emptyset$ and will maintain the invariant that
$\max(Y_n) < n$ and for every $(k,k') \in [Y_n]^2$, $c(k,k') = p$.
Then, for $Y_{n+1}$, we have several possibilities;
\begin{itemize}
\item If $\min(Y_n)$ exists and
  for any $\min(Y_n) \le k < n$, we have that $p \prec_P c(k,n)$, we set $Y_{n+1} = \emptyset$
    and say that the construction was \emph{injured} at stage $n$.
  \item Otherwise, if we have some $k' < n$ such that $c(k',n) = p$
    and, for every $k \in Y_n$, $k < k'$ and $c(k,k') = p$, then we set
    $Y_{n+1} = Y_n \cup \{k\}$ and say that the construction \emph{progressed}
    at stage $n$.
  \item Otherwise, set $Y_{n+1} = Y_n$ and say that the construction \emph{stagnated}.
\end{itemize}

Clearly, we can also define recursive sequences
$\mathrm{injury}^{(p)} : \bbN \to 2$
and $\mathrm{progress}^{(p)} : \bbN \to 2$
that witness whether the construction
was injured or progressed, and we have that $\limsup(Y)$ is infinite if and only
if $\mathrm{injury}$ contains finitely many $1$s and $\mathrm{progress}$ contains
infinitely many $1$s. $\limsup(Y)$ is moreover always $c$-homogeneous with colour
$p$.

\begin{lemma}
  \label{lem:ort-build}
  For any ordered colouring $c$, there is $p$ such that $\limsup(Y)$ is infinite
\end{lemma}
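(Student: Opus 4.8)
The plan is to exhibit an explicit colour $p$ and to verify the two conditions that make the construction succeed, namely that $\mathrm{injury}^{(p)}$ carries only finitely many $1$'s and that $\mathrm{progress}^{(p)}$ carries infinitely many $1$'s. The colour $p$ comes from an algebraic preliminary. For every fixed $x$ the sequence $(c(x,y))_{y>x}$ is $\preceq_P$-nondecreasing (as $c$ is, in particular, right-ordered), hence, being a chain in the finite poset $P$, eventually constant; write $c_\infty(x)$ for its eventual value. Orderedness gives $c(x',y) \preceq_P c(x,y)$ whenever $x \le x' < y$, so $x \mapsto c_\infty(x)$ is $\preceq_P$-nonincreasing and therefore also eventually constant; let $p$ be its eventual value and put $m_0 = \min\{x : c_\infty(x) = p\}$. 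I would then record: (i) for $x \ge m_0$ we have $c_\infty(x) \preceq_P p$, whence $c(x,y) \preceq_P p$ for all $y > x$; (ii) for $x < m_0$ we have $c_\infty(x) \succ_P p$; (iii) $\{x : c_\infty(x) = p\}$ is infinite; and (iv) since the finitely many chains $c(j,\cdot)$ with $j \le m_0$ all stabilise, there is a stage $T > m_0$ such that for $n \ge T$ the least $k$ with $c(k,n) = p$ is exactly $m_0$.

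First I would bound the injuries. The crucial point is that once $\min(Y_n) \ge m_0$, the injury clause can never fire at any later stage $n'$: every $k$ with $\min(Y_{n'}) \le k < n'$ then satisfies $k \ge m_0$, so by (i) $c(k,n') \preceq_P p$ and hence $p \prec_P c(k,n')$ fails. Since $\min(Y_n)$ is constant between consecutive injuries and, after each injury, the construction restarts from $\emptyset$ and sets its new minimum, at the first subsequent stage allowing progress, to the least $k'$ with $c(k',n) = p$, I can argue as follows. If there were infinitely many injuries, the stages $s_1 < s_2 < \cdots$ at which successive epochs fix their minima would be strictly increasing, hence eventually $s_i \ge T$; but then, by (iv), that epoch fixes its minimum to $m_0$, and by the crucial point it is never injured, contradicting the existence of a later epoch. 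Hence there are only finitely many injuries and eventually $\min(Y_n) = m_0$ forever. (That the construction does keep restarting while its minimum is some $\ell < m_0$ is immediate from (ii): $c_\infty(\ell) \succ_P p$ forces $c(\ell,n) \succ_P p$ for large $n$, which triggers the injury clause, with $\ell$ itself witnessing it.)

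Next I would show there are infinitely many progressions. After the last injury a short induction shows that every point ever added to $Y$ lies in $X := \{x : c_\infty(x) = p\}$: a point $k'$ added at stage $n$ satisfies $c(k',n) = p$, so $c_\infty(k') \succeq_P p$, while $k' > \min(Y_n) = m_0$ gives $c_\infty(k') \preceq_P p$ by (i), hence $c_\infty(k') = p$. By (iii) $X$ is infinite, and any two sufficiently separated elements $k < k'$ of $X$ satisfy $c(k,k') = p$; so at each stage $n$ that is large relative to the current $Y_n$ there is a valid extension $k' \in X$ with $\max(Y_n) < k' < n$, with $c(k,k') = p$ for all $k \in Y_n$, and with $c(k',n) = p$, and this $k'$ witnesses a progression. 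Thus $Y_n$ grows without bound, so $\limsup(Y^{(p)})$ is infinite (and, as already noted, $c$-homogeneous with colour $p$).

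The part I expect to be the main obstacle is the injury bookkeeping in the second paragraph: pinning down exactly which point the construction selects as its new minimum after each restart and ruling out that a ``too small'' minimum traps it forever. The algebraic facts (i)--(iv) about $c_\infty$ are precisely what make the right colour $p$ and the right minimum $m_0$ available; once they are in place, both conditions fall out with only routine verification.
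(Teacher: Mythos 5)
Your argument is correct for colourings that are \emph{ordered} in the full sense, and it supplies exactly the bookkeeping that the paper's one-paragraph sketch leaves to \cite{KMPS19}: the two halves (injuries stabilise; progress recurs) are handled cleanly once the facts (i)--(iv) about $c_\infty$ are in place. The route differs from the paper's in how $p$ is chosen. The paper picks $n_0$ so that every colour occurring after $n_0$ occurs arbitrarily far, and takes $p$ to be $\preceq_P$-maximal among those colours; you instead take $p$ to be the eventual value of the nonincreasing map $x \mapsto c_\infty(x)$. For ordered colourings these coincide, but your derivation of the monotonicity of $c_\infty$ uses $c(x',y) \preceq_P c(x,y)$ for $x \le x' < y$, which is exactly the half of orderedness that a merely \emph{right-ordered} colouring need not satisfy. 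The construction in the paper is set up for right-ordered colourings, and its applications (the reductions for $\cort_\bbN$, $\ort_\bbN$, $\iort_\bbN$) feed it right-ordered inputs; there $c_\infty$ is still defined pointwise but need not be monotone, so your $p$ and $m_0$ are no longer available, whereas the paper's maximality argument survives. So your proof establishes the lemma as literally stated but not in the generality the surrounding development requires; this is the one substantive divergence.

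Two smaller remarks. First, your parenthetical ruling out a permanently trapped epoch with minimum $\ell < m_0$ reads the injury clause existentially (``some $k$ with $\min(Y_n) \le k < n$ has $p \prec_P c(k,n)$''); the clause as written (``for any \dots'') is ambiguous, but the existential reading is the one consistent with the explicitly existential injury clause of the parallel $\art_\bbN$ construction, and under the universal reading a trapped epoch genuinely could occur, so you have in effect identified the reading under which the lemma is true. Second, ``eventually $\min(Y_n) = m_0$ forever'' is slightly too strong: the final epoch may have begun before your stage $T$ with some minimum $\ell > m_0$, and you should also note that a restart after the last injury is guaranteed (since $c(m_0,n)=p$ for all large $n$). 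Neither point damages the argument, since your third paragraph only needs the final minimum to be $\ge m_0$.
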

\begin{proof}
  The suitable $p$ may be found as follows: say that a colour $p$ occurs after $n$
  in $c$ if there is $k > m \ge n$ with $c(m,k) = p$.
  There is a $n_0$ such that every colour occuring after $n_0$ in $c$ occurs arbitrarily
  far. For the $\preceq_P$-maximal such colour occuring after $n_0$, the construction
  above will succeed with no injuries after stage $n_0$ and infinitely many progressing
  steps (this is exactly the same argument as for~\cite[Lemma 4.3]{KMPS19}).
\end{proof}

\begin{lemma}
\label{lem:cortN-below}
We have that $\cort_\bbN \lesW (\lpo')^*$.
\end{lemma}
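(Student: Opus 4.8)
The plan is to use the approximation machinery developed just above. For each potential colour $p \in P$, the sequence $Y^{(p)}$ together with its recursive companions $\mathrm{injury}^{(p)}$ and $\mathrm{progress}^{(p)}$ reduces the question ``is there an infinite $c$-homogeneous set of colour $p$?'' to a pair of arithmetically simple questions: by the observations preceding \Cref{lem:ort-build}, $\limsup(Y^{(p)})$ is infinite if and only if $\mathrm{injury}^{(p)}$ has finitely many $1$s and $\mathrm{progress}^{(p)}$ has infinitely many $1$s, and whenever this is the case $\limsup(Y^{(p)})$ is an infinite $c$-homogeneous set of colour $p$. The first condition is a single instance of $\sisfin$ and the second is the negation of a single instance of $\sisfin$, so each can be answered with one call to $\lpo'$ (recall that $\lpo' \equivsW \sisfin$).

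Concretely, given an instance $(P,c)$ with $k = \card{P}$, I would have the inner reduction compute, uniformly in $p < k$, the binary sequences $\mathrm{injury}^{(p)}$ and $\mathrm{progress}^{(p)}$, and feed the resulting $2k$ sequences to $(\lpo')^*$ while declaring $2k$ copies: the $(2p)$-th copy decides whether $\mathrm{injury}^{(p)}$ is finite, the $(2p+1)$-th whether $\mathrm{progress}^{(p)}$ is infinite. From the answer string $\sigma \in 2^{2k}$ the outer reduction recovers $k = \card{\sigma}/2$ and outputs the least $p < k$ for which $\sigma$ reports both that $\mathrm{injury}^{(p)}$ is finite and that $\mathrm{progress}^{(p)}$ is infinite. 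By \Cref{lem:ort-build} such a $p$ exists, so the outer reduction always commits to an answer; and by the properties recalled above, for this $p$ the set $\limsup(Y^{(p)})$ witnesses that $p$ is a correct output of $\cort_\bbN(P,c)$. Since the outer reduction reads nothing but $\sigma$ (it even recovers $k$ from $|\sigma|$), this is a strong reduction, and $(\lpo')^{2k} \leqsW (\lpo')^*$ then yields $\cort_\bbN \lesW (\lpo')^*$.

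Essentially all the substance is already contained in the construction of $Y^{(p)}$ and in \Cref{lem:ort-build}, so what is left here is bookkeeping. The one point I would double-check is the orientation of the two sub-questions --- ``$\mathrm{injury}^{(p)}$ is finite'' is $\Sigma^0_2$ and ``$\mathrm{progress}^{(p)}$ is infinite'' is $\Pi^0_2$, each within reach of a single $\lpo'$ once one is allowed to complement its output --- together with the harmless detail that the number $2k$ of copies must be computed from the input poset so that the $(\lpo')^*$-instance is well formed.
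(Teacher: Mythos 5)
Your proposal is correct and follows essentially the same route as the paper: compute $\mathrm{injury}^{(p)}$ and $\mathrm{progress}^{(p)}$ for every colour $p$, feed them to parallel instances of $\lpo'$, and use \Cref{lem:ort-build} to guarantee that some $p$ passes both tests, that $p$ being a valid answer since $\limsup(Y^{(p)})$ is then an infinite $c$-homogeneous set of colour $p$. The extra bookkeeping about recovering $k$ from $|\sigma|$ and the strongness of the reduction is fine and consistent with the paper's claim of $\lesW$.
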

\begin{proof}
Given an input colouring $c$, compute in parallel all $\mathrm{injury}^{(p)}$ and
  $\mathrm{progress}^{(p)}$ for every colour $p$ and feed each sequence to an
  instance of $\lpo'$.
  By~\Cref{lem:ort-build}, there is going to be some $p$
  for which there are going to be finitely many injuries and infinitely many progressing
  steps, and that $p$ is the colour of some homogeneous set.
\end{proof}

\begin{lemma}
\label{lem:ortN-below}
We have that $\ort_\bbN \leW (\lpo')^* \times \ect$.
\end{lemma}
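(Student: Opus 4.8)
The plan is to recycle the reduction witnessing $\cort_\bbN \lesW (\lpo')^*$ from \Cref{lem:cortN-below} and to add exactly one call to $\ect$. The $(\lpo')^*$-part will, as before, single out a colour $p$ for which the approximating construction $Y^{(p)}$ succeeds; the new $\ect$-call will hand us a stage $b$ after which that construction is never injured again. Past stage $b$ the finite sets $Y^{(p)}_n$ only grow, so $\limsup(Y^{(p)})$ becomes an honestly enumerable infinite $c$-homogeneous set that we can simply output.

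\textbf{Setup.} Given an input $(P,c)$ with $c : [\bbN]^2 \to P$ right-ordered, run in parallel all the recursive sequences $Y^{(p)}$, $\mathrm{injury}^{(p)}$ and $\mathrm{progress}^{(p)}$ for $p \in P$, exactly as before the statement of \Cref{lem:ort-build}. As in \Cref{lem:cortN-below}, feed the $2|P|$ sequences to $(\lpo')^*$: for each $p$, one $\lpo'$-instance testing whether $\mathrm{injury}^{(p)}$ has finitely many $1$s and one testing whether $\mathrm{progress}^{(p)}$ has infinitely many $1$s (using $\sisfin, \siscofin \equivW \lpo'$). In parallel, feed to $\ect$ the colouring $g : \bbN \to 2^{|P|}$ with $g(n) = (\mathrm{injury}^{(p)}(n))_{p \in P}$, the tuple of all injury bits at stage $n$; this is a colouring with at most $2^{|P|}$ colours, hence a legitimate $\ect$-instance, and let $b$ be the bound it returns.

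\textbf{Why it works.} From the $(\lpo')^*$-answers pick, as in \Cref{lem:cortN-below} and using \Cref{lem:ort-build}, a colour $p$ with finitely many injuries and infinitely many progressions; then $\limsup(Y^{(p)})$ is infinite and, by the invariant of the construction, $c$-homogeneous with colour $p$. The point of $b$ is that the construction for this $p$ suffers no injury after stage $b$: any value of $g$ carrying a $1$ in the $p$-coordinate occurs only finitely often in $g$ (as $p$ is injured finitely often), hence by the defining property of $\ect$-bounds it does not occur at any $n > b$; so $\mathrm{injury}^{(p)}(n) = 0$ for all $n > b$. Therefore $(Y^{(p)}_n)_{n \geq b}$ is increasing, $\limsup(Y^{(p)}) = \bigcup_{n \geq b} Y^{(p)}_n$, and this union can be enumerated straight from the (recursive) construction. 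It is infinite because only finitely many progression steps occur before $b$ while infinitely many occur in total, and every progression step after $b$ adjoins an element that is never again removed. Output $\bigcup_{n \geq b} Y^{(p)}_n$.

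\textbf{Main obstacle.} The only genuinely delicate point is organising the argument so that it is a reduction to the \emph{product} $(\lpo')^* \times \ect$ and not to a composition: the instances of $(\lpo')^*$ and of $\ect$ must both be computed directly from $(P,c)$, with neither oracle's output needed to build the other's input, and the homogeneous set must then be computable from $(P,c)$ together with the two oracle outputs. This is exactly why we route \emph{all} injury bits simultaneously through a single $\ect$-colouring $g$ instead of, say, applying $\ect$ only to the $p$ selected by $(\lpo')^*$. Finally, $2|P|$ and $2^{|P|}$ are bounded in terms of the given poset $P$, so finitely many instances are used, as required for the reductions $(\lpo')^{2|P|} \leqW (\lpo')^*$ and $\ect \leqW \ect$.
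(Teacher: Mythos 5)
Your proposal is correct and follows essentially the same route as the paper's proof: run all the $Y^{(p)}$ constructions in parallel, use $(\lpo')^*$ on the injury/progress sequences to identify a successful colour $p$, and use a single $\ect$ call on the combined injury data to obtain a stage past which that construction is never injured, so the homogeneous set is $\bigcup_{n \ge b} Y^{(p)}_n$. Your explicit packaging of all injury bits into one colouring $g : \bbN \to 2^{|P|}$ is exactly the intended reading of the paper's phrase ``feed all $\mathrm{injury}^{(p)}$ to $\ect$''.
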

\begin{proof}
Given an input colouring $c$, compute in parallel all $\mathrm{injury}^{(p)}$ and
  $\mathrm{progress}^{(p)}$ for every colour $p$ and feed each sequence to an
  instance of $\lpo'$ \emph{and} all $\mathrm{injury}^{(p)}$ to $\ect$.
As before, use $\lpo'$ to find out some $p$ for which the construction succeed.
  For that $p$, $\ect$ will yield some $n_0$ such that
  $\mathrm{injury}^{(p)}_n = 0$ for every $n \ge n_0$, so in particular,
  $\limsup(Y^{(p)}) = \bigcup_{n \ge n_0} Y_n^{(p)}$, which is computable from
  $n_0$.
\end{proof}

\begin{lemma}
\label{lem:iortN-below}
We have that $\iort_\bbN \lesW \ect$.
\end{lemma}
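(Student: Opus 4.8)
The plan is to provide computable functionals $H$ and $K$ realizing the strong reduction, where $H$ hands $\ect$ a single colouring built out of the approximating sets $Y^{(p)}$ of the preceding construction, and $K$ simply adds $1$ to the bound that $\ect$ returns. Concretely, given a right-ordered colouring $c : [\bbN]^2 \to P$, I would form, for each of the finitely many $p \in P$, the recursive sequence $Y^{(p)} : \bbN \to \powfin(\bbN)$ described above, and define $g : \bbN \to \pow(P)$ by $g(n) = \{\, p \in P : |Y^{(p)}_n| \ge 2 \,\}$. This $g$ is computable from $c$ and has at most $2^{|P|}$ values, so $(2^{|P|}, g)$ is a legitimate instance of $\ect$; on a bound $b$ for it I would output $n_0 := b + 1$. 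Since $K$ looks only at the $\ect$-output, this will be a strong reduction.

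\textbf{Key steps for correctness.} By \Cref{lem:ort-build} fix a colour $p^\ast$ with $\limsup(Y^{(p^\ast)})$ infinite; then from some stage $s_0$ on the construction for $p^\ast$ is never injured, and from $s_0$ on $Y^{(p^\ast)}$ is non-decreasing with infinitely many progressing steps. Hence there is a least stage $r$ with $|Y^{(p^\ast)}_r| \ge 2$, necessarily $r > s_0$, and $|Y^{(p^\ast)}_n| \ge 2$ for all $n \ge r$ by monotonicity. Therefore $p^\ast \in g(n)$ exactly when $n \ge r$, so any value of $g$ that avoids $p^\ast$ is attained only at the finitely many stages below $r$; since a bound $b$ for $g$ forces $g(x)$ to recur for every $x > b$, this gives $b \ge r - 1$ and so $n_0 = b + 1 \ge r$. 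Finally $Y^{(p^\ast)}_r$ has two elements, both $< r \le n_0$ by the invariant $\max(Y^{(p^\ast)}_n) < n$, and the construction only grows past $s_0$, so $Y^{(p^\ast)}_r \subseteq \bigcup_{n \ge s_0} Y^{(p^\ast)}_n = \limsup(Y^{(p^\ast)})$, which is infinite and $c$-homogeneous. Thus $n_0$ is a correct output for $\iort_\bbN$.

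\textbf{Main obstacle.} The only genuine content is the observation that for the successful colour $p^\ast$ the predicate ``$|Y^{(p^\ast)}_n| \ge 2$'' fails at finitely many stages and then holds forever: this is exactly what pins the $\ect$-bound past the first stage $r$ at which $Y^{(p^\ast)}$ reaches size two, which is precisely what is needed to read off two elements of an infinite homogeneous set. Everything else — computability and finite range of $g$, the monotonicity facts about $Y^{(p^\ast)}$, and the fact that $K$ depends only on the $\ect$-output — is routine bookkeeping already licensed by the construction preceding the statement.
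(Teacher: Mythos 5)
Your proposal is correct and follows essentially the same route as the paper's: both apply $\ect$ to (a bundling over all $p \in P$ of) the indicator of whether $|Y^{(p)}_n|$ has reached $2$, and exploit that for the successful colour $p^\ast$ this indicator fails only at finitely many stages, forcing the $\ect$-bound to land past a stage whose $Y^{(p^\ast)}$-set already contributes two elements of the infinite homogeneous set. One small slip to repair: the globally least $r$ with $|Y^{(p^\ast)}_r| \ge 2$ need not satisfy $r > s_0$ (the set can reach size $2$ and then be injured before the last injury), so $r$ should be taken as the least such stage \emph{after} the last injury — with that reading every subsequent claim (the finiteness of $\{n : p^\ast \notin g(n)\}$, the bound $b \ge r-1$, and $Y^{(p^\ast)}_r \subseteq \limsup(Y^{(p^\ast)})$) holds verbatim.
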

\begin{proof}
Given an input colouring $c$, consider for every colour $p$ the sequence
  $u^{(p)} : \bbN \to \{0,1,2\}$ defined by $u^{(p)}_n = \min(3,|Y_n|)$.
  Clearly it is computable from $c$. Applying $\ect$ we get some $n_p$ such that
  \begin{itemize}
    \item either there are infinitely many injuries after $n_p$
    \item or $u^{(p)}_{k} = u^{(p)}_{n_p}$ for every $k \ge n_p$
  \end{itemize}
  By Lemma \ref{lemma:ectandtcn*}, a single instance of $\ect$ can provide the finitely many answers for the finitely many colours.
  
  By~\Cref{lem:ort-build}, we even know there is a $p_0$
  such that $\limsup(Y^{(p_0)})$ is infinite; additionally we defined $u$ in such
  a way that necessarily, $n_{p_0}$ bounds two elements of $\limsup(Y^{(p_0)})$
  because we shall have $u^{(p_0)}_k = 2$ for every $k \ge n_{p_0}$.
So we may simply take the maximum of all $n_p$ to solve our instance of
$\iort_\bbN$.
\end{proof}

This concludes our analysis of the ordered Ramsey theorem.

\subsection{
Reducing the additive Ramsey theorem over $\bbN$ to $(\lpo')^*$ and $\ect$}

We now turn to $\art_\bbN$.
The basic idea is that, given an additive colouring $c$, it is
useful to define the composite colouring $L \circ c$, with $L$ being a map from
a finite semigroup to its $\GrL$-classes. $\le_{\GrR}$ then induces a right-ordered
structure on the colouring.
Constructing a $L \circ c$ homogeneous set $X$ such that we additionally have
that $c(\min X, x) = c(\min X, y)$ for every $x,y \in X \setminus \{\min X\}$
ensures that $X$ is $c$-homogeneous by~\Cref{lem:leLRH}.
So we will give a recipe to construct exactly such an approximation, similarly
to what we have done in the previous section.

So this time around, assume a semigroup $S$ and a colouring $c : [\bbN]^2 \to S$
to be fixed. For every $s \in S$, we shall define a recursive sequence of sets
$Y^{(s)} : \bbN \to \powfin(\bbN)$ (we omit the superscript when clear from context)
such that $\max(Y_n) < n$ and $Y_n \setminus \{\min(Y_n)\}$
be homogeneous, with, if $Y_n \neq \emptyset$, $c(\min(Y_n), k) = s$ for $k \in Y_n \setminus \{\min(Y_n)\}$.

For $n = 0$, we define $Y_0 = \emptyset$. For $Y_{n+1}$, we have a couple of
options:
\begin{itemize}
  \item If $Y_n$ is empty and there is $k < n$ such that $c(k,n) = s$ and there is
    no $k \le k' < n' \le n$ with $c(k',n') <_{\GrR} s$, then set $Y_{n+1} = \{k\}$
    for the minimal such $k$ and say that the construction \emph{(re)starts}.
  \item If $Y_n$ is non-empty and there is some $k$ with $\min(Y_n) \le k < n$
    with $c(k,n) <_{\GrR} s$, set $Y_{n+1} = \emptyset$ and say that the construction
    was \emph{injured} at stage $n$.
  \item Otherwise, if $Y_n$ is non-empty and we have some $\min(Y_n) < k < n$ with
    $c(\min(Y_n),k) = s$ and $c(k,n) \GrR s$, set $Y_{n+1} = Y_n \cup \{k\}$ and
    say that the construction \emph{progresses}.
  \item Otherwise, set $Y_{n+1} = Y_n$ and say that the construction stagnates.
\end{itemize}

We can define auxiliary binary sequences $\mathrm{injury}^{(s)}$ and
$\mathrm{progress}^{(s)}$ that witness the relevant events, and an infinite homogeneous
subset will be built as long as we have finitely many injuries and infinite progress.

\begin{lemma}
  \label{lem:art-sound}
  If $\mathrm{injury}^{(s)}$ has finitely many 1s and $\mathrm{progress}^{(s)}$
  has infinitely many 1s, then $X = \limsup(Y) \setminus \{\min(\limsup(Y))\}$ is a $c$-homogeneous
  infinite set. Furthermore the colour of $X$ is computable from $s$.
\end{lemma}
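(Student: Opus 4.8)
The plan is to reproduce the endgame of the proof of \Cref{lem:ind-artq}: extract from the construction enough Green-theoretic information about $\restr{c}{\limsup(Y)}$ to place all the relevant colours in a single $\GrH$-class that is in fact a group, and then pin the colour of every pair down to the idempotent of that group.

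First I would check that $X$ is infinite. Since $\mathrm{injury}^{(s)}$ has finitely many $1$s while $\mathrm{progress}^{(s)}$ has infinitely many, after the last injury the construction must restart once (to escape $\emptyset$) and is then never reset again; let $N$ be the stage of this last restart and $m_0$ the unique element of $Y_{N+1}$, so that $m_0 < N$, $\min(Y_n) = m_0$ for every $n > N$, and for $n > N$ the sequence $(Y_n)$ is nondecreasing and only progresses or stagnates. As each progressing step adds an element, $\limsup(Y) = \bigcup_{n > N} Y_n$ is infinite, hence so is $X = \limsup(Y) \setminus \{m_0\}$.

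Next I would record, using only the clauses of the construction together with additivity of $c$: (i) every $u \in X$ enters $Y$ at a progressing stage $n_u > u$ at which $c(m_0, u) = s$ and $c(u, n_u) \GrR s$, so $c(m_0, \cdot) \equiv s$ on $X$; (ii) combining the ``no $m_0 \le k' < n' \le N$ with $c(k', n') <_{\GrR} s$'' clause of the restart at stage $N$ with the absence of injuries at every later stage, one has $c(k, n) \not<_{\GrR} s$ for all $m_0 \le k < n$; (iii) for $u \in X$ and any stage $n > n_u$, additivity gives $c(u, n) = c(u, n_u) \cdot c(n_u, n) \le_{\GrR} c(u, n_u) \le_{\GrR} s$, which with (ii) forces $c(u, n) = s$, and likewise $c(m_0, n) = c(m_0, N) \cdot c(N, n) \le_{\GrR} s$ forces $c(m_0, n) = s$ for all $n \ge N$; (iv) feeding (i) and (iii) into $c(m_0, n) = c(m_0, u) \cdot c(u, n)$ for a large stage $n$ gives $s = s \cdot s$, so $s$ is idempotent.

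Finally, for $u < v$ in $X$ I would argue $c(u, v) = s$. Additivity with $m_0$ gives $s = c(m_0, v) = c(m_0, u) \cdot c(u, v) = s \cdot c(u, v)$, so $s \le_{\GrL} c(u, v)$; using (iii) with a stage $n$ beyond both $n_u$ and $n_v$, $s = c(u, n) = c(u, v) \cdot c(v, n) = c(u, v) \cdot s$, so $s \le_{\GrR} c(u, v)$; and (ii) gives $c(u, v) \not<_{\GrR} s$. The crux — and the step I expect to cause the most trouble — is to upgrade $s \le_{\GrR} c(u, v)$ to $s \GrR c(u, v)$: this should follow by exploiting the (increasing) order in which the progressing clause enlarges $Y$, so that, after possibly moving to a later stage, $c(u, v)$ factors as $c(u, z) \cdot c(z, v)$ for some $z$ with $u < z < v$ and $c(u, z) = s$ (by (iii)), whence $c(u, v) \le_{\GrR} s$; the configuration in which no such $z$ is available must be ruled out by a closer analysis of when elements enter $Y$, and making this book-keeping watertight is the delicate part. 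Granting $s \GrR c(u, v)$, \Cref{lem:leLRH} yields $s \GrH c(u, v)$, so $c(u, v)$ lies in $H := [s]_{\GrH}$. Since $c(u, w) = c(u, v) \cdot c(v, w)$ with all three colours in $H$ for any $u < v < w$ in $X$, \Cref{lem:Hidemgroup} makes $(H, \cdot)$ a group, whose identity is the idempotent of $H$, namely $s$ by (iv). Then $s \cdot c(u, v) = s$ inside the group $H$ forces $c(u, v) = s$. Hence $c$ is constantly $s$ on $[X]^2$, and $s$ is trivially computable from $s$.
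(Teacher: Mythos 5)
There is a genuine gap, and it sits in your step (iii) and propagates through everything after it. From $c(u,n)\le_{\GrR} s$ together with the no-injury fact $c(u,n)\not<_{\GrR} s$ you conclude $c(u,n)=s$; but $<_{\GrR}$ is the strict version of the \emph{preorder} $\le_{\GrR}$, so these two facts only yield $c(u,n)\GrR s$, i.e.\ $\GrR$-\emph{equivalence}, not equality. Consequently (iv) does not give $s\cdot s=s$: in general $s$ is \emph{not} idempotent and the colour of $X$ is \emph{not} $s$. Concretely, take $S=\bbZ/3\bbZ$ (a group, so a single $\GrR$-class and no injury can ever occur), $c(m,n)=(n-m)\bmod 3$ and $s=1$: the construction restarts with $m_0=0$ and then collects exactly the $u\equiv 1 \pmod 3$, so $X=\{1,4,7,\dots\}$, the hypotheses of the lemma hold, yet $c$ is constantly $0$ on $[X]^2$ while $s=1$ is not idempotent. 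Your very last deduction exposes the issue: in a group $H$ with identity $e$, the equation $s\cdot c(u,v)=s$ forces $c(u,v)=e$ by cancellation, not $c(u,v)=s$. That cancellation is in fact the correct endgame and is what the paper does: the colour of $X$ is the idempotent of the group $\GrH$-class of $s$, which is still computable from $s$ (e.g.\ as $s^{k!}$ for $k=|S|$), so the lemma as stated survives, but your identification of the colour does not.

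The rest of your argument is the paper's argument: $s\le_{\GrL} c(u,v)$ from $s=s\cdot c(u,v)$, then \Cref{lem:leLRH} to place $c(u,v)$ in the $\GrH$-class of $s$, then \Cref{lem:Hidemgroup} to make that class a group. The step you flag as the crux, upgrading $s\le_{\GrR} c(u,v)$ to $s\GrR c(u,v)$, is genuinely the delicate point (the paper asserts the $\GrR$-equivalence of all the relevant colours without detail), and your sketch via an intermediate $z$ is left unfinished; note that the useful half is $c(u,v)\le_{\GrR} s$ (to be combined with $c(u,v)\not<_{\GrR} s$, which holds because $v$ is itself an injury-free stage), and the obstruction is exactly the out-of-order entries into $Y$ that you mention. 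But that is a repairable bookkeeping gap; the equality-versus-$\GrR$-equivalence confusion is not, and it is what leads you to the wrong colour and to the unnecessary (and false) idempotency claim for $s$.
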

\begin{proof}
That the condition is sufficient for $X$ to be infinite is obvious; it only
remains to show it is homogeneous.
Note that all elements in $\limsup(Y)$ are necessarily $\GrR$-equivalent to
  one another. Call $y_0 = \min(\limsup(Y))$.
  For $(l,m) \in [X]^2$, we necessarily have $s \le_{\GrL} c(l,m)$. So by~\Cref{lem:leLRH},
  we necessarily have $c(l,m) \GrH s$. Additionally, we know they belong to a
  $\GrH$-class which is a group, so by algrebraic manipulations, we have that
  $[X]^2$ is monochromatic and the corresponding colour is the neutral element of that group.
\end{proof}

\begin{lemma}
  \label{lem:art-build}
  There is some $s$ such that $\mathrm{injury}^{(s)}$ has finitely many 1s and $\mathrm{progress}^{(s)}$
  has infinitely many 1s.
\end{lemma}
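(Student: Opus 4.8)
The plan is to mirror the proof of~\cite[Proposition~4.1]{KMPS19}, running the construction of the previous section ``on top of'' the $\GrL$-class bookkeeping of~\Cref{lem:ort-build}. Concretely, I want to produce a single target $s \in S$ for which the sequence $Y^{(s)}$ is eventually free of injuries and progresses infinitely often; \Cref{lem:art-sound} then finishes the job. The element $s$ will be located in two steps: first a good $\GrL$-class, then a good element inside it.

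First I would fix $n_0$ large enough that every value that $c$ takes on some pair $n_0 \le x < y$ also \emph{occurs cofinally}, i.e.\ arbitrarily far out; this exists because $S$ is finite. Since $c(x,z) = c(x,y)\cdot c(y,z)$ for $x<y<z$, moving the right endpoint to the right only takes $c(x,\cdot)$ downward in the suffix preorder $\le_{\GrR}$, so post-composing with the map $L$ sending an element to its $\GrL$-class yields a colouring $L\circ c$ that is right-ordered for the preorder $\ge_{\GrR}$ induces on $\GrL$-classes; Lemmas~\ref{lem:leRJR} and~\ref{lem:leLRH} are what make this behave well enough for the bookkeeping of~\Cref{lem:ort-build} to apply. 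I would then let $\Lambda$ be a $\le_{\GrR}$-minimal $\GrL$-class occurring cofinally, and invoke the analysis behind~\Cref{lem:ort-build} for $L\circ c$ with target $\Lambda$: it succeeds, producing an infinite set $X$ with $y_0 := \min(X) \ge n_0$ and $c(x,x')\in\Lambda$ for all $x<x'$ in $X$. Since the values $c(y_0,x)$ for $x\in X\setminus\{y_0\}$ all lie in the finite set $\Lambda$, the pigeonhole principle (available in $\rca$) lets me fix $s\in\Lambda$ with $c(y_0,x)=s$ for infinitely many $x\in X$; note that $s$ itself then occurs cofinally.

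It then remains to check that $Y^{(s)}$ has finitely many injuries and infinitely many progressing steps. For the injuries: an injury at a stage $n$ with $\min(Y_n)\ge n_0$ would require some $c(k,n)<_{\GrR}s$ with $n_0 \le k < n$, so $c(k,n)$ is cofinal; but a cofinal element that is $\le_{\GrR} s$ and $\GrJ$-equivalent to $s$ is $\GrR$-equivalent to $s$ by Lemma~\ref{lem:leRJR}, and a cofinal element whose $\GrL$-class lies strictly $\le_{\GrR}$-below $\Lambda$ contradicts the minimality of $\Lambda$ --- so there are no injuries after $n_0$, hence at most one further restart. For the progressing steps: the infinitely many $x\in X$ with $c(y_0,x)=s$ provide, at cofinally many stages $n$, a witness $k:=x$ to progress, because $c(y_0,k)=c(y_0,n)=s$ together with $c(k,n)\in\Lambda$ and the fact that the ambient $\GrH$-class is a group (Lemma~\ref{lem:Hidemgroup}) forces $c(k,n)\GrR s$, exactly as in the computations of~\Cref{lem:art-sound}.

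The step I expect to be the main obstacle is reconciling the restart mechanism of $Y^{(s)}$ with the single set $X$ produced above: a priori the $s$-construction could restart after $n_0$ from a minimum $y_0'$ other than the chosen $y_0$, and one needs that every sufficiently late restart lands on a minimum from which $c$ still takes value $s$ cofinally --- equivalently, that $c(x,\cdot)$ is eventually $\GrR$-constant (and lands in $\Lambda$) for every large $x$ that can restart the construction. This is the delicate point of~\cite[Proposition~4.1]{KMPS19}; the clean way around it is to make the pigeonhole choice of $s$ uniform over the possible minima rather than tied to one $X$, extracting the eventual $\GrR$-stability of the relevant $c(x,\cdot)$ from finiteness of $S$ together with Lemmas~\ref{lem:leRJR} and~\ref{lem:leLRH}. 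All the algebraic inputs needed (Lemmas~\ref{lem:Hidemgroup}, \ref{lem:leRJR}, \ref{lem:leLRH}) are already in place.
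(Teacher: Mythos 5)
Your strategy for \emph{locating} $s$ --- first extract an infinite set $X$ homogeneous for the quotient colouring via the machinery of \Cref{lem:ort-build}, then pigeonhole on $c(\min X,\cdot)$ inside $X$ --- differs from the paper's proof, which skips the detour through $X$: it fixes $n_0$ so that all $\GrR$-classes occurring after $n_0$ occur arbitrarily far, takes a minimal such class $R$ and the minimal $k_0$ after which no class strictly $\le_{\GrR}$-below $R$ occurs, and pigeonholes directly on $c(k_0,\cdot)$. Before comparing further, one slip needs fixing: the quotient that is right-ordered is by $\GrR$-classes, not $\GrL$-classes. Since $c(x,n')=c(x,n)\cdot c(n,n')$, we only get $c(x,n')\le_{\GrR}c(x,n)$; the $\GrL$-class of $c(x,\cdot)$ is not monotone (take the right-zero semigroup $xy=y$), so ``$L\circ c$ is right-ordered'' and ``a $\le_{\GrR}$-minimal $\GrL$-class'' are not well-defined, and \Cref{lem:ort-build} cannot be invoked for that quotient as written. (The paper's introductory sentence to this subsection contains the same slip, but its proof of the lemma works with $\GrR$-classes.)

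The more serious problem is that the step you yourself flag as ``the main obstacle'' is precisely what has to be proved, and you leave it as a one-sentence gesture. The construction restarts at the \emph{minimal} admissible point, so the permanent restart point $y_*$ of $Y^{(s)}$ need not be your $y_0$; all one gets for free is that $c(y_*,n)$ eventually lies in the $\GrR$-class of $s$, not that it equals $s$ infinitely often, and the progress clause demands $c(\min Y_n,k)=s$ on the nose. This is not a phantom worry, and your chosen $s$ can genuinely fail: take $S=\mathbb{Z}/3\mathbb{Z}$ and $c(x,y)=F(y)-F(x)$ with $F(0)=0$, $F(1)=1$, $F(2)=2$ and $F(n)$ alternating between $0$ and $2$ for $n\ge 3$. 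Every colour occurs with arbitrarily large left endpoint, there are no injuries (a group has a single $\GrR$-class), and with $y_0$ even one has $c(y_0,n)=1$ for infinitely many $n\in X$, so $s=1$ is a legitimate output of your pigeonhole; yet the construction for $s=1$ restarts at $0$ (because $c(0,1)=1$) and its progress clause can only ever add the single point $1$, since $c(0,k)=1$ forces $F(k)=1$, i.e.\ $k=1$. So $\limsup(Y^{(1)})$ is finite even though injuries are finite. Making the choice of $s$ robust against where the restart actually lands is the real content of the lemma, and it is missing. A smaller soft spot of the same kind: ruling out injuries at stages with $\min(Y_n)\ge n_0$ does not rule out injuries at late stages where $\min(Y_n)<n_0$; you need the observation that successive restart points strictly increase past the left endpoint of each injuring pair, so only finitely many injuries occur before the restart point clears $n_0$.
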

\begin{proof}
  Consider, much as we did in the proof of~\Cref{lem:ort-build}, a $n_0$ such
  that all $\GrR$-classes occurring after $n_0$ occur arbitrarily far. Consider
  a minimal such $\GrR$-class $R$. 
  For the minimal $k_0$ such that no $\GrR$-class strictly lower than $R$
  occurs after $k_0$, there is some $s$ such that the set $\{ n \mid c(k_0,n) = s\}$
  is infinite; but, by construction, this set is exactly
  $\limsup(Y^{(s)}) \setminus \{k_0\}$.
\end{proof}

With these two lemmas in hand, it is easy then to carry out a similar analysis
as in the last subsection. We do not expand the proof, which are extremely
similar, but only summarize the results.

\begin{lemma}
  \label{lem:artN-below}
  We have the following reductions:
  \begin{itemize}
    \item $\art_\bbN \leW (\lpo')^* \times \ect$
    \item $\cart_\bbN \lesW (\lpo')^*$
    \item $\iart_\bbN \lesW \ect$
  \end{itemize}
\end{lemma}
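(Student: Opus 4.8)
The plan is to prove the three reductions exactly as their counterparts for the ordered theorem (\Cref{lem:cortN-below}, \Cref{lem:ortN-below} and \Cref{lem:iortN-below}), replacing the ordered approximation $Y^{(p)}$ by the additive approximation $Y^{(s)}$ defined just before \Cref{lem:art-sound}, and replacing \Cref{lem:ort-build} by the pair \Cref{lem:art-build} (existence of a successful parameter $s$) and \Cref{lem:art-sound} (soundness: the resulting set $X = \limsup(Y^{(s)}) \setminus \{\min \limsup(Y^{(s)})\}$ is infinite, $c$-homogeneous, and of a colour recursive in $s$). As in the ordered case, the auxiliary sequences $\mathrm{injury}^{(s)}$, $\mathrm{progress}^{(s)}$ and $u^{(s)}$, where $u^{(s)}_n = \min(3,|Y^{(s)}_n|)$, are all recursive in the input colouring.

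For $\cart_\bbN \lesW (\lpo')^*$ I would, on input $(S,c)$, run the additive construction for every $s \in S$ and feed each $\mathrm{injury}^{(s)}$ and $\mathrm{progress}^{(s)}$ to an instance of $\sisfin \equivW \lpo'$. By \Cref{lem:art-build} some $s$ has finitely many injuries and infinitely many progress steps; the $(\lpo')^*$-answers single out such an $s$, and by \Cref{lem:art-sound} the homogeneous colour is then computable. Since this last computation uses the multiplication table of $S$, to keep the reduction \emph{strong} I would devote a few further trivial $\lpo'$-instances to transmitting that table to the outer witness. For $\art_\bbN \leW (\lpo')^* \times \ect$ I would do the same and additionally feed all the $\mathrm{injury}^{(s)}$ to one instance of $\ect$ (by \Cref{lemma:ectandtcn*} a single $\ect$ answers the finitely many colours simultaneously). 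Having located a good $s$ via $(\lpo')^*$, the bound $n_0 \in \ect(\mathrm{injury}^{(s)})$ satisfies $\mathrm{injury}^{(s)}_n = 0$ for all $n \ge n_0$; hence past $n_0$ the sets $Y^{(s)}_n$ only grow, $\limsup(Y^{(s)}) = \bigcup_{n \ge n_0} Y^{(s)}_n$ is computed from $n_0$, and deleting its minimum yields the homogeneous set by \Cref{lem:art-sound}. This reduction is not strong because the outer witness must recompute the $Y^{(s)}_n$, hence needs the colouring, exactly as in \Cref{lem:ortN-below}.

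For $\iart_\bbN \lesW \ect$ I would apply a single $\ect$ to the family $\{u^{(s)}\}_{s \in S}$, obtaining for each $s$ some $n_s$ such that either infinitely many injuries occur for $s$ after $n_s$, or $u^{(s)}$ is constant from $n_s$ onwards; this dichotomy holds because after the last injury $|Y^{(s)}_n|$ is non-decreasing, so any palette-stabilisation bound for $u^{(s)}$ is a value-stabilisation bound. For the $s_0$ provided by \Cref{lem:art-build}, $u^{(s_0)}$ is eventually equal to $3$, so $Y^{(s_0)}_{n_{s_0}}$ has at least three elements, all below $n_{s_0}$, and since $Y^{(s_0)}_{n_{s_0}} \subseteq \limsup(Y^{(s_0)})$ the two smallest elements of $X$ — the second and third smallest of $\limsup(Y^{(s_0)})$ — lie below $n_{s_0}$; so $\max_s n_s$ bounds two elements of an infinite $c$-homogeneous set. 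Note the switch from the $3$-valued $u$ of \Cref{lem:iortN-below} to a $4$-valued one here, forced by the fact that the additive homogeneous set is $\limsup(Y^{(s_0)})$ with its minimum removed, so one must bound its third element rather than its second.

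The genuine content has already been isolated in \Cref{lem:art-sound} and \Cref{lem:art-build}: checking that the additive construction produces a set that is $c$-homogeneous rather than merely $(L \circ c)$-homogeneous, which rests on controlling the $\GrL$-class through $\le_{\GrR}$, on \Cref{lem:leLRH}, and on the group structure of the relevant $\GrH$-class (\Cref{lem:Hidemgroup}). So I expect no real obstacle in the present lemma itself; each of the three reductions should be a mechanical transcription of the corresponding ordered statement, the only small care points being the transmission of $S$ for the strong version of the $\cart_\bbN$ reduction and the use of a $4$-valued monitoring sequence in the $\iart_\bbN$ reduction.
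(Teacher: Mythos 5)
Your proposal is correct and is exactly the argument the paper intends: the paper in fact omits this proof entirely, remarking only that it is "extremely similar" to the ordered case, and your transcription via $Y^{(s)}$, \Cref{lem:art-sound} and \Cref{lem:art-build} is the intended one. Your two care points are both genuine and well handled — the extra $\lpo'$-instances to transmit the multiplication table of $S$ for the strong reduction to $(\lpo')^*$, and the need to monitor $\min(3,|Y^{(s)}_n|)$ (bounding a third element of $\limsup(Y^{(s)})$) because the homogeneous set has its minimum removed.
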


\subsection{Wrapping things up: deriving Theorem~\ref{thm:wei-eqs-bbN}}

Let us now combine the lemmas above to prove each item
of Theorem~\ref{thm:wei-eqs-bbN}:
\begin{itemize}
  \item $\ort_\bbN \equivW \art_\bbN \equivW \tcn^* \times (\lpo')^*$: by combining the trivial observation that
  we have $\cart_\bbN \times \iart_\bbN \leW \art_\bbN \times \art_\bbN$ and all of the
    lemmas of Subsection~\ref{subsec:aortN-reversal}, we have
    that $\tcn^* \times \ect \leW \art_\bbN$, which is hald the equivalence
    required (recall that $\tcn^* \equivW \ect$ (Lemma~\ref{lemma:ectandtcn*})).
    The reversal that improves this to an equality is given by the first
    item in Lemma~\ref{lem:artN-below}.
    The equivalence between $\tcn^* \times (\lpo')^*$ and $\ort_\bbN$ is
    established similarly, the only difference being that the reversal is
    obtained by Lemma~\ref{lem:ortN-below}.
  \item $\cort_\bbN \equivW \cart_\bbN \equivW (\lpo')^*$:
    $\cart_\bbN \leW (\lpo')^*$ is a conclusion of Lemma~\ref{lem:artN-below},
    $\cort_\bbN \leW (\lpo')^*$ is from Lemma~\ref{lem:cortN-below}
    while the reversals are given by Lemma~\ref{lem:caortN-reversal}.
  \item $\iort_\bbN \equivW \iart_\bbN \equivW \tcn^*$: similarly,
    Lemmas~\ref{lem:artN-below} and~\ref{lem:iaortN-reversal} establish
    that $\iart_\bbN \equivW \tcn^*$, and the equivalence
    $\iort_\bbN \equivW \tcn^*$ is proven using Lemmas~\ref{lem:iortN-below} and
    ~\ref{lem:iaortN-reversal}.
\end{itemize}

\section{How the colours are coded}
\label{sec:colours}
All principles we have studied that receive as input a colouring of some sort also receive explicit finite information about the finite set/finite poset/finite semigroup of colours. This is not the approach we could have taken: in the case of a plain set of colours, the colouring itself contains the information on how many colours it is using. In the cases where the colours carry additional structure, this could have been provided via the atomic diagram of the structure. This would lead to the requirement that only finitely many colours are used to be a mere promise.

We will first demonstrate the connection between the two versions on a simple example, namely $\crt^1_+$. Let us denote with $\crt^1_\mathbb{N}$ the principle that takes as input a colouring $\alpha : \mathbb{N} \to \mathbb{N}$ such that the range of $\alpha$ is finite, and outputs some $n \in \mathbb{N}$ such that $\alpha^{-1}(n)$ is infinite.

\begin{proposition}
 $\crt^1_+ \star \cn \equivW \crt^1_\mathbb{N}$
 \begin{proof}
Instead of $\crt^1_+ \star \cn \leqW \crt^1_\mathbb{N}$ we show that $\crt^1_+ \star \operatorname{Bound} \leqW \crt^1_\mathbb{N}$, where $\operatorname{Bound}$ receives as input an enumeration of a finite initial segment of $\mathbb{N}$, and outputs an upper bound for it. Here is how we produce the input to $\crt^1_\mathbb{N}$ given an input $A$ to $\operatorname{Bound}$ and a sequence $(k_i,\alpha_i)_{i \in \mathbb{N}}$ of partial inputs to $\crt^1_+$: We search for some $k_{i_0}$ to be defined, and then start copying $\alpha_{i_0}$ until $i_0$ gets enumerated into $A$ (if this never happens, $\alpha_{i_0}$ is total and becomes the input to $\crt^1_\mathbb{N}$). Then we search for some $i_1 > i_0$ such that $k_{i_1}$ is defined, and then continue to produce the colouring $\alpha_{i_1} + k_0$; either forever or until $i_1$ gets enumerated into $A$. We repeat this process until some $i_\ell$ is reached which is never enumerated into $A$ (this has to happen).

Given a colour $c$ that appears infinitely often in the resulting colouring, we can retrace our steps and identify what $i_\ell$ was. We can then un-shift $c$ to obtain a colour appearing infinitely often in $\alpha_{i_\ell}$, and thereby answer $\crt^1_+ \star \cn$.

For the converse direction, we observe that $\cn$ can compute from a colouring $\alpha : \mathbb{N} \to \mathbb{N}$ with finite range some $k \in \mathbb{N}$ such that $\alpha$ is a $k$-colouring.
 \end{proof}
 \end{proposition}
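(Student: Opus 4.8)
The plan is to establish the two reductions $\crt^1_\mathbb{N}\leqW\crt^1_+\star\cn$ and $\crt^1_+\star\cn\leqW\crt^1_\mathbb{N}$ separately. The first is essentially immediate: given a colouring $\alpha:\mathbb{N}\to\mathbb{N}$ with finite range, I would first use $\cn$ to obtain a bound on $\ran(\alpha)$, by feeding it the enumeration of the finite initial segment $\{0,1,\dots,\max\ran(\alpha)\}$ (enumerate $j$ as soon as some $\alpha(s)\ge j$ has been seen). This is a valid $\cn$-instance since the set is a strict subset of $\mathbb{N}$, and any returned $k$ satisfies $k>\max\ran(\alpha)$, so $\alpha$ is a $k$-colouring. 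The second stage of the compositional product then receives $k$ together with the original input, builds the $\crt^1_+$-instance $(k,\alpha)$, and returns a colour occurring infinitely often in $\alpha$, which is exactly a solution of $\crt^1_\mathbb{N}(\alpha)$.

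For $\crt^1_+\star\cn\leqW\crt^1_\mathbb{N}$ I would use the standard presentation of $\star$ in which an instance consists of an enumeration $e$ of some $A\subsetneq\mathbb{N}$ together with a computable assignment of a $\crt^1_+$-instance $(k_n,\alpha_n)$ to each $n\notin A$, a solution being a pair $(n,c)$ with $n\notin A$ and $c$ infinitely recurring in $\alpha_n$. From this I build, computably, a single finite-range colouring $\beta:\mathbb{N}\to\mathbb{N}$. Running in stages, maintain a current candidate $n^\ast_s=\min(\mathbb{N}\setminus A_s)$, where $A_s$ is what has been enumerated so far; this is non-decreasing, converges to $m:=\min(\mathbb{N}\setminus A)$ (a valid $\cn$-answer), and changes only finitely often. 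While the candidate is stable at $n$, copy the successive values of $\alpha_n$ into $\beta$, each shifted by a fixed offset attached to $n$; at stages where the next value of $\alpha_n$ has not yet converged, append nothing to $\beta$ (appending a filler colour would be fatal, as it might recur infinitely often). Whenever the candidate increases, restart the copying of the new $\alpha_{n^\ast}$ from scratch, choosing its offset larger than every colour used so far. Since the candidate changes only finitely often, only finitely much of each earlier $\alpha_{n'}$ gets copied whereas all of $\alpha_m$ (which is total) does, $\beta$ has finite range and its infinitely recurring colours are exactly $\{\mathrm{offset}(m)+c : c\text{ recurs infinitely often in }\alpha_m\}$. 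The backward functional, given an answer $d$ of $\crt^1_\mathbb{N}(\beta)$, replays the construction (computable from the given data) and waits for $d$ to appear in the block of $\beta$ currently being produced; since the offset blocks of distinct candidates are pairwise disjoint and increasing and $d\ge\mathrm{offset}(m)$ exceeds every colour of every earlier block, this happens precisely once the candidate has reached $m$, whereupon it outputs $(m,d-\mathrm{offset}(m))$ and halts.

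The delicate part is the bookkeeping in this second reduction, and especially the reason the target must be $\crt^1_\mathbb{N}$ rather than $\crt^1_+$: the colour-count of $\beta$ is the number of colours of $\alpha_m$ plus the finitely many colours actually copied from the refuted candidates, a finite quantity that cannot be bounded at the time $\beta$ is produced, because an adversary can delay the convergence of the data attached to the refuted candidates arbitrarily. Declaring a colour bound in advance is thus impossible, and this is exactly the slack provided by the ``unknown but finite'' variant $\crt^1_\mathbb{N}$. One could instead route the argument through the auxiliary problem $\operatorname{Bound}$ (enumeration of a finite initial segment of $\mathbb{N}\mapsto$ an upper bound) and reduce $\crt^1_+\star\operatorname{Bound}$ to $\crt^1_\mathbb{N}$ by the same construction with candidates drawn from a known finite range; the direct treatment of $\cn$ sketched above avoids having to verify separately that this detour is harmless.
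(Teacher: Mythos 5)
Your proof is correct and follows essentially the same route as the paper's: the easy direction is identical, and for $\crt^1_+\star\cn\leqW\crt^1_\mathbb{N}$ both arguments concatenate shifted copies of the candidate instances into one colouring, restarting with a fresh offset whenever the current $\cn$-candidate is refuted, and recover the answer by locating the (necessarily final) block containing the infinitely recurring colour. The only cosmetic difference is that you work with the $\min\cn$ tightening of $\cn$ directly, where the paper routes through the equivalent problem $\operatorname{Bound}$.
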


The very same relationship holds for all our principles, i.e.~the Weihrauch degree of the version without finite information on the colours is just the composition of the usual version with $\cn$. The core idea, as in the proposition above, is that we can always start over by moving to a fresh finite set of colours. For the interval versions we may have to do a little bit more work to encode the $\cn$-output by ensuring that all ``large'' intervals can never be a valid answer. As a second example, let us consider the variant $\ishuffle_\mathbb{N}$ of $\ishuffle$ where the number of colours is not part of the input.

\begin{proposition}
$\ishuffle \star \cn \equivW \ishuffle_\mathbb{N}$
\begin{proof}
Again, we show $\ishuffle \star \operatorname{Bound} \leqW \ishuffle_\mathbb{N}$. We have an enumeration of a finite initial segment $A$ of $\mathbb{N}$ as input to $\operatorname{Bound}$, together with a sequence $(k_i,\alpha_i)_{i \in \mathbb{N}}$ of partial inputs to $\ishuffle$. We produce a colouring $\alpha : \mathbb{Q} \to \mathbb{N}$ which will use only finitely many colours. We first search for the first $i_0$ for which we learn that $k_{i_0}$ is defined. We start copying $\alpha_i$ to $\alpha$. If $i_0$ is enumerated into $A$ at stage $t$, we assign every rational $\frac{k}{2^{-t}}$ the colour $0$ in $\alpha$. We search for some $i_1 > i_0$ is defined, and then start assigning $\alpha(q) = \alpha_{i_1}(q) + k_{i_0}$. This continues until potentially $i_1$ gets enumerated into $A$, and so forth.

The process has to eventually stop, as $A$ is finite. Moreover, the final $i_{\ell}$ is such that $\alpha_{i_\ell}$ is total. We thus construct a valid input for $\ishuffle_\mathbb{N}$. By considering the length of a resulting interval $I$, we obtain an upper bound for the time $t_{\textrm{final}}$ when we last saw a new $i_j$ number being enumerated into $A$ -- as any such interval cannot contain a rational of the form $\frac{k}{2^{-t_{\textrm{final}}}}$. Knowing an upper bound for $t_{\textrm{final}}$ enables us to find an upper bound $k_{i_\ell}$ for $A$. We also know that restricted to the interval $I$, $\alpha_{i_\ell}$ and $\alpha$ differ by a constant. Thus, $I$ is also a valid output to $\ishuffle(k_{i_\ell}, \alpha_{i_\ell})$.

For the other direction, we again use the fact that $\cn$ can compute from a colouring $\alpha : \mathbb{N} \to \mathbb{Q}$ with finite range some $k \in \mathbb{N}$ such that $\alpha$ is a $k$-colouring.
\end{proof}
\end{proposition}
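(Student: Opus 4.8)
The plan is to establish the two Weihrauch reductions separately, following the template of the proposition on $\crt^1_+ \star \cn$. For the easy inclusion $\ishuffle_\mathbb{N} \leqW \ishuffle \star \cn$, I would use $\cn$ to bound the number of colours: given an instance $\alpha : \mathbb{Q} \to \mathbb{N}$ of $\ishuffle_\mathbb{N}$, whose range is promised to be finite, feed to $\cn$ the enumeration of $\{ j \in \mathbb{N} \mid \exists q \in \mathbb{Q}\ j \leq \alpha(q) \}$, which is a strict (being finite) initial segment of $\mathbb{N}$; any $n$ returned by $\cn$ is then a strict upper bound for $\ran(\alpha)$, so $(n,\alpha)$ is a genuine instance of $\ishuffle$, and applying $\ishuffle$ to it returns an interval which is an $\alpha$-shuffle. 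This is a $\cn$-query, then a recursive map, then an $\ishuffle$-query, hence a reduction to $\ishuffle \star \cn$.

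For the substantial inclusion $\ishuffle \star \cn \leqW \ishuffle_\mathbb{N}$, I would replace $\cn$ by the interchangeable problem $\operatorname{Bound}$ (take an enumeration of a finite initial segment $A$ of $\mathbb{N}$, return an upper bound for it), as in the $\crt^1_+$ case, and prove $\ishuffle \star \operatorname{Bound} \leqW \ishuffle_\mathbb{N}$. So suppose we are handed an enumeration of $A$ together with a sequence $(k_i,\alpha_i)_{i \in \mathbb{N}}$ of partial $\ishuffle$-instances, with the promise that $(k_i,\alpha_i)$ is genuine whenever $i$ is an upper bound for $A$. I would build a single finite-range colouring $\alpha : \mathbb{Q} \to \mathbb{N}$ by working through candidate bounds $i_0 < i_1 < \dots$ as their data appear: while $i_j$ has not entered $A$, keep the colour of every not-yet-locked rational equal to $\alpha_{i_j}$ shifted up by $\sum_{j' < j} k_{i_{j'}}$, so the phases use disjoint colour blocks; and the instant $i_j$ enters $A$, say at stage $t$, permanently set $\alpha$ to the reserved colour $0$ on all rationals of the form $k/2^{t}$, and pass to $i_{j+1}$. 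Since $A$ is finite there are finitely many switches, so the process settles on some $i_\ell$ which is a genuine upper bound; then $(k_{i_\ell},\alpha_{i_\ell})$ is a legitimate $\ishuffle$-instance and $\alpha$ coincides with $\alpha_{i_\ell}$ shifted by the constant $\sum_{j' < \ell} k_{i_{j'}}$ off the locked set (up to finitely many rationals already emitted in earlier phases, which is harmless), so $\alpha$ has finite range and is a legal input for $\ishuffle_\mathbb{N}$, and it has a shuffle because $\alpha_{i_\ell}$ does and subintervals of shuffles are shuffles.

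Given an interval $I$ returned by $\ishuffle_\mathbb{N}$ on $\alpha$, the key observation is that colour $0$ is used only on $\bigcup_j \{k/2^{t_j} \mid k \in \mathbb{Z}\}$ over the finitely many switch stages $t_j$ (plus finitely many extra points), which is nowhere dense yet meets every interval longer than $2^{-t_{\mathrm{final}}}$, where $t_{\mathrm{final}} = \max_j t_j$. Since every colour present in a shuffle is dense in it, $I$ contains no colour-$0$ rational, so $|I| \leq 2^{-t_{\mathrm{final}}}$; reading an upper bound for $t_{\mathrm{final}}$ off $|I|$, we can simulate the construction far enough to recover $\ell$, the bound $i_\ell$ — which solves $\operatorname{Bound}$ — and the shift $\sum_{j' < \ell} k_{i_{j'}}$. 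Moreover, since the phase blocks are pairwise disjoint and colour $0$ is excluded from $I$, every rational of $I$ carries a colour of phase $\ell$, so $\alpha$ and $\alpha_{i_\ell}$ agree on $I$ up to that shift, making $I$ a valid output of $\ishuffle(k_{i_\ell},\alpha_{i_\ell})$ as well; assembling the two outputs solves the original $\ishuffle \star \operatorname{Bound}$-instance. The same recipe — restart with a fresh block of colours at each mind-change while marking a finer dyadic scale — applies verbatim to the other interval-valued principles.

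The part I expect to be delicate is exactly this decoding step: arranging the bookkeeping so that a single interval output simultaneously (i) is forced short enough to pin down the last switch stage, (ii) carries only the colours of the final phase so that it transfers to an $\ishuffle$-solution for the correct instance, and (iii) lets us recompute the $\operatorname{Bound}$-answer. The reserved-colour-plus-dyadic-scale trick is what ties these three together, and the load-bearing point is the elementary verification that colour $0$, as constructed, is nowhere dense, so that every shuffle must omit it.
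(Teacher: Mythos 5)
Your proposal is correct and follows essentially the same route as the paper's proof: reduce $\ishuffle \star \operatorname{Bound}$ to $\ishuffle_{\mathbb{N}}$ by running through candidate bounds in phases with disjoint shifted colour blocks, marking each mind-change by painting a reserved colour $0$ onto a dyadic grid $\{k/2^{t}\}$, and decoding the last switch stage from the length of the returned interval because a shuffle cannot contain the nowhere-dense colour $0$. Your cumulative shifts and the explicit density argument for excluding colour $0$ are just slightly more careful renderings of exactly what the paper does.
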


To see that this observation already fully characterizes the Weihrauch reductions and non-reductions between the usual and the relaxed principles, the notion of a (closed) fractal from \cite{paulybrattka,paulyleroux} is useful.

\begin{definition}
A Weihrauch degree $f$ is called a \emph{fractal}, if there is some $F : \subseteq \baire \rra \baire$ with $f \equivW F$ such that for any $w \in \mathbb{N}^*$ either $w\baire \cap \dom(F) = \emptyset$ or $F|_{w\baire} \equivW f$. If we can chose $F$ to be total, we call the Weihrauch degree a \emph{closed fractal}.
\end{definition}

If $f$ is a fractal and $f \leqW \bigsqcup_{i \in \mathbb{N}} g_i$, then there has to be some $n \in \mathbb{N}$ with $f \leqW g_n$. If $f$ is a closed fractal and $f \leqW g \star \cn$, then already $f \leqW g$. Of our principles, the versions with a fixed number of colours are closed fractals, the versions with a given-but-not-fixed number of colours are not fractals at all, and the versions without explicit colour information are fractals, but not closed fractals. From this, it follows that the versions with no explicit colour information are never Weihrauch equivalent to our studied principles, and that versions without explicit colour information are equivalent to one-another if and only if their counterparts with explicit colour information are equivalent.


\section{Conclusion and future work}

\subparagraph*{Summary} We have analysed the strength of an additive Ramseyan theorem over the
rationals from the point of view of reverse mathematics and found it to be equivalent
to \sztindtxt, and then refined that analysis to a Weihrauch equivalence with
$\tcn^* \times (\lpo')^*$. We have also shown that the problem decomposes nicely:
we get the distinct complexities $(\lpo')^*$ or $\tcn^*$ if we only require either
the set of colours or the location of the homogeneous set respectively.
The same holds true for another equally and arguably more fundamental shuffle principle,
as well as the additive Ramsey theorem over $\bbN$ that was already studied from
the point of view of reverse mathematics in~\cite{KMPS19}.

\subparagraph*{Perspectives} It would be interesting to study further mathematical
theorems that are known to be equivalent to $\sztind$ in reverse mathematics:
this can be considered to contribute to the larger endeavour of studying principles already analyzed
in reverse mathematics in the framework of the Weihrauch degrees. In the particular case
of $\sztind$, it can be interesting to see which degrees are necessary for such an
analysis. We refer to \cite{ramsey-weihrauch-brattka-rakotoniaina} for more on this topic,
and for a more comprehensive study of Ramsey's theorem in the Weihrauch degrees.

Two contemporary investigations in the Weihrauch degrees concerning similar principles to $\etap$ are found in \cite{gill-phd} and \cite{damirreedmanlio}. The former is concerned with the problem of finding a monochromatic copy of $\mathbb{Q}$ as a linear order, given a colouring of $\mathbb{Q}$. The latter studies the problem of finding a monochromatic copy of $2^{<\omega}$ (seen as the structure with the ``prefix of'' predicate) given a colouring of $2^{<\omega}$. The corresponding Weihrauch degrees do differ from the ones explored here, though.

\subsubsection*{Acknowledgements}
The second author warmly thanks Leszek Ko{\l}odziejczyk for the
proof of \Cref{lem:sztind-shuffle} as well as Henryk Michalewski
and Micha{\l} Skrzypczak for numerous discussions on a related project.

\swanseastatement

\bibliography{biblio}

\end{document}